\documentclass[11pt,letterpaper]{article}

\usepackage{amsmath,amstext,amssymb,amsfonts,bbold}
\usepackage{amsmath,amsthm}
\usepackage[shortlabels]{enumitem}
\usepackage{graphicx,psfrag}
\usepackage{comment}
\usepackage{xspace,calc}

\usepackage{fullpage}
\usepackage{hyperref}

\newcommand{\np}{{\em NP}\xspace}
\newcommand{\nphard}{\np-hard\xspace}

\newtheorem{theorem}{Theorem}[section]

\newtheorem{lemma}[theorem]{Lemma}
\newtheorem{claim}[theorem]{Claim}

\newtheorem{conjecture}{Conjecture}

{\theoremstyle{remark} }
{\theoremstyle{definition} \newtheorem{definition}[theorem]{Definition}}

\newenvironment{proofof}[1]{\begin{proof}[Proof of #1]}{\end{proof}}

\newcommand{\mone}{{\mathbb{1}}}

\DeclareMathOperator{\poly}{poly}
\DeclareMathOperator{\polylog}{polylog}
\DeclareMathOperator{\val}{val}
\DeclareMathOperator{\conv}{conv}
\DeclareMathOperator{\opt}{opt}
\DeclareMathOperator{\argmax}{argmax}

\newcommand{\simp}{\ensuremath{\Dt}}

\newcommand{\e}{\ensuremath{\epsilon}}
\newcommand{\eps}{\ensuremath{\epsilon}}
\newcommand{\ve}{\ensuremath{\varepsilon}}
\newcommand{\gm}{\ensuremath{\gamma}}

\newcommand{\ld}{\ensuremath{\lambda}}

\newcommand{\al}{\ensuremath{\alpha}}

\newcommand{\tht}{\ensuremath{\theta}}
\newcommand{\Tht}{\ensuremath{\Theta}}
\newcommand{\dt}{\ensuremath{\delta}}
\newcommand{\Dt}{\ensuremath{\Delta}}
\newcommand{\sg}{\ensuremath{\sigma}}
\newcommand{\Sg}{\ensuremath{\Sigma}}
\newcommand{\w}{\ensuremath{\omega}}
\newcommand{\Om}{\ensuremath{\Omega}}

\newcommand{\sm}{\ensuremath{\setminus}}
\newcommand{\es}{\ensuremath{\emptyset}}
\newcommand{\sse}{\subseteq}
\newcommand{\frall}{\ensuremath{\text{ for all }}}

\newcommand{\R}{\ensuremath{\mathbb R}}
\newcommand{\T}{\ensuremath{\mathcal{T}}}

\newcommand{\A}{\ensuremath{\mathcal{A}}}
\newcommand{\B}{\ensuremath{\mathcal{B}}}
\newcommand{\C}{\ensuremath{\mathcal{C}}}

\newcommand{\F}{\ensuremath{\mathcal F}}
\newcommand{\G}{\ensuremath{\mathcal{G}}}

\newcommand{\I}{\ensuremath{\mathcal I}}

\newcommand{\Rc}{\ensuremath{\mathcal R}}

\newcommand{\hmu}{\ensuremath{\hat\mu}}
\newcommand{\tx}{\ensuremath{\tilde x}}
\newcommand{\tmu}{\ensuremath{\tilde\mu}}
\newcommand{\bA}{\ensuremath{\overline{A}}}
\newcommand{\y}{\ensuremath{y}}
\newcommand{\x}{\ensuremath{x}}
\newcommand{\OPT}{\ensuremath{\mathit{OPT}}}

\newcommand{\Pd}{\ensuremath{\text{\eqref{primd}}}}
\newcommand{\Dd}{\ensuremath{\text{\eqref{duald}}}}
\newcommand{\pclique}{\ensuremath{\mathbf{PClique}}}
\newcommand{\NE}{\ensuremath{\mathit{NE}}}
\newcommand{\poa}{PoA\xspace}
\newcommand{\brf}{\ensuremath{\tilde f}}
\newcommand{\bidens}{\ensuremath{\text{bi-density}}}
\newcommand{\Row}{\ensuremath{\text{\bf Row}}\xspace}
\newcommand{\Col}{\ensuremath{\text{\bf Col}}\xspace}

\newcommand{\set}[1]{\left\{ #1 \right\}}

\def\pr{\qopname\relax n{Pr}}

\def\min{\qopname\relax n{min}}
\def\max{\qopname\relax n{max}}

\def\Pr{\qopname\relax n{\mathbf{Pr}}}
\def\Ex{\qopname\relax n{\mathbf{E}}}

\def\T{\mathcal{T}}

\newcommand{\eat}[1]{}

\newenvironment{lp*}{\begin{equation*}  \begin{array}{lll}}{\end{array}\end{equation*}}

\title{Hardness Results for Signaling in \\ Bayesian Zero-Sum and Network Routing Games}

\author{
  Umang Bhaskar\thanks{Tata Institute of Fundamental Research, Mumbai, India.
  Part of this work was done while the author
  was a postdoctoral scholar at the University of Waterloo and was supported in part
  by Chaitanya Swamy's research grants. Email: {\tt umang@tifr.res.in}.}
\and
  Yu Cheng\thanks{University of Southern California, CA 90089, USA.
  Supported in part by Shang-Hua Teng's Simons Investigator Award from the Simons Foundation.
  Email: {\tt yu.cheng.1@usc.edu}.}
\and
  Young Kun Ko\thanks{Princeton University, NJ 08544, USA. Email: {\tt yko@cs.princeton.edu}.}
\and
  Chaitanya Swamy\thanks{Department of Combinatorics and Optimization, University of Waterloo,
  Waterloo, ON N2L 3G1, Canada. Supported in part by NSERC grant
  32760-06, an NSERC Discovery Accelerator Supplement Award, and an Ontario Early
  Researcher Award.
  Email: {\tt cswamy@uwaterloo.ca}.}
}

\date{\today}

\begin{document}

\maketitle

\begin{abstract}
We study the optimization problem faced by a perfectly informed principal in a Bayesian
game, who reveals information to the players about the state of nature to obtain a desirable
equilibrium. This {\em signaling problem} is the natural design question
motivated by uncertainty in games and has attracted much recent attention.
We present new hardness results for signaling problems in
(a) Bayesian two-player zero-sum games, and (b) Bayesian network routing games.

For {\em Bayesian zero-sum games}, when the principal seeks to maximize the equilibrium
utility of a player, we show that it is \nphard to obtain an additive FPTAS.
Our hardness proof exploits {\em duality} and the equivalence of separation and
optimization in a novel way.
Further, we rule out an additive PTAS assuming {\em planted clique} hardness,
  which states that no polynomial time algorithm can recover a planted clique
  from an Erd\H{o}s-R\'enyi random graph.
Complementing these, we obtain a PTAS for a structured class of zero-sum games
  (where obtaining an FPTAS is still \nphard) when the payoff matrices obey a Lipschitz condition.
Previous results ruled out an FPTAS assuming planted-clique hardness,
  and a PTAS only for {\em implicit} games with {\em quasi-polynomial-size strategy sets}.

For {\em Bayesian network routing games}, wherein the principal seeks to minimize the
average latency of the Nash flow, we show that it is \nphard to obtain a
(multiplicative) $\bigl(\frac{4}{3}-\e\bigr)$-approximation, even for linear latency functions.
This is the {\em optimal} inapproximability result for linear latencies, since
we show that full revelation achieves a $\frac{4}{3}$-approximation for linear latencies.
\end{abstract}

\section{Introduction} \label{intro}

\def\xx{\mathbf{x}}
\def\xxappr{\tilde{\mathbf{\xx}}}

\def\calG{\mathcal{G}}
\def\Nash{\mathrm{NashEquilibrium}}

In Bayesian games,
players' payoffs depend on the state of nature, which
may be hidden from the players. Instead, players receive a {\em signal} regarding the state
of nature which they use to form beliefs about their payoffs, and choose their
strategies. Thus the strategic decisions and payoffs of the players depend crucially on
the information available from the signal they receive. Since applications are often
rife with uncertainty, understanding the effect of information available to players is
a fundamental problem in game theory; see,
e.g.,~\cite{blackwell51,akerloflemons,hirshleifer71,MW82,lehrermediation,bergemanndiscrimination}.
Whereas for a single player, it is known that more
information leads to better payoffs~\cite{blackwell51},
with multiple players, outcomes are more complex and often counterintuitive with
``more'' (information) not necessarily translating to ``better'' (payoffs).
The latter was first observed by Hirshleifer \cite{hirshleifer71};
recently, Dughmi \cite{Dughmi14} gave an example where neither full-revelation nor
no-revelation is optimal.

While classical work has focused on the role of information in influencing
strategies,
the computational problem of \emph{designing} optimal
information structures for Bayesian games, commonly called the {\em signaling problem},
has received much recent attention
~\cite{miltersensignals,DIR14,emeksignaling,guo}.
Here, a perfectly-informed principal seeks to
reveal selective information to the players to optimize some function of the resulting
equilibrium, such as the revenue, or payoff of a particular player.
Two-player zero-sum games and network routing games are natural starting points
for investigating the signaling problem due to their fundamental importance
and appealing structure. They admit a canonical, tractable choice of equilibrium;
this also decouples the concerns of optimal-signaling computation and equilibrium computation.

\paragraph{Our results} We study signaling in two widely studied classes of games:
two-player zero-sum games, and network routing games.
As in much of previous work, in our setting players share the
same prior belief on the state of nature, and signaling schemes are symmetric:
the principal reveals the same information to all players. Further, as previously, our results are for \emph{additive} approximations in Bayesian zero-sum games, and for \emph{multiplicative} approximations in Bayesian network routing games.
Our main contribution is to derive hardness results for these classes of games
that close the gap between what is achievable in polytime (or
quasi-polytime) and what is intractable.

In Section~\ref{zerosum}, we consider {\em Bayesian (two-player) zero-sum games},
in which the principal seeks to maximize the value of the game --- the
equilibrium payoff of the row player.\footnote{In zero-sum games, this also captures the
problem of maximizing a weighted combination of players' equilibrium payoffs.}
First, we settle the complexity of the signaling problem with respect to \nphard{}ness
by showing that it is \nphard to obtain an additive FPTAS (Theorem~\ref{nofptasthm}).
Previous work by Dughmi \cite{Dughmi14} ruled out an FPTAS
assuming the planted clique hardness (see Conjecture \ref{pcliquehard}).
Thus, we replace an {\em average-case} hardness
assumption with the much more conventional worst-case assumption of \nphard{}ness.

Next, we consider the hardness of obtaining a PTAS for the signaling problem.
Since there is a quasi-polytime approximation scheme for signaling given by Cheng et al.~\cite{CCDEHT15},
it is unlikely that a PTAS for signaling is \nphard{}.
We show that assuming planted-clique hardness,
there does not exist a PTAS for the signaling problem (Theorem~\ref{noptasthm}).
Previously, the non-existence of a PTAS was shown (assuming
planted-clique hardness) only for {\em implicit} zero-sum games with
{\em quasi-polynomial-size strategy sets}~\cite{Dughmi14}.
Complementing these hardness results, we devise a PTAS
for a structured class of Bayesian zero-sum games (Theorem~\ref{ptas}),
when the payoff matrices obey a Lipschitz condition.

In Section~\ref{sec:congestion}, we consider the signaling problem in (nonatomic, selfish)
{\em Bayesian network routing games}, wherein the principal seeks to reveal partial information
to {\em minimize} the average latency of the equilibrium flow.
We show that it is \nphard to obtain any multiplicative approximation better than
$\frac{4}{3}$, even with linear latency functions (Theorem~\ref{thm:routinghardness}).
This yields an {\em optimal} inapproximability result for linear latencies,
since we show that full revelation obtains the {\em price of anarchy} of the routing game
as its approximation ratio (Theorem~\ref{fullrev}), which is $\frac{4}{3}$ for linear
latency functions~\cite{RT02}.
These are the {\em first} results for the complexity of
signaling in Bayesian network routing games.

We also obtain hardness results
for two related signaling problems in Bayesian zero-sum games (Section~\ref{extn}).
Firstly, we rule out a PTAS for computing the best prior (the {\em maximum prior problem}),
under the {\em exponential time hypothesis} (ETH).
Previously, \cite{CCDEHT15} studied a mixture-selection problem and showed that in the absence of a property called \emph{noise-stability},
  obtaining a PTAS was hard, assuming planted-clique hardness.
Our result shows that in their setting a QPTAS is in fact the best possible approximation obtainable, assuming the ETH.
Finally, if the principal's value depends on the players' strategies, and not just their payoffs,
  we show that obtaining a PTAS is \nphard (Theorem~\ref{maxposthm}).

\paragraph{Our techniques}
Our results for Bayesian zero-sum games are obtained via two main ideas.
Our \nphard{}ness result, the PTAS for a structured class of games, and the PTAS-hardness
for the maximum prior problem, all follow by
considering the signaling problem from a {\em dual} perspective.
The signaling problem can be written as a mathematical program \eqref{primal} with
linear objective and constraints, but an infinite number of variables.
Ignoring
this issue,
we can consider the dual problem \eqref{dual}.
Motivated by the separation problem for the dual, we consider the
{\em dual signaling  problem} (Section~\ref{dsp}).
Our key insight is that {\em the dual signaling problem is a rather
useful tool for both deriving hardness results and devising approximation algorithms}.
This usefulness stems from the equivalence of separation and
optimization~\cite{ellipsoidbook}, which shows that an algorithm for the separation
problem can be used to solve the optimization problem and vice versa. We exploit and build
upon this equivalence.
We prove that this equivalence holds despite the
infinite-dimensionality of
\eqref{primal}, and furthermore,
is approximation preserving: an FPTAS for signaling yields
an FPTAS for the dual signaling problem (Theorem~\ref{opt2sep}), and a
PTAS for the dual signaling problem
yields a PTAS for signaling (Theorem~\ref{sep2opt}).

This equivalence paves the way for our results.
Whereas, typically, an (approximate) separation
oracle is used to (approximately) solve the optimization problem, we exploit this
equivalence in an unorthodox fashion by also leveraging the {\em hardness} of the dual
signaling problem to prove hardness results for the signaling (i.e., primal optimization)
problem.
We show that it is \nphard to obtain an FPTAS for the dual signaling problem,
and thus obtain that it is \nphard to obtain an FPTAS for the signaling
problem.
Notably, in contrast to the (weaker) planted-clique hardness result
in~\cite{Dughmi14} for the signaling problem, we obtain our
\nphard{}ness result with minimal effort, a fact that underscores the benefits
of moving to the dual signaling problem.

On the positive side, we obtain a PTAS for the dual signaling problem for our structured
class of Bayesian zero-sum games,
which thus yields a PTAS for the signaling problem for this class.
Interestingly, when cast in the mixture-selection framework of Cheng et al.~\cite{CCDEHT15}, the
signaling problem for our structured class {\em does not} satisfy the noise-stability
property stated in~\cite{CCDEHT15}.
In the absence of noise-stability, \cite{CCDEHT15} showed planted-clique hardness for obtaining a PTAS.
Our result bypasses this hardness result, and obtains a PTAS for a problem for which noise-stability does not hold.
Finally, we show that a PTAS for the maximum-prior problem yields a PTAS for the dual
signaling problem, and we rule out the latter via a simple, clean reduction from the
{\em best-Nash} problem and the recent result of Braverman et al.~\cite{BKW15}.
This result also strengthens the hardness result from~\cite{CCDEHT15} mentioned above, by showing that in the absence of noise-stability, a QPTAS is the best-possible approximation for the mixture selection problem, assuming the ETH.

Our second main idea, used to rule out a PTAS assuming planted-clique hardness, is a
``direct'' reduction that combines and strengthens techniques from~\cite{Dughmi14,FNS07}.
We utilize the {\em planted clique cover} problem defined
in~\cite{Dughmi14} ---
multiple cliques are now planted and one seeks to recover a constant fraction of them ---
and shown to be at least as hard as the planted clique problem.
The idea is to set up a Bayesian zero-sum game
where both the principal and the row player must randomize over $\Omega(\log n)$-size
high-density node sets for the signaling scheme to achieve large value;
recovering these large-density sets from a near-optimal signaling scheme allows
one to solve the planted-clique cover (and hence, the planted clique) problem.
The FPTAS-hardness reduction in~\cite{Dughmi14} creates a network security game
(see Section~\ref{prelim})
with payoffs of absolute value $\Omega(\log^2 n)$ (or alternatively, a
quasi-polynomial-size strategy set for the column player) to enforce the above property.
Payoffs of magnitude $\Omega(\log n)$ seem necessary with this kind of approach, which
therefore only yields an  $O({1}/{\log n})$ gap that is insufficient to rule out a PTAS.
We abandon the use of network security games and instead leverage a device
from~\cite{FNS07} to ensure the above ``large-spreading'' property.
This idea is also used
to show planted-clique hardness for the {\em best-Nash} problem~\cite{hazankrauthgamer};
however
we are constrained to work with zero-sum games, and therefore need to apply this
idea carefully.
A subtle, but crucial, technical issue is that we need to significantly tighten the
planted-clique recovery result in~\cite{Dughmi14}.
To recover a specific planted clique $S$
of size $k=\omega(\log^2 n)$ (in the presence of other such planted cliques),
\cite{Dughmi14} requires a set $T$ with $|T|=\Theta(k),\ |S\cap T|=\Omega(k)$,
whereas we only require that $|T|, |S\cap T|=\Omega(\log n)$, and this is crucial since we
can only ensure that spreading takes place over $O(\log n)$-size sets.

Our hardness result for Bayesian routing games
is a direct reduction
from the problem of computing edge tolls that minimize the total (latency + toll)-cost of
the resulting equilibrium flow, which
is inapproximable within a factor of $\frac{4}{3}$.

\paragraph{Related work}
Whereas
understanding the role of information in influencing strategies is
a classical
problem in game theory,
the computational problem of designing optimal information structures
has been studied more recently.
Much of this work has focused on signaling in auctions, where the goal is to
maximize revenue~\cite{emeksignaling,miltersensignals,guo} or social
welfare~\cite{DIR14}.
Dughmi \cite{Dughmi14} initiated the computational study of signaling in Bayesian zero-sum
games, and obtained various hardness results under the planted-clique hardness assumption.
This work left open the question of whether hardness results can be obtained under
standard worst-case assumptions, such as {\em P}$\neq$\np, a question that we answer in
the affirmative.
On the positive side, Cheng et al.~\cite{CCDEHT15}
showed that for
Bayesian normal form games with a constant number of players and for general objectives of
the principal, an $\eps$-approximate signaling scheme that maximizes the objective
at an $\eps$-approximate Nash equilibrium can be computed in quasi-polynomial time.
This work left open the question of whether a PTAS is possible for signaling in Bayesian
zero-sum games.
We preclude this
under the planted-clique hardness
assumption, and complementing this, design a PTAS for a structured class of games.
As noted earlier, the latter result does not follow from~\cite{CCDEHT15}
since the resulting signaling problem fails to have small noise stability.

The planted-clique problem
was introduced by
Jerrum~\cite{jerrum92} and Ku\u{c}era~\cite{kucera95}, and
despite extensive efforts (see, e.g.,~\cite{amesvavasis,feigeron,dekel} and the references therein),
no polytime algorithm is known for recovering cliques of size
$k=o(\sqrt{n})$. There is a quasi-polytime algorithm known
when $k\geq 2\log_2 n$; on the other hand,
various algorithmic strategies have been ruled
out for this problem~\cite{jerrum92,feigeprobable,feldmanclique}.
The planted-clique problem has thus been used in various reductions (see,
e.g.,~\cite{hazankrauthgamer,juels}), and is an example where an
{\em average-case hardness assumption} has been used to derive hardness results.

Recently, Rubinstein \cite{rubinstein15} has independently also obtained hardness results
for signaling in zero-sum games.
He shows that there is no additive PTAS assuming ETH,
  and obtaining a multiplicative PTAS is \nphard.
These results are orthogonal to ours, as there is no known
  reduction between ETH and planted-clique hardness.
Further, \nphard{}ness of a multiplicative PTAS does not rule out an additive FPTAS.

In Bayesian network routing games,
  \cite{VassermanFH15} study the ability of signaling to reduce the average latency.
They define the \emph{mediation ratio} as the average latency at equilibrium for the best (private) signaling scheme,
  to the average latency for the social optimum, and give tight bounds on the mediation ratio with graphs consisting of parallel links.
On these simple networks, navigation services (such as Waze or Google Maps) cannot do anything
  to improve the latency of the Nash flow.
Our work, in contrast, studies the \emph{computational complexity} of obtaining the best (public) signaling scheme in general graphs,
  and conclude that finding an $\bigl(\frac{4}{3}-\e\bigr)$-approximation is \nphard.

\section{Preliminaries and notation} \label{prelim}

We use $\R_+$ for the set of nonnegative reals.
For integer $n$, $[n] := \{1,2,\ldots,n\}$.
If $n\geq 1$,  we use $\simp_n$ to denote the $(n-1)$-dimensional simplex $\{x \in \mathbb{R}_+^n: \sum_i x_i = 1\}$.
Let $\mone_n\in\R^n$ be the vector with 1 in all its entries,
$I_{n \times n}$ be the $n \times n$ identity matrix, and $e_i$ be the vector containing
1 as its $i$-th entry and 0 elsewhere.

\paragraph{Bayesian zero-sum games and signaling schemes}
A {\em Bayesian zero-sum game} is specified by a tuple
$\bigl(\Tht,\{\A^\theta\}_{\theta \in \Theta},\lambda\bigr)$,
where $\Theta = \{1, \dots, M\}$ denotes the states of nature,
and $\lambda$ is a prior distribution on the states of nature (thus $\ld\in\simp_M$).
We assume the row and column player has $r, c$ pure strategies respectively.
For each state of nature $\theta \in \Tht$,
$\A^\theta\in[-1,1]^{r \times c}$ specifies the payoffs of the row player in a zero-sum game.
Let $\mu \in \simp_M$ be an arbitrary distribution over states of nature.
Then $\A^\mu := \sum_{\theta \in \Theta} \mu_\theta \A^\theta$ is the matrix of expected payoffs
for the row player under distribution $\mu$.

A {\em signaling scheme} is a policy by which a principal reveals
  (partial) information about the state of nature.
We focus on symmetric signaling schemes which reveals the same
  information to all the players.
A signaling scheme specifies a set of signals $\Sg$ and a map
  $\varphi:\Tht \mapsto \simp_{|\Sg|}$ from the states of nature $\Tht$
  to distributions over the signals in $\Sg$.
Thus, $\varphi(\tht)_\sg$ is the probability that the principal selects signal $\sg$ when the state of nature is $\tht$.
When the state $\tht$ is revealed, the principal computes a signal
  $\sigma \sim \varphi(\tht)$.
Both players receive $\sg$ and correspondingly update their belief on the
  state-distribution to $\mu^\sg$, where for each state $\tht$,
\[
\mu^\sg_\tht ~ = ~ \frac{\pr(\sg | \tht) \, \pr(\tht)}{\pr(\sg)} ~ = ~ \frac{\varphi(\tht)_\sg \lambda_\tht}{\sum_{\tht' \in \Tht} \varphi(\tht')_\sg} \, .
\]
\noindent The players then, based on their posterior belief, play the zero-sum game given by $\A^{\mu^\sg}$.

Each signal $\sg$ thus yields a posterior distribution $\mu^\sg \in \simp_M$, and these posterior distributions form a convex decomposition of the prior $\lambda = \sum_\sg \pr(\sg) \mu^\sg$.
As observed in~\cite{Dughmi14}, specifying a signaling scheme $(\Sigma, \varphi)$ is in fact equivalent to specifying a distribution $\alpha$ over posterior distributions $\mu \in \simp_M$ that yield a convex decomposition of the prior $\ld$. Thus, a signaling scheme can also be described as $\al := (\al_\mu)_{\mu \in \simp_M}$, where $\sum_{\mu \in \simp_M} \al_\mu \mu = \ld$.
The signals $\Sigma$ in such a signaling scheme are described implicitly, and correspond to the posteriors $\mu$ for which $\al_\mu > 0$.
This will be our perspective on signaling schemes throughout.
In Section~\ref{noptas}, we will explicitly need to describe the signals, and then use $\mu^\sg$ for the posterior corresponding to signal $\sg$ and $\al_\sg$ for $\pr(\sg).$

Let $\val:\simp_M\mapsto\R$ be the principal's objective function.
For the bulk of our results, we consider the objective function $\val(\mu)$ for $\mu \in \simp_M$ that evaluates to the
row-player's payoff at equilibrium in the zero-sum game specified by $\A^\mu$.
Note that $\val(\mu)$ is unique, $\val(\mu):=\max_{x \in\simp_r}\min_{j\in[c]}(x^T \A^\mu)_j$,
  although there could be multiple Nash equilibria.

The quality of a signaling scheme $\al$ for a Bayesian zero-sum game is then given by
$\sum_{\mu\in\simp_M}\al_\mu \val(\mu)$. The {\em signaling problem} in a Bayesian zero-sum game is to find a signaling scheme $\al$
that maximizes $\sum_{\mu \in \simp_M} \al_\mu \val(\mu)$. Let $\opt(\I)$ denote the value of the optimal signaling scheme for a Bayesian zero-sum game $\I$. We note that $\opt(\I)$ is a concave function of the prior $\ld$, since if $\ld^1$ and $\ld^2$ form a convex decomposition of $\ld$, so do the optimal posteriors for $\ld^1$ and $\ld^2$.
By Caratheodory's theorem, $M+1$ posteriors (equivalently, signals) suffice to specify any convex decomposition of the prior. Together, this implies that an optimal signaling scheme can be specified by at most $M+1$ posteriors.

We say that an algorithm for the signaling problem is
an (additive) {\em $\ve$-approximation algorithm} if for every instance
$\I$
the algorithm runs in polytime and returns a signaling scheme of value at least
$\opt(\I)-\ve$.
A {\em polytime approximation scheme} (PTAS) is an algorithm that runs in polytime and
returns a solution of value at least $\opt(\I)-\ve$ for every instance $\I$ and constant
$\ve>0$;
an FPTAS is a PTAS whose running time for an instance $\I$ and parameter $\ve$ is
$\poly\bigl(\text{size of $\I$},\frac{1}{\ve}\bigr)$.

\paragraph{Security games}
Some of our results utilize a class of zero-sum games that we call
{\em extended security games},
wherein the payoff matrix for state $\tht$ is given by
\begin{equation}
\A^\theta := \bA + b^\theta \mone_c^T + \mone_r (d^\theta)^{T},
\qquad \qquad \text{where} \qquad b^\tht\in\R^r, d^\tht\in\R^c. \label{extsec1}
\end{equation}
Let ${B}$ and $D$ be matrices having columns $\{b^1,\ldots,b^M\}$, and
$\{d^1,\ldots,d^M\}$ respectively.
We obtain the following expressions for $\A^\mu$ and $\val(\mu)$ for $\mu\in \simp_M$.
\begin{equation}
\hspace*{-2ex}
\A^\mu\! = \bA + (B\mu)\mone_c^T + \mone_r (\mu^T {D}^T),
\ \ \val(\mu) = \max_{x \in \simp_r} \Bigl\{ x^T {B} \mu +
\min_{j\in[c]}\bigl(x^T \bA +  \mu^TD^T\bigr)_j \Bigr \}. \label{extsec}
\end{equation}

A special case of an extended security game (and the reason for this terminology) is
the {\em network security game} defined by~\cite{Dughmi14}.
Given an undirected graph $G=(V,E)$ with $n = |V|$ and a parameter
$\rho \ge 0$, the states of nature correspond to the vertices of the graph.
The row and column players are called attacker and defender respectively.
The attacker and defender's pure strategies correspond to nodes of $G$.
Let $B$ be the adjacency matrix of $G$, and set
$\bA = D^T = -\rho I_{n \times n}$.
Then, for a given state of nature $\tht\in V$, and pure strategies $a,d\in V$ of the
attacker and defender, the payoff of the attacker is given by
$e_a^T Be_\tht - \rho(e_a^T+e_\tht^T)e_d$.
The interpretation is that the attacker gets a payoff of 1 if he selects a vertex
$a$ that is adjacent to $\tht$.
This payoff is reduced by $\rho$ if the defender's vertex $d$ lies in $\{\tht,a\}$,
and by $2 \rho$ if $d=\tht=a$.

\paragraph{Planted clique and planted clique cover}
Some of our hardness results are based on the hardness of the {\em planted-clique} and
{\em planted clique cover} problems.
The latter problem was introduced by Dughmi \cite{Dughmi14}.

\begin{definition}[Planted clique cover problem $\mathbf{PCover}(n,p,k,r)$~\cite{Dughmi14}]
\label{def:pcover}
Let $G \sim \G(n,p,k,r)$ be a random graph
generated by:
(1) including every edge independently with probability $p$; and
(2)
for $i = 1, \ldots, r$,
picking a set $S_i$ of $k$ vertices uniformly at random,
adding all edges having both endpoints in $S_i$.
We call the $S_i$s the planted cliques and $p$ the background density.
We seek to recover a constant fraction of the planted cliques $S_1, \ldots, S_r$,
given $G \sim \G(n,p,k,r)$.
\end{definition}

In the \emph{planted clique problem} $\mathbf{PClique}(n,p,k)$, there is a single planted clique ($r=1$) and the goal is to recover this clique. The following hardness assumption for the
planted-clique problem
{has been used in deriving various hardness results.}

\begin{conjecture}[Planted-clique conjecture] \label{pcliquehard}
For some $k = k(n)$ satisfying $k=\omega(\log n)$ and $k = o(\sqrt{n})$,
  there is no probabilistic polytime algorithm that solves
  $\pclique\bigl(n,\frac{1}{2},k\bigr)$
  with constant success probability.
\end{conjecture}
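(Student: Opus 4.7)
The statement is a widely-believed but unproven conjecture, so no genuine proof strategy exists; I can only sketch the evidence supporting it. The plan is to describe (a) why $\sqrt{n}$ appears to be the correct algorithmic barrier, (b) what lower bounds are known against restricted classes of algorithms, and (c) what a true proof would have to overcome.

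First, I would motivate the shape of the conjecture via the spectral threshold. For the Erd\H{o}s--R\'enyi graph with edge probability $1/2$, the second eigenvalue of the (recentered) adjacency matrix has magnitude $\Theta(\sqrt{n})$ by Wigner-type random matrix results, whereas planting a clique of size $k$ induces a rank-one perturbation of operator norm $\Theta(k)$. Thus, for $k = \Omega(\sqrt{n})$ one can recover the clique via a simple spectral algorithm (Alon--Krivelevich--Sudakov), while for $k = o(\sqrt{n})$ the planted signal is drowned out by spectral noise and no polynomial-time algorithm is known to succeed. The lower end $k = \omega(\log n)$ is natural because cliques of size roughly $2\log_2 n$ already exist in $\G(n, 1/2)$ with high probability, so below that scale the planted clique is not even information-theoretically distinguishable.

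Second, I would catalog the negative evidence. Conditional lower bounds have been established against (i) statistical query algorithms making polynomially many queries throughout the regime $k = o(\sqrt{n})$; (ii) $\Omega(\log n)$ levels of the sum-of-squares hierarchy; (iii) local search and Markov chain heuristics, including the Metropolis process studied by Jerrum; and (iv) low-degree polynomial tests. Each of these captures a substantial algorithmic paradigm, and their uniform failure across the regime $\omega(\log n) \le k \le o(\sqrt{n})$ is strong heuristic evidence for Conjecture~\ref{pcliquehard}.

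Finally, a true proof would constitute a breakthrough in complexity theory, and this is where the main obstacle lies. The natural avenue would be a worst-case-to-average-case reduction from some problem believed to be hard, but for problems with the high degree of symmetry enjoyed by $\pclique$ such reductions face serious known obstructions (Feigenbaum--Fortnow, Bogdanov--Trevisan); no unconditional lower bound against general polynomial-time algorithms is known for any problem in \np. For the purposes of the present paper, however, no proof is needed: the conjecture is adopted as a standing hypothesis and used as input to the subsequent hardness reductions, in line with an extensive line of prior work.
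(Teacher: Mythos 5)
You correctly recognized that this statement is a conjecture and is adopted by the paper as a standing hardness assumption, not something the paper proves (or could prove, given the state of complexity theory); there is therefore no paper proof to compare against. Your surrounding survey of the spectral threshold at $\sqrt{n}$, the information-theoretic floor at $\Theta(\log n)$, and the lower bounds against restricted algorithm classes (statistical query, sum-of-squares, Metropolis, low-degree tests) is accurate and contextually appropriate, and matches the paper's brief remarks in the related-work discussion that the planted-clique problem has resisted all known algorithmic approaches for $k = o(\sqrt{n})$ despite extensive effort.
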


\paragraph{The ellipsoid method.}
We utilize the ellipsoid method to translate hardness and approximation
\mbox{results for the dual of the signaling problem to signaling.}

\begin{theorem}[Chapters 4, 6 in~\cite{ellipsoidbook}; Section 9.2
    in~\cite{NemirovskiY83}] \label{ellipthm}
Let $X\sse\R^n$ be a polytope described by constraints having encoding length at most
$L$. Suppose that for each $y\in\R^n$, we can determine in time
$\poly(\text{size of $y$},L)$ if $y\notin X$ and if so, return
a hyperplane of encoding length at most $L$ separating $y$ from $X$.

\begin{enumerate}[(i), topsep=0ex, itemsep=0ex]
\item
The ellipsoid method can find a point $x\in X$ or determine that $X=\es$ in time
$\poly(n,L)$.

\item Let $h:\R^n\mapsto\R$ be a concave function and $K=\sup_{x\in X}h(x)-\inf_{x\in X}h(x)$.
Suppose we have a {\em value oracle} for $h$ that for every $x\in X$, returns
$\psi(x)$ satisfying $|\psi(x)-h(x)|\leq\dt$.
There exists a polynomial $p(n)$ such that for any $\e\geq p(n)\dt$,
we can use the {\em shallow-cut ellipsoid method} to find $x^*\in X$ such that
$h(x^*)\geq\max_{x\in X}h(x)-2\e$ (or determine $X=\es$) in time
$T=\poly\bigl(n,L,\log(\frac{K}{\e})\bigr)$ and using at most $T$ queries to the value
oracle for $h$.
\end{enumerate}
\end{theorem}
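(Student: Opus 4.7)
The plan is to invoke the standard ellipsoid and shallow-cut ellipsoid machinery developed in \cite{ellipsoidbook} and \cite{NemirovskiY83}, whose basic structure I sketch now. For part (i), I run the classical ellipsoid algorithm on $X$. I initialize an ellipsoid $E_0$ large enough to contain $X$: the encoding length $L$ gives an a priori polynomial bound on the bit-complexity of any vertex of $X$, so taking $E_0$ to be a ball of radius $2^{\poly(L)}$ centered at the origin suffices. At iteration $k$, I check whether the center $c_k$ of $E_k$ lies in $X$ using the hypothesized separation oracle; if so, output $c_k$; otherwise the oracle returns a vector $a$ of encoding length at most $L$ with $a^Tx\leq a^Tc_k$ valid on $X$, and I let $E_{k+1}$ be the L\"owner--John minimum-volume ellipsoid containing $E_k\cap\{x:a^Tx\leq a^Tc_k\}$. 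The standard volume-shrinkage bound $\mathrm{vol}(E_{k+1})/\mathrm{vol}(E_k)\leq e^{-1/(2(n+1))}$ implies that after $\poly(n,L)$ iterations either a feasible point is found or $\mathrm{vol}(E_k)$ falls below the minimum possible volume of a nonempty polytope with constraint-encoding length at most $L$, in which case $X=\es$. The standard perturbation trick (inflate $X$ by a tiny box of volume $2^{-\poly(L)}$) handles the possibility that $X$ is not full-dimensional.

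For part (ii), the plan is to reduce optimization of the concave $h$ over $X$ to a sequence of feasibility problems. For a threshold $v$, the super-level set $X_v:=\{x\in X:h(x)\geq v\}$ is convex because $h$ is concave, so if we had an exact separation oracle for $X_v$ we could apply part (i). The only tool available is the $\dt$-approximate value oracle $\psi$, from which one constructs an approximate separation oracle for $X_v$ using approximate subgradient information. The shallow-cut ellipsoid method is designed exactly for this regime: it accepts ``shallow'' cuts that may fail to strictly separate and only guarantee separation from a slightly shrunken body, yet still shrinks volume by a definite factor per iteration. I would then binary-search on $v$ over $[\inf_X h,\sup_X h]$ of length $K$, invoking the shallow-cut method on each $X_v$; the total number of queries is $T=\poly(n,L,\log(K/\e))$, and the accumulated additive loss per level is controlled by a fixed polynomial $p(n)$ times $\dt$, yielding the requirement $\e\geq p(n)\dt$. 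The factor of $2$ in the final $2\e$ bound arises from combining the binary-search precision with the per-level additive error.

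The main obstacle is the second part: carefully verifying that the shallow-cut ellipsoid method applies when the only access to $X_v$ is through a value oracle for $h$ (which must be converted to an approximate membership/separation oracle via subgradient estimates), and tracking how the $\dt$-error in $\psi$ propagates into the final $\e$-approximation to yield the polynomial relationship $\e\geq p(n)\dt$. Rather than redo this delicate volume-and-error accounting, I would invoke the shallow-cut framework as a black box from Section 9.2 of \cite{NemirovskiY83} and Chapters~4 and~6 of \cite{ellipsoidbook}, where these bounds are established; part (i) follows analogously as a special case of the exact-oracle version of the same argument.
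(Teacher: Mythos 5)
The paper states Theorem~\ref{ellipthm} as a black-box citation to \cite{ellipsoidbook} and \cite{NemirovskiY83} and gives no proof of its own, so the relevant question is only whether your sketch is a faithful account of the cited machinery, and it largely is: part~(i) is the textbook central-cut ellipsoid argument, and part~(ii) correctly identifies the super-level-set formulation, the shallow-cut method as the tool that tolerates imprecise cuts, and the error budget $\e \ge p(n)\dt$ as the key quantity to be tracked, ultimately deferring to the same references the paper cites. One caution worth recording: the phrase ``approximate separation oracle for $X_v$ using approximate subgradient information'' is not quite how the standard route goes. A concave $h$ with range $K$ on $X$ need not be Lipschitz near $\partial X$, so finite-difference supergradient estimates from a $\dt$-value oracle do not come with uniform error bounds. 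The GLS path instead passes from a \emph{weak membership} oracle for $X_v$ (which the value oracle gives directly) to a \emph{weak/shallow separation} oracle via the shallow-cut ellipsoid method itself (GLS Theorem~4.3.2 and its corollaries), using the ellipsoid geometry rather than supergradients to produce cuts; this is precisely where the polynomial $p(n)$ enters. Since you explicitly invoke the framework as a black box for this step, the gap is one of phrasing rather than of substance, but if you were to flesh out the argument you should replace the subgradient-estimation step with the membership-to-separation reduction.
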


\section{The dual signaling problem} \label{dsp}
The signaling problem can be formulated as the following mathematical program,
\begin{equation}
\max \ \ \sum_{\mu\in\simp_M}\al_\mu\val(\mu) \quad \text{s.t.} \quad
\sum_{\mu\in\simp_M}\al_\mu \mu_\tht=\ld_\tht \ \ \frall \tht\in\Tht, \quad \al\geq 0.
\tag{P} \label{primal}
\end{equation}
Notice that any feasible $\al$ must also satisfy $\sum_{\mu \in \simp_M} \al_\mu = 1$; hence,
$\al$ is indeed a distribution over $\simp_M$, and a feasible solution to \eqref{primal}
yields a signaling scheme. Let $\opt(\ld)$ denote the optimal value of \eqref{primal},
  and note that this is a {\em concave} function of $\ld$.
Although \eqref{primal} has a linear objective and linear constraints, it is not quite a
linear program (LP) since there are an infinite number of variables.
Ignoring this issue for now, we consider
the following dual of \eqref{primal}.
\begin{equation}
\min \quad w^T\ld \qquad \text{s.t.} \qquad
w^T\mu \geq \val(\mu) \ \ \frall \mu\in\simp_M, \quad w\in\R^M. \tag{D} \label{dual}
\end{equation}
The separation problem for \eqref{dual} motivates the following
{\em dual signaling problem}.

\begin{definition}[Dual signaling with precision parameter $\ve$]
Given a Bayesian zero-sum game $\bigl(\Tht,\{\A^\theta\}_{\theta \in \Theta}, \lambda\bigr)$,
$w\in \mathbb{R}^M$, and $\ve>0$, distinguish between:
\begin{enumerate}[(i), nosep]
\item
$\val(\mu) \ge w^T\mu + \ve$ for some $\mu \in \simp_M$; if so return $\mu\in\simp_M$
s.t. $\val(\mu)\geq w^T\mu-\ve$;
\item
$\val(\mu) < w^T\mu - \ve$ for all $\mu \in \simp_M$.
\end{enumerate}
\noindent
The threshold signaling problem is the special case of dual signaling
where $w=\eta\mone_M$ for some $\eta\in\R$.
\end{definition}

Notice that the dual signaling problem is unconstrained:
$\ld$ plays no role.

\section{Bayesian zero-sum games} \label{zerosum}
We now prove the following results for signaling in Bayesian zero-sum games. We
show that the signaling problem does not admit an FPTAS unless {\em P}=\np
(Theorem~\ref{nofptasthm}) and does not admit a PTAS assuming the hardness of the
planted-clique problem (Theorem~\ref{noptasthm}).
Complementing these hardness results,
we present a PTAS for a structured class of extended security games (Theorem~\ref{ptas}).

\subsection{\nphard{}ness of obtaining an FPTAS} \label{nofptas}

\begin{theorem}[Corollary of Theorems~\ref{opt2sep} and~\ref{dsphard}] \label{nofptasthm}
There is no FPTAS for the signaling problem, even for network security games,
unless P=NP.
\end{theorem}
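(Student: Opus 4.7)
The plan is to derive Theorem~\ref{nofptasthm} as an essentially one-line corollary of the two prerequisite results that the statement cites, so most of the intellectual content lives in Theorem~\ref{opt2sep} (an FPTAS for signaling yields an FPTAS for the dual signaling problem) and Theorem~\ref{dsphard} (the dual signaling problem is \nphard to approximate within an FPTAS, even on network security games). Concretely: supposing there is an FPTAS for \eqref{primal}, Theorem~\ref{opt2sep} upgrades it into an FPTAS for the dual signaling problem; Theorem~\ref{dsphard} says this is impossible unless P=NP. My proposal therefore focuses on how I would establish those two building blocks, since the composition itself is immediate.

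For Theorem~\ref{opt2sep}, I would use the approximate separation$\leftrightarrow$optimization equivalence given by Theorem~\ref{ellipthm}, adapted to the semi-infinite LP \eqref{primal}. The key observation is that the separation problem for \eqref{dual} is exactly the dual signaling problem: a point $w\in\R^M$ is feasible for \eqref{dual} iff $\val(\mu)\le w^T\mu$ for every $\mu\in\simp_M$, and a violating $\mu$ is a separating hyperplane. A FPTAS for the primal, used as an approximate value oracle for the concave function $\opt(\ld)$ in the prior $\ld$ (concavity was already noted in Section~\ref{dsp}), lets me run the shallow-cut ellipsoid method on $\opt(\cdot)$; whenever it needs a separating hyperplane at a candidate $w$, that query is precisely the dual signaling query I want to answer, and the $\mu$ the algorithm eventually returns is the required near-violator. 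Infinite-dimensionality of \eqref{primal} is handled by Caratheodory, which bounds the support size of any optimal $\al$ by $M+1$, so a FPTAS for signaling can be required to output a poly-size explicit representation.

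For Theorem~\ref{dsphard}, I would aim for a direct reduction from an \nphard graph problem (clique/independent set, or a constant-gap variant such as MAX-3SAT) into the threshold signaling special case. Using the network security game from Section~\ref{prelim}, the value function simplifies to $\val(\mu)=\max_x\bigl[x^TB\mu-\rho\max_d(x_d+\mu_d)\bigr]$, and the $\max_d$ penalty forces both the ``attacker'' $x$ and the posterior $\mu$ to spread; tuning $\rho$ so that a uniform $\mu$ on a $k$-clique (with the corresponding uniform $x$) exceeds a threshold $\eta$ while every $\mu$ falls below $\eta-\ve$ when no such clique exists gives the required $\ve$-gap for the threshold signaling problem. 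Crucially, dual signaling is unconstrained by the prior, so we have complete freedom in designing which $\mu$ we want the oracle to produce, which is the reason this reduction is cleaner than the planted-clique argument in~\cite{Dughmi14} for the primal problem.

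The main obstacle I expect is in the first step: carefully carrying out the equivalence of approximate optimization and separation for the semi-infinite LP \eqref{primal}, and verifying that the encoding sizes of the hyperplanes returned by the ellipsoid run are polynomial in the input size and $1/\ve$ so that the separation oracle can be simulated by the primal FPTAS in time $\poly(\text{input},1/\ve)$ as required. Once the equivalence is in hand, the \nphard{}ness reduction in Theorem~\ref{dsphard} should be obtainable with minimal machinery, exactly because dropping the coupling with $\ld$ frees us from the intricate ``spreading'' devices that \cite{Dughmi14} needed for the primal.
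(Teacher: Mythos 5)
Your one-line composition of Theorems~\ref{opt2sep} and~\ref{dsphard} is exactly right, and since the statement is explicitly a corollary of those two results, the derivation itself is correct. The remaining question is whether your sketches of the two ingredients match the paper, and there they diverge in ways worth noting.

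For Theorem~\ref{opt2sep}, the paper's argument is simpler than what you describe, and your description is partly muddled. The threshold signaling problem takes $w=\eta\mone_M$, so $w^T\mu=\eta$ is constant over $\simp_M$; the dual separation question then collapses to asking whether $\max_{\mu\in\simp_M}\val(\mu)$ is above $\eta+\ve$ or below $\eta-\ve$, which is the same as whether $\max_{\mu}\opt(\mu)$ is. So the paper just maximizes the concave function $\opt(\cdot)$ over the simplex using the shallow-cut ellipsoid method with the FPTAS as an approximate \emph{value} oracle (Theorem~\ref{ellipthm}(ii)), and reads off the answer from the returned maximum. No separating hyperplane at $w$ is ever requested. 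Your phrase ``whenever it needs a separating hyperplane at a candidate $w$'' conflates this direction with the opposite one: querying the dual signaling oracle to separate for \eqref{dual} is what Theorem~\ref{sep2opt} (dual signaling $\Rightarrow$ signaling) does, not what is needed here. The core idea you state — treat $\opt(\cdot)$ as a concave function and run shallow-cut ellipsoid with the FPTAS as value oracle — is correct and is the paper's approach, but the separation-oracle gloss should be dropped. Also note that Theorem~\ref{opt2sep} only delivers an algorithm for \emph{threshold} signaling, not the general dual signaling problem; that suffices because Theorem~\ref{dsphard} is stated for threshold signaling.

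For Theorem~\ref{dsphard}, the spirit of your reduction (a network security game where the $\rho$-penalty forces $\mu$ and $x$ to spread, calibrated so that a dense substructure pushes $\val$ over $\eta$) matches the paper, but the specifics matter. The paper reduces from \emph{balanced complete bipartite subgraph} (BCBS), not clique, and for good reason: with $\mu$ uniform on one side of a $K_{r,r}$ and $x$ uniform on the other, $x^T B\mu = 1$ exactly (the adjacency matrix has no zero diagonal to lose mass to), whereas putting both $\mu$ and $x$ uniformly on a $k$-clique yields only $1-1/k$, which is harder to control against a tiny precision $\ve$. And this is the second point: for FPTAS-hardness the gap $\ve$ may be inverse-polynomial, which is why the paper uses $\ve=\frac{1}{2n^8}$ and reduces from an exact NP-hard decision problem; your suggestion of a constant-gap MAX-3SAT variant is overkill and would only be needed to rule out a PTAS, not an FPTAS. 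Neither point is fatal to your plan, but as written the reduction is unverified, and the BCBS choice is what lets the completeness/soundness calculation go through cleanly with the tiny $\ve$.
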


\begin{theorem} \label{lem:opttosep} \label{opt2sep}
There is a polynomial $q(M)$ such that an $\frac{\ve}{q(M)}$-approximation algorithm $\B$ for the
signaling problem $\I$ yields a polytime algorithm for the threshold signaling problem
$(\I,\eta\mone_M,\ve)$.
Thus, an FPTAS for the signaling problem yields an FPTAS for the threshold signaling
problem.
\end{theorem}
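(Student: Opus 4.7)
My plan is to reduce the threshold signaling problem to approximately maximizing the concave function $\opt(\ld)$ over $\simp_M$, and then invoke the shallow-cut ellipsoid method (Theorem~\ref{ellipthm}(ii)) with $\B$ furnishing an approximate value oracle. The reduction hinges on the identity
\[
\max_{\ld \in \simp_M} \opt(\ld) \;=\; \max_{\mu \in \simp_M} \val(\mu),
\]
whose ``$\le$'' direction is immediate because any convex decomposition averages $\val$ and is thus bounded by $\max_\mu \val(\mu)$, and whose ``$\ge$'' direction follows by setting $\ld = \mu^*$ for $\mu^* \in \argmax \val$ and using the trivial one-point decomposition. Thus the threshold problem is equivalent to deciding whether $\max_\ld \opt(\ld) \ge \eta + \ve$ and, in the positive case, to exhibiting a witness posterior.

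On input $\ld$, the oracle $\psi$ invokes $\B$ on $(\I, \ld)$ with approximation parameter $\ve/q(M)$ to obtain a signaling scheme $\tal$ supported on at most $M+1$ posteriors (Caratheodory), and returns $\psi(\ld) := \sum_\mu \tal_\mu \val(\mu)$, where each $\val(\mu)$ is computed in polytime by solving the zero-sum LP for $\A^\mu$. By the approximation guarantee and primal feasibility, $\opt(\ld) - \ve/q(M) \le \psi(\ld) \le \opt(\ld)$. Setting $q(M) := 4\, p(M)$ for the polynomial $p$ of Theorem~\ref{ellipthm}(ii), so that $p(M)\dt = \ve/4$ when $\dt = \ve/q(M)$, the shallow-cut ellipsoid method yields $\ld^* \in \simp_M$ with $\opt(\ld^*) \ge \max_\ld \opt(\ld) - \ve/2$; the parameters $K \le 2$ (since $\val \in [-1,1]$) and $L = \poly(M)$ keep the total cost at $\poly(M, \log(1/\ve))$ oracle queries.

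To extract a witness I would run $\B$ once more on $(\I, \ld^*)$, pick $\mu^\dagger$ in the support of the returned scheme maximizing $\val$, and compare $\val(\mu^\dagger)$ against $\eta - \ve$. By averaging, $\val(\mu^\dagger) \ge \psi(\ld^*) \ge \opt(\ld^*) - \ve/q(M)$. In case (i) this gives $\val(\mu^\dagger) \ge (\eta+\ve) - \ve/2 - \ve/q(M) \ge \eta + \ve/4 > \eta - \ve$, so $\mu^\dagger$ is a valid output; in case (ii), $\val(\mu^\dagger) \le \max_\mu \val(\mu) < \eta - \ve$, so we report case (ii). Since the overall algorithm makes $\poly(M, \log(1/\ve))$ calls to $\B$ each with approximation $\ve/q(M)$, a $\poly(|\I|, 1/\ve)$ FPTAS for signaling translates into an FPTAS for threshold signaling. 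The main delicate point is the approximation bookkeeping: one must verify that $\opt$ is concave (standard, since it is the concave envelope of $\val$, and already noted in the paper) and calibrate $q(M)$ so that the oracle precision meets the ellipsoid's $p(M)\dt \le \ve$ requirement while still leaving additive slack in both the ellipsoid error and the oracle error to cleanly separate cases (i) and (ii); the rest is routine.
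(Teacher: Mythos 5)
Your proposal is correct and follows essentially the same route as the paper: use the shallow-cut ellipsoid method (Theorem~\ref{ellipthm}(ii)) to approximately maximize the concave function $\opt(\cdot)$ over $\simp_M$, with $\B$ serving as the imperfect value oracle, then run $\B$ once more on the resulting prior to extract a witness posterior by averaging, and decide case (i) versus case (ii) by a threshold test. Your choice $q(M)=4p(M)$ is in fact slightly more careful than the paper's $q(M)=p(M)+1$, since Theorem~\ref{ellipthm}(ii) only guarantees $\opt(z)\ge\max_\mu\opt(\mu)-2\e$ with $\e\ge p(M)\dt$; otherwise the bookkeeping and the decision rule are the same.
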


\begin{proof}
Let $\bigl(\Tht,\{\A^\theta\},\lambda\bigr), \eta\mone_M, \ve$ be the input to the
threshold signaling problem, with precision parameter $\ve$. Note that for any
$\mu\in\simp_M$, we have $-1\leq\opt(\mu)\leq 1$ since $|\A^\tht_{i,j}|\leq 1$ for all
$\tht, i, j$.
Let $p(M)$ be the polynomial given by part (ii) of Theorem~\ref{ellipthm}.
Set $q(M)=p(M)+1$.
We utilize part (ii) of Theorem~\ref{ellipthm} with $X=\simp_M$, $\dt=\frac{\ve}{q(M)}$,
$h(\cdot)=\opt(\cdot)$ (which is concave, as noted earlier),
$K=2$, and using $\B$ as the
imperfect value oracle, to find $z\in\simp_M$ in polytime such that
$\opt(z)\geq\max_{\mu\in\simp_M}\opt(\mu)-p(M)\dt$.
We run $\B$ on the prior $z$ to obtain a signaling scheme $\al$ of value
$v\geq\opt(z)-\ve$. If $v\geq\eta$, then we return that we
are in case (i) and one of the points $\mu\in\simp_M$ with $\al_\mu>0$ must satisfy
$\val(\mu)\geq\eta$.
If $v<\eta$, then we have
$\max_{\mu\in\simp_M}\val(\mu)\leq\max_{\mu\in\simp_M}\opt(\mu)<\eta+\bigl(p(M)+1)\dt$, so we
return that we are in case (ii).
\end{proof}

\begin{theorem} \label{lem:dualhardness} \label{dsphard}
There is no FPTAS for the threshold signaling problem, even for network
security games, unless P=NP.
\end{theorem}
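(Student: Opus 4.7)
The plan is to reduce an NP-hard problem --- I propose \textsc{Clique} (given $G=(V,E)$ with $n=|V|$ and integer $k$, decide whether $\omega(G)\ge k$) --- to the threshold signaling problem on network security games, exhibiting an additive gap of $1/\mathrm{poly}(n)$. This suffices to rule out an FPTAS, which by definition must run in time $\poly(n,1/\ve)$.

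Given $(G,k)$, the reduction would form the network security game on $G$ with security parameter $\rho$ (taking $\rho\le\tfrac{1}{2}$ to keep payoffs in $[-1,1]$, with payoff rescaling by $1/(2\rho)$ if a larger $\rho$ is more convenient for the analysis), a threshold $w=\eta\mone_n$ with $\eta=(k-1-2\rho)/k-\tfrac{1}{2}\ve$, and precision $\ve=\Theta(1/k^2)$. On the YES side, if $S\subseteq V$ is a $k$-clique, set $\mu$ and the attacker's mixed strategy $x$ both uniform on $S$; then $x^TB\mu=(k-1)/k$, $\max_j(x_j+\mu_j)=2/k$, giving $\val(\mu)\ge (k-1-2\rho)/k\ge \eta+\ve$ and placing the instance in case (i) of dual signaling.

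The heart of the argument is the NO side: when $\omega(G)\le k-1$, one must show $\max_\mu\val(\mu)\le \eta-\ve$, i.e., the instance lies in case (ii). The key lemma I would prove is that in a network security game on $G$, the value $\max_\mu\val(\mu)$ is essentially attained by $\mu$ uniform on a maximum clique, so is at most $(\omega(G)-1-2\rho)/\omega(G)\le(k-2-2\rho)/(k-1)$. Combined with the YES bound, this yields an additive gap of $(1+2\rho)\bigl(\tfrac{1}{k-1}-\tfrac{1}{k}\bigr)=\Omega(1/k^2)=1/\mathrm{poly}(n)$, which an FPTAS run at a sufficiently small precision would resolve; invoking Theorem~\ref{opt2sep} then yields the signaling FPTAS and hence $\mathrm P=\np$.

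To prove the NO-side lemma I would use a symmetrization argument. For arbitrary $x,\mu\in\simp_n$, set $z=(x+\mu)/2\in\simp_n$; then $x^TB\mu=2z^TBz-\tfrac{1}{2}(x^TBx+\mu^TB\mu)$ and $\max_j(x_j+\mu_j)=2\|z\|_\infty$. Combining with the Motzkin--Straus inequality $z^TBz\le 1-1/\omega(G)$ and the trivial bound $x^TBx,\mu^TB\mu\ge 0$ yields $\val(\mu)\le 2z^TBz-2\rho\|z\|_\infty$. The main obstacle is that this bound is off by a factor of $2$ on bipartite-like configurations (where $x^TBx+\mu^TB\mu$ is small because $x,\mu$ are supported on opposite sides of a bipartition). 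Closing this factor is the hardest step: I would either select $\rho$ in a regime (after payoff rescaling) that strictly penalizes asymmetric $(x,\mu)$ relative to the clique optimum, give a convexity argument reducing the maximizer to the diagonal $x=\mu$, or carry out a short case analysis ruling out the few candidate non-clique maximizers (a single-vertex $\mu$ gives only $\val\le 1-\rho$, a star-centered $\mu$ is similarly bounded, etc.), all of which fall below the clique value for the chosen $\rho$. If closing the constant factor turns out to be delicate, one can instead reduce from a gap version of \textsc{Clique} to absorb the constant into the hardness gap.
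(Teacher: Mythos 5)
Your reduction from \textsc{Clique} has a fatal gap in the NO-side lemma, and the obstacle you flag as ``off by a factor of two'' is not a constant that can be tuned away --- it is a complete mismatch between the quantity the game measures and the quantity \textsc{Clique} measures. In the network security game on $G$ with adjacency matrix $B$, the value is $\val(\mu)=\max_x\bigl\{x^TB\mu-\rho\,\max_j(x_j+\mu_j)\bigr\}$, a \emph{bilinear} form over two independent distributions with an $\ell_\infty$ penalty on the sum. This quantity is driven by dense \emph{bicliques}, not by cliques. Concretely, if $G$ is a complete bipartite graph $K_{m,m}$ then $\omega(G)=2$, so your proposed lemma would force $\val(\mu)\lesssim 1/2$; but taking $x$ uniform on one side and $\mu$ uniform on the other gives $x^TB\mu=1$ and $\|x+\mu\|_\infty=1/m$, so $\val(\mu)\ge 1-\rho/m$, arbitrarily close to $1$. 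Thus ``$\max_\mu\val(\mu)$ is essentially attained by $\mu$ uniform on a maximum clique'' is simply false.

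None of your proposed patches closes this. The penalty term $\rho\|x+\mu\|_\infty$ punishes \emph{concentration}, not asymmetry: in the bipartite example the supports of $x$ and $\mu$ are disjoint, so $\|x+\mu\|_\infty$ is \emph{smaller} than in the clique case of the same size, and tuning $\rho$ only makes this worse. A convexity argument pushing the maximizer to $x=\mu$ fails for the same reason ($f(\mu,\mu)\le 1/2-O(\rho)$ on a bipartite graph while $\max_x f(x,\mu)\to 1$). Your case analysis lists single vertices and stars but misses the actual dangerous configuration, a balanced biclique. And switching to gap-\textsc{Clique} cannot help because the error is not a constant factor: there are graphs with $\omega=2$ and game value $1-o(1)$. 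The Motzkin--Straus bound controls $z^TBz$, but after symmetrizing you have thrown away exactly the information (that $x$ and $\mu$ agree) which makes Motzkin--Straus relevant. The paper's proof reduces from the \emph{balanced complete bipartite subgraph} problem precisely because $K_{r,r}$, not $K_r$, is the combinatorial object the value of this game detects; that choice is essential, not cosmetic. Its YES direction is the same calculation you do, but the NO direction works the other way: starting from a high-value $(\mu,x)$, the large-mass vertices of $\mu$ and of $x$ are extracted separately and shown to span a biclique, with the $\rho$-penalty forcing both sets to have size at least $r$.
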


\begin{proof}
The proof follows readily via a reduction from the
{\em balanced complete bipartite subgraph} (BCBS) problem~\cite{GareyJ79}, which illustrates the
convenience of working with the dual signaling problem.
In BCBS, given a bipartite graph $G=(V \cup W, E)$ and an integer $r\geq 0$, we
want to determine if $G$ contains $K_{r,r}$ (i.e., an $r\times r$ biclique).
Given a BCBS instance, set $\ve=\frac{1}{2n^8}$, where $n=|V|+|W|$, and
$\eta=1-(2n+1)\ve$.
We create a Bayesian network security game by letting $G$ be the graph in the network
security game, and setting $\rho = 2r n \ve$.
Recall that this means that states of nature correspond to nodes of $G$, so
$\Tht=V\cup W$, and the payoff matrix for a distribution $\mu\in\simp_\Tht$ is given by
\eqref{extsec} where $B$ is the adjacency matrix of $G$ and $\bA=D^T=-\rho I_{n\times n}$.
This creates an instance of the threshold signaling problem with precision parameter $\ve$;
the prior $\ld$ is irrelevant.
We show that solving this instance would decide the
BCBS-instance.

If $G$ has the required subgraph $V'$, $W'$, set $\mu_v = 1/r$ for all
$v \in V'$ and $x_v = 1/r$ for all $v \in W'$.
Then, by \eqref{extsec}, we have
$\val(\mu) \ge x^TB\mu-\rho\|\mu+x\|_\infty\geq 1-\rho/r=\eta+\ve$.
where we have $x^T B\mu = 1$ since $V'$, $W'$ form a complete bipartite subgraph.

Suppose there exists $\mu \in \simp_M$ so that $\val(\mu) \ge \eta - \ve$. We show then that
$G$ contains $K_{r,r}$.
Let $x$ be the equilibrium strategy of the attacker, so
$\val(\mu)=x^TB\mu-\rho\|\mu+x\|_\infty$.
Let $V' := \{v\in V\cup W: \mu_v \ge 1/n^3\}$ and $W' := \{v\in V\cup W: x_v \ge 1/n^3\}$.
Then $\sum_{v \in V'} \mu_v = 1- \sum_{v \not \in V'} \mu_v > 1- 1/n^2$.  Similarly
$\sum_{v\in W'}x_v>1-1/n^2$.
Every vertex in $V'$ must be adjacent to every vertex in $W'$, otherwise
$x^T B \mu \le 1 - 1/n^6<\eta$. Thus, $V'$ and $W'$ must be in different partitions. Assume
$V' \subseteq V$ and $W' \subseteq W$.
For each vertex $v$, $\mu_v + x_v \le\frac{(1+ 1/n)}{r}$, otherwise
$\val(\mu)<1-(2n+2)\ve$.
Hence,
$|V'| \ge \frac{\sum_{v \in V'} \mu_v}{(1 + 1/n)/r}> r \frac{1-1/n^2}{(1+1/n)} = r(1-1/n)$,
and therefore $|V'| \ge r$. Similarly $|W'| \ge r$, and this yields the $r\times r$
biclique.
\end{proof}

We conjecture that Theorem~\ref{opt2sep} can, in fact, be strengthened to show that an
$\ve$-approximation for signaling yields an $O(\ve)$-approximation for threshold
signaling, so that a PTAS for signaling yields a PTAS for threshold signaling.
This would {\em rule out a sub-quasipolytime approximation scheme} (i.e.,
an $n^{\tilde\Om(\log^{1-o(1)} n)}$-time approximation scheme) for signaling
under the (deterministic) {\em exponential time hypothesis} (ETH),
since we prove in Section~\ref{extn} that there is no sub-quasi-PTAS
for threshold signaling assuming ETH.

This would be an {\em optimal} hardness result since a quasi-PTAS follows from~\cite{CCDEHT15}.
Recently, Rubinstein~\cite{rubinstein15} obtains this hardness result via a \emph{direct} reduction
that builds upon ideas in \cite{AIM14}.
However, tightening part (ii) of Theorem~\ref{ellipthm}
  would give a much simpler proof.
We leave this as an intriguing open question.
Below, we rule out a PTAS for
signaling under an orthogonal hardness assumption.

\subsection{Planted-clique hardness of obtaining a PTAS} \label{noptas} \label{sec:ptas}

\begin{theorem}
\label{noptasthm}
There is a constant $\ve_0$ such that, assuming the planted-clique hardness conjecture
  (Conjecture~\ref{pcliquehard}), there is no $\ve_0$-approximation for the signaling
  problem in Bayesian zero-sum games.
\end{theorem}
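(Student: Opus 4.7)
The plan is to reduce from the planted clique cover problem $\mathbf{PCover}(n,\tfrac{1}{2},k,r)$, which by \cite{Dughmi14} is at least as hard as $\pclique$. Given a graph $G$ drawn from $\G(n,\tfrac{1}{2},k,r)$ with $k=\omega(\log^2 n)$, I would construct a Bayesian zero-sum game $\I_G$ whose states of nature $\Tht$ and whose row and column pure strategies all correspond (up to the auxiliary gadget described below) to vertices of $G$. The base payoff rewards adjacency in $G$, so that the row player playing a vertex adjacent to the realized state earns reward $\approx 1$. Because of the random background edges, any row strategy already earns payoff $\approx\tfrac{1}{2}$ regardless of the signal, and the only way for a signaling scheme to obtain value noticeably above $\tfrac{1}{2}$ is to produce posteriors whose support concentrates on the vertices of a single planted clique.

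To make ``concentrates'' happen at the correct scale, I would integrate the spreading gadget from \cite{FNS07} that was used in \cite{hazankrauthgamer} to show planted-clique hardness of best-Nash. The gadget augments the column player with actions that punish the row player's distribution (and, via the equilibrium, the posterior of every signal) whenever either is supported on fewer than $\Om(\log n)$ vertices. The crucial constraint, relative to \cite{hazankrauthgamer}, is that the gadget must live inside a zero-sum matrix with bounded entries, so its penalty terms have to be balanced against the adjacency reward. Parameters are chosen so that (i) any sub-$\log n$-supported strategy loses at least $\ve_0$ from the spreading penalty; yet (ii) uniform play on any size-$k$ planted clique incurs no penalty and earns full adjacency reward $1$. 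Completeness is then direct: when $G$ has planted cliques $S_1,\dots,S_r$, the signaling scheme with one signal per clique (revealing which $S_i$ contains the realized state) induces posteriors and row-best-responses supported uniformly on the $S_i$'s, giving value $1-o(1)$.

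For soundness, I would examine any signaling scheme $\al$ with $\sum_\mu \al_\mu\val(\mu) \geq \opt(\I_G) - \ve_0$. A concentration of measure on $\al$ forces most posteriors $\mu$ with $\al_\mu>0$ to have $\val(\mu)\geq \tfrac{1}{2}+\Om(1)$. For each such $\mu$, the spreading gadget forces both $\mu$ and the row player's best response to be supported (up to $o(1)$ mass) on a set $T_\mu$ of size $\Om(\log n)$, and the adjacency-reward bound further forces $T_\mu$ to overlap some planted clique $S_{i(\mu)}$ in $\Om(\log n)$ vertices. Aggregating the maps $\mu\mapsto S_{i(\mu)}$ and deduplicating should then recover a constant fraction of $\{S_1,\dots,S_r\}$, contradicting the hardness of $\mathbf{PCover}$.

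The main obstacle, and the step where most of the work lies, is the clique-recovery subroutine. The statement from \cite{Dughmi14} reconstructs $S_i$ only from a witness $T$ with $|T|=\Theta(k)$ and $|T\cap S_i|=\Om(k)$, whereas the spreading gadget can only output $|T_\mu|,|T_\mu\cap S_{i(\mu)}|=\Om(\log n)$. I would tighten the recovery lemma using a common-neighborhood argument: with high probability over $G\sim\G(n,\tfrac{1}{2},k,r)$, for any $T'\subseteq S_i$ with $|T'|=c\log n$ (for a sufficiently large constant $c$), the set of vertices of $V$ adjacent to every element of $T'$ equals $S_i$ up to at most $n\cdot 2^{-|T'|}=o(1)$ spurious vertices, so the common neighborhood recovers $S_i$ up to a $(1-o(1))$-fraction. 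The algorithmic catch is that $T_\mu$ itself contains vertices outside $S_{i(\mu)}$, so the subroutine must isolate a correct $T'\subseteq T_\mu\cap S_{i(\mu)}$, which I plan to do by enumerating size-$c\log n$ subsets of $T_\mu$ (which are only quasi-polynomially many --- but only a polynomial number of candidate $T_\mu$'s arise from the Caratheodory-bounded signaling scheme) or by a greedy pruning that iteratively restricts to the densest core. Making all probabilistic estimates hold uniformly across the $r$ planted cliques and all candidate sets $T_\mu$ is the delicate quantitative heart of the argument.
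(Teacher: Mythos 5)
Your proposal takes essentially the same route as the paper: reduce from planted clique cover, embed the FNS / Hazan--Krauthgamer spreading gadget in the zero-sum game to force both the posterior and the row player's best response onto $\Omega(\log n)$-size supports, and tighten Dughmi's planted-clique recovery lemma to work with $\Theta(\log n)$-size witnesses. The construction, the completeness scheme (one signal per clique), and the soundness outline all match the paper. The gaps are in the recovery lemma, precisely where you flagged the ``delicate quantitative heart.''

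First, the estimate that the common neighborhood of a subset $T' \subseteq S_i$ of size $c\log n$ leaves at most $n \cdot 2^{-|T'|} = o(1)$ spurious vertices holds in expectation for a single fixed $T'$, but it does not survive a union bound over the $\binom{k}{c\log n}$ candidate subsets of a single clique (already super-polynomially many), and it also ignores the correlations introduced by the other $r-1$ planted cliques: a vertex $v \in S_j$ with $T' \cap S_j \neq \emptyset$ is adjacent to part of $T'$ deterministically, not with probability $\frac{1}{2}$. So after the common-neighborhood step you should in general expect $\Theta(\log n)$ spurious vertices, not $o(1)$. The paper handles this by (i) replacing the all-of-$T'$ adjacency condition with a $0.8$-fraction overlap condition and invoking the bi-density lemma (Lemma~\ref{dug:bidens}), which \emph{does} support a union bound over all $\Omega(\log n)$-size sets, (ii) bounding the clique-edge contribution separately via a Chernoff-with-dependence argument (Claim~\ref{append-vSplus}), and (iii) adding a second filtering step that keeps only vertices with at least $k-1$ neighbors in the candidate set $S'$, which removes the $\Theta(\log n)$ surviving spurious vertices.

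Second, your plan to enumerate all size-$c\log n$ subsets of $T_\mu$ runs in quasi-polynomial time whenever $|T_\mu| = \omega(\log n)$, which does not contradict the polynomial-time planted-clique conjecture; the paper first samples $O(\log n / \epsilon)$ vertices from $T_\mu$ uniformly and enumerates subsets of the sample, yielding a polynomial-time algorithm since the density condition $|S_i \cap T_\mu| \geq \epsilon |T_\mu|$ guarantees the sample hits $S_i \cap T_\mu$ enough times with high probability. Finally, you only need $k = \omega(\log n)$, not $\omega(\log^2 n)$; the latter was Dughmi's constraint, which the tightened recovery lemma is specifically designed to circumvent.
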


Our hardness result strengthens the one in~\cite{Dughmi14}, which rules out an
FPTAS assuming the planted-clique conjecture.
The reduction therein creates a network security game from a graph
$G\sim\G\bigl(n,\frac{1}{2},k,r\bigr)$ (see Section~\ref{prelim}).
The idea is that if a signaling scheme achieves value close to 1, then it must place
a large weight on posteriors and attacker mixed-strategies that randomize over a large
set of nodes.
Further, the posterior and attacker must essentially identify
dense components of $G$, as otherwise the attacker's value would be close to the background
density $\frac{1}{2}$.
As noted earlier, a limitation of this type of construction
is that
the parameter $\rho$ used in the network security game needs to be
roughly $\Omega(\log n)$
to ensure that the posterior and the attacker's mixed strategies are supported on
an $\Omega(\log n)$-size set of nodes.
This only yields an $\Tht\bigl(\frac{1}{\polylog(n)}\bigr)$ gap, which is insufficient
to rule out a PTAS.
We overcome this obstacle by moving away from a network security game, and instead
exploiting an idea of~\cite{FNS07}
  to eliminate all equilibria of $O(\log n)$-size support from the game.
Theorem \ref{noptasthm} follows immediately by combining Lemmas~\ref{lem:family_t}
and~\ref{lem:main}.

\begin{lemma}
\label{lem:family_t}
Let $\epsilon > 0$,
  $k = k(n)$ satisfy $k = \omega(\log n)$ and $k = o(\sqrt{n})$, and $r = \Theta(n/k)$.
Suppose there is a polytime algorithm that takes as input $G \sim \G\bigl(n,\frac{1}{2},k,r\bigr)$
with planted cliques $\set{S_i}$,
  and outputs a family $\T\sse 2^V$ of clusters satisfying the following with constant
  probability, for any constant $c_3\geq 10^3:$

\vspace{-1.25ex}
\begin{equation*}
\text{for an $\epsilon$-fraction of $\set{S_i}$,
$\exists T \in \T$ with $|T \cap S_i|\geq\max\bigl\{\e|T|, c_3 \log n\bigr\}$.}
\tag{*} \label{pcover}
\end{equation*}

\noindent
Then there is a polynomial-time algorithm for $\mathbf{PClique}\bigl(n,\frac{1}{2},k\bigr)$
having constant success probability.
\end{lemma}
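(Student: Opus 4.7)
My plan is to reduce $\pclique\bigl(n,\tfrac12,k\bigr)$ to the hypothesized algorithm for $\mathbf{PCover}$. Given $G_0 \sim \pclique\bigl(n,\tfrac12,k\bigr)$ with unknown planted clique $S$, I sample $r-1$ additional independent uniform $k$-subsets $S_2,\ldots,S_r \subseteq [n]$ and form $G'$ by inserting every edge within each $S_j$ into $G_0$. Setting $S_1 := S$, a routine coupling shows $(G';S_1,\ldots,S_r) \sim \G\bigl(n,\tfrac12,k,r\bigr)$ with the $r$ indices exchangeable. I invoke the hypothesized algorithm on $G'$ with some large constant $c_3 \geq 10^3$. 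With constant probability $p > 0$ it returns $\T$ satisfying $(\ast)$; letting $J := \{i : S_i \text{ is covered}\}$, exchangeability gives $\Pr[1 \in J] = \Ex[|J|]/r \geq \eps p$. Thus with probability $\geq \eps p$, some $T^* \in \T$ satisfies $|T^* \cap S| \geq \max\{\eps|T^*|, c_3 \log n\}$.

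\textbf{Recovery via random subsets.} Since I do not know which $T \in \T$ equals $T^*$, I try each $T \in \T$ in turn. For each, I draw $N := n^{O(c_3 \log(1/\eps))}$ subsets $U \subseteq T$ of size $c_3 \log n$ uniformly at random. When $T = T^*$, a given sample lies inside $S$ with probability at least $(|T \cap S|/|T|)^{c_3 \log n} \geq \eps^{c_3 \log n} = 1/\poly(n)$, so with probability $1 - o(1)$ at least one sampled $U$ is contained in $S$. For such a good $U$, define the common-neighborhood candidate
\[
C(U) := U \cup \bigl\{v \in V \setminus U : \{v,u\} \in E(G')\ \text{for all}\ u \in U\bigr\}.
\]
Since $S$ is a clique in $G'$, $C(U) \supseteq S$; the entire issue is to argue $C(U) = S$ with high probability.

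\textbf{The key concentration.} Fix $v \notin S$ and set $X_v := \sum_{j=2}^{r} |U \cap S_j|\cdot \mathbb{1}[v \in S_j]$. Conditioning on $S_2,\ldots,S_r$, the event $v \in C(U)$ demands every edge $\{v,u\}$ for $u \in U$: edges where $u,v$ share some $S_j$ come for free, and the remaining $\geq |U| - X_v$ edges must appear as Erd\H{o}s--R\'enyi random edges, so $\Pr[v \in C(U) \mid S_{\geq 2}] \leq 2^{X_v - |U|}$. Since the $S_j$ are mutually independent, the MGF factorizes; Hoeffding's MGF bound for the hypergeometric $|U \cap S_j| \mid v \in S_j$, combined with $k = o(\sqrt n)$ (so $k|U|/n = o(1)$) and $rk/n = \Theta(1)$, yields
\[
\Ex\bigl[2^{X_v}\bigr] \;\leq\; \prod_{j=2}^{r} \bigl(1 + O(k^2 |U|/n^2)\bigr) \;\leq\; \exp\bigl(O(r k^2 |U|/n^2)\bigr) \;=\; \exp\bigl(O(k |U|/n)\bigr) \;=\; O(1).
\]
Hence $\Pr[v \in C(U)] \leq O(2^{-|U|}) = O(n^{-c_3})$, and summing over $v \notin S$ gives $\Ex[|C(U) \setminus S|] \leq O(n^{1 - c_3}) = o(1/n)$ once $c_3 \geq 3$. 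By Markov's inequality, $C(U) = S$ with probability $\geq 1 - o(1/n)$, so $S$ is recovered exactly.

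\textbf{Termination.} For every recovered candidate $C(U)$, I check whether $C(U) \in \{S_2, \ldots, S_r\}$ (which I know explicitly); the unique survivor is $S$. The whole procedure runs in polynomial time, and combining the constant success probabilities of the reduction, the random sampling of $U$, and the recovery yields $\Omega(\eps p) = \Omega(1)$ overall success, contradicting Conjecture~\ref{pcliquehard}. The main obstacle will be the sharp $O(1)$ bound on $\Ex[2^{X_v}]$: looser analyses (as in~\cite{Dughmi14}) blow up polynomially and force witness size $|U| = \Theta(k)$, whereas the tight MGF estimate above is exactly what permits operating at the much smaller witness size $|U| = c_3 \log n$ that $(\ast)$ promises---this is the ``significant tightening'' of the planted-clique recovery analysis that the paper flags as crucial.
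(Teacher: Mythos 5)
Your reduction from $\pclique$ to $\mathbf{PCover}$, the exchangeability argument, and the idea of finding a size-$(c_3\log n)$ set $U\subseteq S\cap T$ by sampling from $T$ and then taking common neighbors all match the paper's strategy in spirit; the MGF estimate $\Ex[2^{X_v}]=O(1)$ is also a correct and nicely quantitative way to control the contribution of the \emph{other} planted cliques.

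However, there is a genuine gap in the claim that $C(U)=S$ with high probability. The bound $\Pr[v\in C(U)\mid S_{\geq 2}]\leq 2^{X_v-|U|}$ treats the background edges from $v$ to $U$ as independent fair coins, which is only valid if $U$ is chosen \emph{independently} of the background edges. But $U$ is sampled from $T$, and $T$ is produced by the hypothesized algorithm after seeing all of $G'$ — so $T$ (and hence $U$, especially after conditioning on $U\subseteq S$) can be correlated with the background edges in an adversarial way. Concretely: a typical $v\notin S$ has about $|S|/2=\omega(\log n)$ background-edge neighbors in $S$, so the algorithm is free to output $T$ with $T\cap S$ equal to $c_3\log n$ vertices of $S$ all adjacent to $v$. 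Condition $(\ast)$ is still satisfied, yet now the conditioned sample $U=T\cap S$ is forced, and $v\in C(U)$ deterministically. To be robust against this, your per-pair bound $O(n^{-c_3})$ would need to survive a union bound over the $\binom{n}{c_3\log n}=n^{\Theta(\log n)}$ possible $U$'s, which it does not.

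The paper sidesteps this by proving bounds that hold uniformly over the adversary's choice: the bi-density bound (Lemma~\ref{dug:bidens}, via Claim~\ref{append-vRminus}) shows that for \emph{every} set $R$ of size $\geq c_3\log n$, at most $c_3\log n$ vertices $v\notin S$ have $|E^-(v,R)|\geq 0.8|R|$, and Claim~\ref{append-vSplus} bounds $|E^+(v,S)|$ uniformly (which automatically bounds $|E^+(v,U)|$ for any $U\subseteq S$). These give $|S'\setminus S|\leq c_3\log n$ — not $S'=S$. Eliminating those $O(\log n)$ false positives requires the paper's additional filtering Step~(3): keep only vertices with $\geq k-1$ neighbors in $S'$, which works because any $v\in S'\setminus S$ has at most $0.7|S|+c_3\log n<k-1$ neighbors in $S'$ (Claim~\ref{append-vS}). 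Your proposal is missing this second filtering step, and without it the recovered set would in general strictly contain $S$.
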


\begin{lemma} \label{lem:main}
Let $k = k(n)$ satisfy $k = \omega(\log n)$ and $k = o(\sqrt{n})$,
  and $r = \frac{5n}{k}$.
There is a polynomial-time randomized reduction that takes a graph
  $G \sim \G\bigl(n,\frac{1}{2},k,r\bigr)$ as input
  and outputs a Bayesian zero-sum game
  such that the following hold with high probability.
\begin{itemize}[topsep=0.25ex, itemsep=0ex, leftmargin=0ex]
\item[] (Completeness)\quad There is a signaling scheme having value at least $0.99$.
\item[] (Soundness)\quad Given a signaling scheme of value at least $0.97$,
  one can obtain in polytime a collection $\T$ of clusters satisfying condition
  \eqref{pcover} in Lemma~\ref{lem:family_t}.
\end{itemize}
\end{lemma}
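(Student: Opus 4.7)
The plan is to construct an extended zero-sum game whose payoffs combine a network-security-style term derived from the adjacency matrix $B$ of $G$ with a ``spreader'' device from \cite{FNS07} whose entries are bounded in absolute value by an $O(1)$ constant (rather than $\Tht(\log n)$, as in \cite{Dughmi14}); this is what eventually yields the constant, rather than $O(1/\polylog n)$, inapproximability gap. Concretely, I would take the states of nature to be $\Tht=V$ and the prior $\ld_v$ proportional to the number of planted cliques containing $v$, so that the ``clique-revealing'' scheme $\al^*$, which emits a signal $\sg_i$ with posterior $\mu^{\sg_i}$ uniform on $S_i$, is feasible. The attacker's pure strategies are $V$; the defender's pure strategies are the $|V|$ security columns of a Dughmi-style network security game with a \emph{small absolute constant} $\rho>0$, augmented by $L=\Tht(\log n)$ FNS07 spreader columns designed so that any attacker mixed strategy whose support has size below $c_3\log n$ incurs a penalty of at least an absolute constant $\gm_0>0$, while strategies whose mass is spread over $\Omega(\log n)$ vertices incur only $o(1)$ penalty.

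For completeness, I would check that $\al^*$ achieves value at least $0.99$. Under signal $\sg_i$ the attacker best-responds with the uniform distribution on $S_i$: the base payoff $x^T B\mu$ equals $1$ since $S_i$ is a clique, the security penalty is $O(\rho/k)=o(1)$ since $k=\w(\log n)$, and the spreader penalty is $o(1)$ because $x$ is spread uniformly over $k\gg L$ vertices.

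For soundness, I would start from a signaling scheme of value at least $0.97$ and, by averaging, extract a constant-mass collection of signals $\sg$ for which $\val(\mu^\sg)\ge c$ for some absolute constant $c$. Let $x^\sg$ be the attacker's equilibrium strategy. The FNS07 columns force $x^\sg$ to have a ``spread'' part $T_x^\sg:=\{v:x^\sg_v\ge\tau\}$ of size at least $c_3\log n$ (for threshold $\tau=\Tht(1/\log n)$) carrying essentially all of $x^\sg$'s mass, while the inequality $(x^\sg)^T B\mu^\sg\ge c'$ (for some absolute $c'\le c$ obtained after deducting the security and spreader penalties) forces $\mu^\sg$ to place most of its mass on vertices adjacent to a constant fraction of $T_x^\sg$. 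A standard Chernoff/union-bound argument on $G\sim\G\bigl(n,\tfrac{1}{2},k,r\bigr)$ (following the strategy in \cite{Dughmi14} but using the looser size thresholds permitted by Lemma~\ref{lem:family_t}) then shows that, with high probability over $G$, any such dense bipartite witness of size $\Tht(\log n)$ must sit essentially inside some planted clique. Setting $T^\sg:=T_x^\sg\cup\{v:\mu^\sg_v\ge\tau\}$ thus gives a set of size $\Tht(\log n)$ with $|T^\sg\cap S_i|\ge\max\{\eps|T^\sg|,c_3\log n\}$ for some $S_i$. A final double-counting step, using that $\ld$ distributes weight roughly equally across the $r$ planted cliques and that the signaling scheme attains average value $\ge 0.97$, shows that the collection $\T=\{T^\sg\}$ must cover an $\eps$-fraction of the $S_i$, establishing condition~\eqref{pcover}.

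The main obstacle is the design and analysis of the FNS07 spreader itself: as the paper notes, naive bipartition- or partition-based ``spreaders'' either force only constant-size support or require column payoffs of magnitude $\Omega(\log n)$, and avoiding this trade-off in a zero-sum setting (where both players' strategies are unconstrained mixtures) is the delicate technical step. Secondary subtleties include (a) calibrating $\tau$ so that $|T^\sg|$ stays within a constant factor of $c_3\log n$ while still retaining enough of the density witness to satisfy the proportional bound $|T^\sg\cap S_i|\ge\eps|T^\sg|$, and (b) handling overlaps among the planted cliques---each vertex lies in $O(1)$ of them on average---in the double-counting step, so that clusters generated by different signals together cover many \emph{distinct} cliques $S_i$ rather than collapsing onto a few.
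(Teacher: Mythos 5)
Your high-level plan --- combine an adjacency-derived payoff with an FNS07-style ``spreader'' to force both the posterior $\mu$ and the attacker's strategy $x$ to spread over $\Omega(\log n)$ vertices, then harvest clusters from the resulting near-optimal scheme --- matches the paper's. But there are several concrete gaps, one of which is fatal as stated.

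The fatal one is the number of spreader columns. You propose $L=\Theta(\log n)$ such columns with $O(1)$-bounded entries. The FNS07 device works by a union bound over all candidate small supports: to guarantee that \emph{every} $R\subseteq V$ with $|R|\le c_2\log n$ is ``hit'' by some column where all entries in $R$ are the large negative value, one needs the number of columns $N$ to satisfy roughly $N\ge\binom{n}{c_2\log n}\cdot(\text{const})^{c_2\log n}=n^{\Theta(\log n)}$\dots except that one does \emph{not} need every $R$ simultaneously hit with high probability by a single column --- one needs the probability that \emph{some} $R$ escapes all columns to be small. That computation (Claim~3.10(ii) in the paper) gives $\binom{n}{c_2\log n}\bigl[1-(3/4Z)^{c_2\log n}\bigr]^N$, which forces $N=n^{c_1}$ for a suitable constant $c_1$, i.e., \emph{polynomially many} columns, not $\Theta(\log n)$. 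With only $\Theta(\log n)$ columns the soundness breaks: there exist small-support attacker strategies and posteriors that avoid every spreader column, so you cannot conclude $|T_\sigma|=\Omega(\log n)$. The paper's construction has a $n\times(2N+1)$ payoff matrix with $N=n^{c_1}$, $Z=20$ a fixed constant, and \emph{no} network-security $\rho$-penalty at all --- the spreader alone does the work of forcing spread, which is exactly the point of abandoning network security games.

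Two further discrepancies worth flagging, though they are more repairable. First, you take the prior $\lambda$ proportional to the number of planted cliques containing $v$; the paper takes $\lambda=\mathbb{1}_n/n$ uniform. The uniform prior is what makes the double-counting inequality $\sum_{\sigma}\alpha_\sigma(\mu_\sigma)_v\le\lambda_v=1/n$ clean in the soundness step; with your prior you would need to re-derive the bound $\sum_i|S_i|/n=5/r\cdot\sum_i 1$ and account for vertices lying in several cliques, precisely the overlap subtlety you mention in your caveat (b) and do not resolve. Second, your soundness claim that ``any such dense bipartite witness of size $\Theta(\log n)$ must sit essentially inside some planted clique'' is stronger than what is true or needed. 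With $|T|=\Theta(\log n)$ a dense bipartite witness can straddle the background graph and several planted cliques; the paper does not prove per-witness containment, but instead bounds $\tilde x_\sigma^T A_G^-\mu_\sigma$ using the bi-density lemma and then runs a \emph{global averaging} argument showing that the aggregate contribution from the clique-edge part $\sum_i A_G^i$ must be $\ge 1/4$, which by double-counting forces a constant fraction of the $S_i$ to have $\max_T|T\cap S_i|/|T|\ge 1/40$. You would need to replace your per-witness claim with this kind of aggregate argument (or prove the stronger statement, which would require a different structural lemma than Lemma~\ref{dug:bidens}).
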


Above, and throughout this section, when we say with high probability, we mean success
probability $1-\frac{1}{\poly(n)}$.
The Bayesian zero-sum game we construct
always admits a signaling scheme of large value;
  however {\em finding} a near-optimal signaling scheme in polytime
  would refute the planted-clique conjecture.
Lemma~\ref{lem:family_t} (proved in Appendix \ref{append-family_t}) is similar to a
planted-clique recovery result proved in~\cite{Dughmi14}.
While we utilize similar ideas,
our
result works under {\em much weaker} requirements. Our lemma
allows clusters in $\T$ to have size $\Tht(\log n)$ --- which is crucial for
Lemma~\ref{lem:main} --- whereas in~\cite{Dughmi14}, the clusters need to have size
$\omega(\log^2 n)$.
In the rest of this section, we prove Lemma~\ref{lem:main}.
  We use the following parameters.
\begin{equation}
  Z = 20, \quad c_2 = 10^5, \quad
  c_1 = c_2 \log(4Z/3) + 2, \quad N = n^{c_1}.
\end{equation}
To keep the presentation simple, we give a construction where
  $\A^\theta_{i,j}\in[-Z, Z]$ (as opposed to $[-1,1]$).
Let $A_G$ denote the ($n\times n$) adjacency matrix of $G=(V,E)$.
We split $G$ into $G^-$ and $G^+$ with corresponding adjacency matrices $A_G^-$ and $A_G^+$
  where $G^-$ are the background edges and $G^+$ are the clique edges added in
  steps (1) and (2) of Definition \ref{def:pcover} respectively.
The states of nature and the row-player's strategies correspond to the nodes of $G$.
The prior $\ld$ is $\mone_n/n$, thus each state of nature (each vertex) is equally likely to occur.
For every $\tht\in\Tht=V$, the payoff matrix $\A^\theta \in [-Z,Z]^{n \times (2N+1)}$
is given by $[a^\tht\ \ B\ \ \mone_n (d^\tht)^T]$, which are defined as follows:

\begin{enumerate}[(1), topsep=0.5ex, itemsep=0ex, labelwidth=\widthof{(3)}, leftmargin=!]

\item $a^\theta$ is
the $\theta$-th column of the adjacency matrix $A_G$, so $a^\tht_i=1$ if $(i,\tht)\in E$
and is 0 otherwise.
\item $B$ is an $n \times N$ matrix, where each $B_{i,j}$ is set independently to $2-Z$
with probability $\frac{3}{4Z}$, and $2$ otherwise.
\item $d^\tht\in[-Z,Z]^N$, where each entry $d^\theta_j$ is set independently to
$2-Z$ with probability $\frac{3}{4Z}$, and $2$ otherwise.
\end{enumerate}

We use $\Row$ and $\Col$ to denote the row and column players respectively.
Let $D$ be the $n\times N$ matrix having rows $(d^\tht)^T$ for $\tht\in\Tht$.

To gain some intuition, observe that for a
posterior $\mu$ and $\Row$'s mixed strategy $x$,
the row vector $x^T\A^\mu$ yielding $\Col$'s payoffs is $[x^TA_G\mu\ \ x^TB\ \ \mu^TD]$.
Thus, if $\Col$ plays action $1$ (with probability 1),
  the expected payoff of $\Row$ is equal to $x^TA_G\mu$.
If $\mu$ and $x$ are uniform over $S, T \subseteq V$,
  the expected payoff is exactly
\[ \text{bi-density}_G(S, T):=
\frac{\left|\{(u,v)\in S\times T: \{u,v\}\in E\}\right|}{|S||T|}. \]
The remaining $2N$ pure strategies of $\Col$
  are used to force the principal and $\Row$ to choose a posterior $\mu$ and
  mixed strategy $x$ respectively that are ``well spread out''.

The average of the entries in any column of $B$ or $D$ is $\frac{5}{4}>\max_i a^\tht_i$.
Exploiting this, Claim~\ref{largesmallsup}(i)
implies that if $x$ and $\mu$ both randomize uniformly over a large set of vertices,
$\Col$ plays column 1.
The completeness proof now follows from the oft-used idea of
(roughly speaking) choosing
posteriors
and mixed strategies for $\Row$
that randomize uniformly over the planted cliques.
Conversely, if $x$ or $\mu$ has support of size at most $c_2\log n$, then
Claim~\ref{largesmallsup}(ii) implies that $\Col$ can play some column of $B$ or $D$ and make $\val(\mu)$ negative
Thus, in order to obtain value close to 1, both $\mu$ and $\Row$ have to randomize over
$\Omega(\log n)$-size sets of nodes. Using this, one can carefully extract a collection of
node-sets satisfying condition \eqref{pcover} of Lemma~\ref{lem:family_t}.
This yields the soundness proof.

The following properties about the above construction will be useful.

\begin{claim} \label{largesmallsup}
Let $R\sse V$.
(i) If $|R|=\w(\log n)$, with high probability, for every $j\in [N]$,
$\frac{1}{|R|}{\sum_{i\in R}B_{i,j}}>1$ and $\frac{1}{|R|}\sum_{i\in R}D_{i,j}>1$.
(ii) If $|R|\leq c_2\log n$, with high probability,
$\exists j,k\in[N]$ such that $B_{i,j}=2-Z=D_{i,k}$ for all $i\in R$.
\end{claim}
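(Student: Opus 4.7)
The plan is to derive both parts from elementary concentration combined with a union bound over the $N = n^{c_1}$ columns; I argue for $B$, the argument for $D$ being identical by symmetry of the distributions.

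For part (i), with $R$ fixed, I will rescale by setting $Y_{i,j} := (2 - B_{i,j})/Z \in \{0,1\}$, so that the $Y_{i,j}$ for $i \in R$ are i.i.d.\ Bernoulli$\bigl(\tfrac{3}{4Z}\bigr)$. The event $\tfrac{1}{|R|}\sum_{i\in R} B_{i,j} \le 1$ is exactly $\sum_{i\in R} Y_{i,j} \ge |R|/Z$, a multiplicative deviation by a factor of $\tfrac{4}{3}$ above the mean $\tfrac{3|R|}{4Z}$. A multiplicative Chernoff bound then yields failure probability at most $\exp(-|R|/(36Z))$, which is $n^{-\omega(1)}$ since $|R| = \omega(\log n)$. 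A union bound over the $N$ columns of $B$ and the $N$ columns of $D$ preserves $n^{-\omega(1)}$ overall failure, establishing part (i).

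For part (ii), I fix $R$ with $|R| \le c_2 \log n$. For a single column $j$, $\Pr[B_{i,j} = 2 - Z \text{ for all } i \in R] = \bigl(\tfrac{3}{4Z}\bigr)^{|R|} \ge n^{-c_2 \log(4Z/3)}$. Since the $N$ columns of $B$ are independent, the probability that no column has all entries $(2-Z)$ restricted to $R$ is
\[
\bigl(1 - n^{-c_2\log(4Z/3)}\bigr)^N \le \exp\bigl(-n^{c_1 - c_2\log(4Z/3)}\bigr) = \exp(-n^2),
\]
using the choice $c_1 = c_2 \log(4Z/3) + 2$. The number of subsets $R$ with $|R| \le c_2\log n$ is at most $n^{c_2\log n + 1}$, which is sub-exponential in $n$, so a union bound over such $R$ (and separately over $B$ and $D$) preserves overall failure $o(1)$, giving the desired $j, k$ simultaneously for every small $R$ whp.

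No single step is individually hard; the only ingredient requiring care is the choice $c_1 = c_2\log(4Z/3)+2$, which is precisely tuned so that $N$ is large enough to drive the per-$R$ failure probability in part (ii) down to $\exp(-n^2)$---doubly-exponentially small, and thus easily able to absorb the (sub-exponential) union bound over the small subsets $R$. The remainder is a routine concentration calculation.
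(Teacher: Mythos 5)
Your proof is correct and takes essentially the same route as the paper's: part (i) by a Chernoff bound on the column average for a fixed $R$ followed by a union bound over the $N$ columns, and part (ii) by computing the probability that no column is constantly $(2-Z)$ on $R$ and then union-bounding over small subsets $R$ (the paper unions only over $|R| = c_2 \log n$, which suffices by monotonicity; your union over all $|R| \le c_2 \log n$ is equivalent up to polynomial factors). The only difference is cosmetic --- you rescale to $\{0,1\}$ Bernoulli variables and spell out the Chernoff constants where the paper just invokes ``Chernoff bounds imply.''
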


\begin{proof}
We first prove (i). The proof is a standard application of Chernoff bounds, and is also essentially shown
in~\cite{hazankrauthgamer}.
We prove the statement for $B$;
the argument for $D$ is identical.
Fix a column $j\in [N]$. We have
$\Ex\bigl[\frac{\sum_{i\in R}B_{i,j}}{|R|}\bigr]=\frac{5}{4}$, where the expectation is
over the random construction of $B$. Since $|R|=\w(\log n)$, the size of $R$ is large
enough so that  Chernoff bounds imply that
$\Pr\bigl[\frac{\sum_{i\in R}B_{i,j}}{|R|}<\frac{9}{8}\bigr]\leq \frac{1}{2N\poly(n)}$.
The union bound over all $N$ columns yields the claim.

We now prove (ii). The proof again follows from Chernoff bounds, and is the key insight in~\cite{FNS07} (also
utilized in~\cite{hazankrauthgamer}).
Fix some $R\sse V$ with $|R|=c_2\log n$. We prove the statement for $B$; the proof for $D$
is identical.
For a given $j\in[N]$, we have
$\Pr[\exists i\in R\text{ s.t. }B_{i,j}\neq 2-Z]=1-\bigl(\frac{3}{4Z}\bigr)^{|R|}$.
So
\[
\Pr[\forall j\in[N], \exists i\in R\text{ s.t. }B_{i,j}\neq 2-Z]
=\bigl[1-\bigl(\frac{3}{4Z}\bigr)^{|R|}\bigr]^N.
\]
Taking the union bound over all $R\sse V$ with $|R|=c_2\log n$,
we obtain
\begin{align*}
\Pr\Bigl[\exists R\sse V&\text{ with $|R|=c_2\log n$ s.t. no $j\in[N]$ satisfies
    $B_{i,j}=2-Z$ for all $i\in R$}\Bigr] \\
& \leq {n\choose{c_2\log n}}\biggl[1-\Bigl(\tfrac{3}{4Z}\Bigr)^{|R|}\biggr]^N
\leq \exp\Bigl({c_2\log^2 n-N\bigl(\tfrac{3}{4Z}\bigr)^{c_2\log n}}\Bigr) \\
& \leq 1-1/\poly(n). \qedhere
\end{align*}
\end{proof}

\begin{lemma}[Proposition B.2 in~\cite{Dughmi14} quantified] \label{dug:bidens}
Let $\ve>0$, and $c\geq 24 \cdot 2.1 \cdot\max\bigl\{1,\frac{1+\ve}{\ve^2}\bigr\}$.
For all $n\geq 2$, we have

\[ \Pr\bigl[\exists S, T\sse V\text{ with }|S|,|T|\geq c\log n,\
\bidens_{G^-}(S,T)>\frac{1+\ve}{2}\bigr]\leq\frac{2}{n^3} \]
\end{lemma}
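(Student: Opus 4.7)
The plan is a standard Chernoff-plus-union-bound argument. First, I reduce to pairs $(S,T)$ of minimum admissible size: given any $(S,T)$ with $|S|,|T|\geq c\log n$ and $\bidens_{G^-}(S,T)>\frac{1+\ve}{2}$, choose $S'\sse S$ and $T'\sse T$ uniformly at random among subsets of size $\lceil c\log n\rceil$. A direct calculation shows $\Ex_{S',T'}[\bidens_{G^-}(S',T')]=\bidens_{G^-}(S,T)$, so some such subpair already achieves bi-density exceeding $\frac{1+\ve}{2}$. Hence it suffices to union-bound over pairs of exact size $s=t=\lceil c\log n\rceil$, of which there are at most $\binom{n}{s}^2\leq n^{2c\log n}$.

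Next I fix such a pair and analyze $X:=\sum_{(u,v)\in S\times T}\mathbb{1}[\{u,v\}\in E^-]$, so that $\bidens_{G^-}(S,T)=X/(st)$. Writing $X=\sum_{\{u,v\}}c_{\{u,v\}}Z_{uv}$ over unordered vertex pairs, where $Z_{uv}\sim\text{Bern}(1/2)$ are mutually independent (edges of $G^-$ are included independently with probability $\frac{1}{2}$) and $c_{\{u,v\}}\in\{1,2\}$ equals $2$ precisely when both endpoints lie in $S\cap T$, a direct count yields $\Ex[X]=\frac{st-|S\cap T|}{2}\leq\frac{st}{2}$ and $\sum_{\{u,v\}}c_{\{u,v\}}^2\leq 2st$. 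A multiplicative Chernoff/Bernstein bound tailored to summands in $\{0,1,2\}$ then gives
\[ \Pr\Bigl[\bidens_{G^-}(S,T)>\tfrac{1+\ve}{2}\Bigr]\;\leq\;\exp\!\Bigl(-\Om\bigl(\tfrac{\ve^2\,st}{1+\ve}\bigr)\Bigr). \]

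Combining the Chernoff bound with the union bound and substituting $s=t=\lceil c\log n\rceil$, the overall failure probability is bounded by $n^{2c\log n}\exp\bigl(-c_0\ve^2(c\log n)^2/(1+\ve)\bigr)$ for an explicit constant $c_0$ coming from the Chernoff inequality. The hypothesis $c\geq 24\cdot 2.1\cdot\max\{1,(1+\ve)/\ve^2\}$ is calibrated so that the Chernoff exponent dominates the $2c(\log n)^2$ contributed by the union bound by a slack of $3\log n+O(1)$, yielding overall probability at most $\frac{2}{n^3}$.

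The main obstacle is bookkeeping: selecting the correct form of the Chernoff/Bernstein inequality for weighted sums in $\{0,1,2\}$, converting log-bases consistently, and extracting the explicit factor $24\cdot 2.1$. Nothing here is conceptually deep, but the precise choice of Chernoff variant (and careful use of the bound $\sum c_{\{u,v\}}^2\leq 2st$) is what pins down the concrete constant in the hypothesis.
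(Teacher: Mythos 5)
Your argument is sound and is the natural route to this concentration bound; note, though, that the paper does not prove Lemma~4.8 at all---it is stated as a quantified version of Proposition~B.2 in Dughmi's paper and used as a black box---so there is no in-paper proof to compare against. Your three steps (reduce to minimum-size pairs by a subsampling/averaging argument, union-bound over $\binom{n}{s}^2$ pairs, Bernstein bound on the weighted sum over unordered vertex pairs) are exactly what one would do, and the decomposition $X=\sum_{\{u,v\}}c_{\{u,v\}}Z_{uv}$ with $c_{\{u,v\}}\in\{1,2\}$ is the correct way to handle the dependence between $\mathbb{1}[\{u,v\}\in E^-]$ and $\mathbb{1}[\{v,u\}\in E^-]$ when $S\cap T\neq\emptyset$. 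Two small points worth making explicit in a polished write-up: the subsets $S'\subseteq S$ and $T'\subseteq T$ must be drawn \emph{independently} for the expectation identity $\Ex[\bidens_{G^-}(S',T')]=\bidens_{G^-}(S,T)$ to fall out of linearity; and the calculation $\sum_{\{u,v\}}c_{\{u,v\}}^2 = st - 2|S\cap T| + |S\cap T|^2 \leq 2st$ (with equality at $S=T$) should be checked rather than just asserted. The remaining issue---pinning down the factor $24\cdot 2.1$ exactly, including the base of the logarithm---is indeed mechanical bookkeeping, and your identification of roughly $c\geq C\cdot\frac{1+\ve}{\ve^2}$ as what the exponent needs to dominate the $n^{2c\log n}$ union-bound term (with an extra $3\log n$ of slack) is correct.
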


\begin{lemma}[Corollary of Lemma~\ref{dug:bidens}] \label{lem:bidensity}
For $c_2 = 10^5$ and $\eps = 0.03$
With high probability, for all $S, T\sse V$ with $|S|, |T| \ge c_2 \log n$,
$\bidens_{G^-}(S,T)\leq\frac{1+\e}{2}$.
\end{lemma}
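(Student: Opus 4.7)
The plan is to derive Lemma~\ref{lem:bidensity} as an immediate application of Lemma~\ref{dug:bidens} with the parameter choices $\ve = 0.03$ and $c = c_2 = 10^5$. First, I would verify the hypothesis $c \geq 24 \cdot 2.1 \cdot \max\{1, (1+\ve)/\ve^2\}$ of Lemma~\ref{dug:bidens}. Since $\ve = 0.03$, we have $(1+\ve)/\ve^2 = 1.03/(0.03)^2 < 1200$, so the right-hand side is bounded by $24 \cdot 2.1 \cdot 1200 = 60{,}480$, which is strictly less than $c_2 = 10^5$. Thus the hypothesis of Lemma~\ref{dug:bidens} is satisfied for our chosen parameters.

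Next, invoking Lemma~\ref{dug:bidens} with these parameters gives $\Pr[\exists S, T \subseteq V \text{ with } |S|, |T| \geq c_2 \log n,\ \bidens_{G^-}(S,T) > (1+\ve)/2] \leq 2/n^3$. Taking complements, with probability at least $1 - 2/n^3 = 1 - 1/\poly(n)$, every pair $S, T$ with $|S|, |T| \geq c_2 \log n$ satisfies $\bidens_{G^-}(S,T) \leq (1+\ve)/2$. This is precisely the ``with high probability'' conclusion required by the corollary. There is no genuine obstacle here: the entire content is the arithmetic verification of the constant condition, after which the probabilistic bound transfers verbatim from the parent lemma.
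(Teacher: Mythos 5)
Your proposal is correct and is exactly the intended derivation: the paper states the result as a ``Corollary of Lemma~\ref{dug:bidens}'' without further proof, and the only content is verifying the constant condition, which you do correctly (for $\eps = 0.03$, $(1+\eps)/\eps^2 \approx 1144 < 1200$, so $24 \cdot 2.1 \cdot 1200 = 60{,}480 < 10^5$). The $2/n^3$ failure probability from Lemma~\ref{dug:bidens} gives the claimed high-probability statement directly.
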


\subsubsection{Completeness proof in Lemma~\ref{lem:main}}
We use a deterministic signaling scheme that groups together states of nature in the same
planted clique. Let $S_1, \ldots, S_r$ be the planted cliques in $G$ in some arbitrary
order.
Let $S_i' = S_i \setminus \bigcup_{1 \le j < i} S_j$ for $i\in[r]$ be the set of vertices in $S_i$ that do not appear in earlier cliques. Define $A:=V\sm\bigcup_j S_j$ as the remaining vertices. Finally, $S_0' = A \cup \bigl\{ v \in S_i' : |S_i'| < \frac{k}{10^4}\bigr\}$. Our signaling scheme is $(\Sg,\alpha,\mu)$ where the set of signals is $\Sg=\{0\}\cup\bigl\{i\in[r]: |S'_i|\geq\frac{k}{10^4}\bigr\}$. For each signal $\sg$, $\alpha_\sg =\frac{|S_\sg'|}{n}$ and $\mu_\sg$ is the uniform distribution over $S_\sg'$. Note that the signaling scheme is independent of $B$ and $D$.

For posterior $\mu^\sg$, where $\sg\neq 0$, consider the strategy $x^\sg$ where $\Row$
plays the uniform distribution on $S_\sg'$.
Claim~\ref{largesmallsup}(i) implies that $\Col$'s best response to $x^\sg$ is
to play column 1.
Therefore,
$\val(\mu^\sg)\geq\bidens(S'_\sg,S'_\sg)=1 -\frac{1}{|S_\sg'|}\geq 1-\frac{10^4}{k}$.
With $r = \frac{5n}{k}$, we have
$|A|\leq e^{0.1}\cdot\Ex[|A|]\leq e^{-4.9}n$ with high probability due to
standard Chernoff bounds (since the events $\{v\in A\}_{v\in V}$ are negatively
correlated).
Therefore, for suitably large $n$, with high probability,
$|S'_0|\leq |A|+\frac{5n}{k}\cdot\frac{k}{10^4}\leq e^{-4.7}n$.
So, with high probability, the signaling scheme has value at least
{$\sum_{\sg\in\Sg\cap[r]} \alpha_\sg\bigl(1-\frac{10^4}{k}\bigr)
\ge(1 - e^{-4.7})\bigl(1 - \frac{10^4}{k}\bigr)\ge 0.99$.}

\subsubsection{Soundness proof in Lemma~\ref{lem:main}}
For a signal $\sg\in\Sg$ with corresponding posterior $\mu_\sg$,
let $x_\sg$ denote $\Row$'s equilibrium strategy for $\A^{\mu_\sg}$.
We first filter out the set of ``useful'' signals, i.e., those with relatively high value.
Let $\Sg_1=\{\sg\in\Sg: \val(\mu_\sg)\geq 1-\sqrt{\e}\}$.
We show that for all $\sg\in\Sg_1$, $\mu_\sg$ and $x_\sg$ place a significant mass
over a large set of nodes, and use this insight to extract clusters.
Fix $\e=0.03$.
For every signal $\sg\in\Sg_1$,
define $T_\sigma = \set{i: e_i^T A_G\mu_\sigma \ge 1 - \frac{Z\sqrt{\e}}{Z-2}}$,
and let $\tilde{x}_\sigma$ be the uniform distribution on $T_\sigma$.
We output $\T = \set{T_\sigma : \sigma \in \Sigma_1}$.

We show that $\T$ satisfies condition \eqref{pcover} in Lemma~\ref{lem:family_t}.
The value of the signaling scheme is
  $\sum_{\sigma \in \Sigma} \alpha_\sigma \val(\mu_\sigma) \ge 1 - \epsilon$.
Noting that $\val(\mu)\leq 1$ for all $\mu$, by Markov's inequality, we have
$\al(\Sigma_1) \ge 1 - \sqrt{\epsilon}$. (Given a vector $v\in\R^k$, and $S\sse[k]$, we
use $v(S)$ to denote $\sum_{i\in S}v_i$.)
Assume that the high probability event in Claim~\ref{largesmallsup}(ii) happens.

Fix $\sg\in\Sg_1$.
For any $R\sse V$ with $|R|\leq c_2\log n$, we must have
$x_\sg(R)\leq\frac{2}{Z}$ and $\mu_\sg(R)\leq\frac{2}{Z}$. Otherwise, suppose
$x_\sg(R)>\frac{2}{Z}$ (the argument for $\mu_\sg$ is similar). Then, considering the
column $j$ of $B$ having $B_{i,j}=2-Z$ for all $i\in R$, we have
$\sum_{i\in[n]}(x_\sg)_iB_{i,j}\leq (2-Z)x_\sg(R)+2\bigl(1-x_\sg(R)\bigr)<0$, which
implies that $\val(\mu_\sg)<0$.
Now since $1-\sqrt{\e}\leq\val(\mu_\sg)\leq 1$, by the definition of $T_\sg$ and Markov's
inequality, we have $x_\sg(T_\sg)\geq\frac{2}{Z}$, and hence
$|T_\sigma| \ge c_2 \log n$.
We now switch from $x_\sg$ to $\tx_\sg$ in order to relate the value of the signaling
scheme to bi-density and deduce that $\T$ satisfies condition \eqref{pcover}.
As before, $G^-$ are the background edges and $G^+$ are the clique edges added in
  steps (1) and (2) of Definition \ref{def:pcover} respectively,
  and $A_G^-$ and $A_G^+$ are the corresponding adjacency matrices.
  Let $A_G^i$ be the adjacency matrix of the clique $S_i$.
Note that $A_G\leq A_G^-+A^+_G\leq A^-_G+\sum_{i=1}^rA_G^i$.

Let $R$ denote the $c_2 \log n$ largest entries in $\tilde{x}_\sigma^TA_G^-$,
  and let $\tilde{\mu}_\sigma$ be the uniform distribution on $R$.
Since $\tilde{\mu}_\sigma$ and $\tilde{x}_\sigma$ are uniform distributions over $R$ and
$T_\sg$ respectively (which have size at least $c_2\log n$), we have
$\tx_\sg^TA_G^-\tmu_\sg=\bidens(T_\sg,R)\leq\frac{1+\e}{2}$ due to
Lemma~\ref{lem:bidensity}.
Moreover,
$\mu_\sigma(R) \le \frac{1}{10}$,
  and since the maximum entry of $\tilde{x}_\sigma^T A_G^-$
  outside of $R$ is at most the average entry in $R$, we have
$\tilde{x}_\sigma^T A_G^- \mu_\sigma
  \le \frac{1}{10} + \frac{9}{10} \cdot \tilde{x}_\sigma^T A_G^- \tilde \mu_\sigma<0.6$.

Finally, we also have
  $\sum_{\sigma \in \Sigma_1} \alpha_\sigma (\tilde{x}_\sigma^TA_G\mu_\sigma)
  \ge (1 - \sqrt{\epsilon})\bigl(1 - \tfrac{Z\sqrt{\e}}{Z-2}\bigr) > 0.85$.
Therefore,
\begin{equation*}
\begin{split}
\frac{1}{4}
    &< \sum_{\sigma \in \Sigma_1} \alpha_\sigma\tilde{x}_\sigma^T (A_G - A_G^-) \mu_\sigma
    \le \sum_{\sigma \in \Sigma_1} \alpha_\sigma \sum_{i=1}^r \tilde{x}_\sigma^T A_G^i\mu_\sigma \\
& = \sum_{i=1}^r \sum_{\sigma \in \Sigma_1} \alpha_\sigma \mu_\sigma(S_i) \frac{|T_\sigma \cap S_i|}{|T_\sigma|}
\le \sum_{i=1}^r \Bigl( \sum_{\sigma \in \Sigma_1} \alpha_\sigma \mu_\sigma(S_i) \Bigr)
\left( \max_{T \in \T} \frac{|T \cap S_i|}{|T|} \right) \\
& \overset{(**)}{\le}
\sum_{i=1}^r \frac{|S_i|}{n} \left( \max_{T \in \T} \frac{|T \cap S_i|}{|T|} \right)
    = \frac{5}{r} \sum_{i=1}^r \left( \max_{T \in \T} \frac{|T \cap S_i|}{|T|} \right).
\end{split}
\end{equation*}
Inequality $(**)$
follows since for every $v\in\Tht$, we have
$\sum_{\sg\in\Sg_1}\al_\sg(\mu_\sg)_v$ is at most
$\sum_{\sg\in\Sg}\al_\sg(\mu_\sg)_v=\ld_v=\frac{1}{n}$.
Therefore
$\frac{1}{r} \sum_{i=1}^r \left( \max_{T \in \T} \frac{|T \cap S_i|}{|T|} \right) \ge \frac{1}{20}$.
This implies that at least a $\frac{1}{39}$-fraction of $S_1, \ldots, S_r$
satisfy $\max_{T\in\T}\frac{|T\cap S_i|}{|T|}\geq\frac{1}{40}$.
Since $|T| \ge c_2 \log n$ for all $T\in\T$, $\T$ satisfies condition \eqref{pcover} in Lemma~\ref{lem:family_t}.

\subsection{A PTAS for structured extended security games}
\label{ptasextsec} \label{sec:algos}
We now devise a PTAS for a structured class of extended security games (Theorem~\ref{ptas}).
First, we reduce the signaling problem to the dual
signaling problem using the ellipsoid method (Theorem~\ref{sep2opt}). This reduction applies to
{\em all} Bayesian zero-sum games. Next, we devise a PTAS for the dual signaling problem
for our class of extended network security games (Theorem~\ref{dspptas}).

\begin{theorem}[Dual signaling to signaling] \label{lem:septoopt} \label{sep2opt}
A polytime algorithm for the dual signaling problem with precision $\ve$ gives a
$5\ve$-approximation algorithm for the signaling problem. In particular, a PTAS for
the dual signaling problem yields a PTAS for the signaling problem.
\end{theorem}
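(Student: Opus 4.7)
The plan is to run the ellipsoid method on the dual LP \eqref{dual}, using the dual signaling algorithm with suitable precision as an approximate separation oracle, and then recover a near-optimal signaling scheme by solving a finite restricted version of \eqref{primal} via LP duality. First, I would argue that since $\val(\mu)\in[-1,1]$ for every $\mu\in\simp_M$, we can restrict the dual variables to $w\in[-1,1]^M$ without loss of generality (any $w_\theta<-1$ violates the constraint at $\mu=e_\theta$, and raising $w_\theta$ above $1$ never helps the objective), so that \eqref{dual} becomes an LP of polynomial encoding length in $M$, modulo its infinite constraint set.

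Next, I would convert the dual signaling algorithm with precision $\ve$ into an approximate separation oracle for \eqref{dual}. Given $w\in[-1,1]^M$, invoke dual signaling. If it returns case (ii), then $w^T\mu\geq\val(\mu)+\ve$ for every $\mu\in\simp_M$, certifying that $w$ is feasible for \eqref{dual} with $\ve$ slack, and we use the objective cut $\{w':(w')^T\ld<w^T\ld\}$. If it returns case (i) with $\mu^*$ satisfying $\val(\mu^*)\geq w^T\mu^*-\ve$, then $w^T\mu^*\leq\val(\mu^*)+\ve$, so the constraint $(w')^T\mu^*\geq\val(\mu^*)$ is violated by $w$ up to $\ve$, and I would output the hyperplane $\{w':(w')^T\mu^*\geq\val(\mu^*)\}$, which is valid for \eqref{dual} and separates $w$ from the slightly tightened region $\{w':(w')^T\mu\geq\val(\mu)+2\ve\,\forall\mu\}$. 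Using Theorem~\ref{ellipthm}(i) combined with a binary search on $w^T\ld$ (to reduce optimization to feasibility), the ellipsoid method then produces, in polytime, a point $\tilde w$ with $\tilde w^T\ld\leq\OPT+O(\ve)$ together with a polynomial-size collection $M'\subseteq\simp_M$ of posteriors returned as separators during the run.

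Finally, I would form the restricted primal LP
\[
\max\ \sum_{\mu\in M'}\al_\mu\val(\mu) \quad \text{s.t.}\quad \sum_{\mu\in M'}\al_\mu\mu=\ld,\ \al\geq 0,
\]
which is finite-dimensional and solvable in polytime. Its LP dual is \eqref{dual} with constraints only for $\mu\in M'$, and by LP duality the two have equal optimal value. The key observation is that at termination, the ellipsoid has certified (approximate) optimality of $\tilde w$ using only the cuts indexed by $M'$, so the restricted dual's optimum is at least $\OPT-O(\ve)$; hence so is the restricted primal's optimum, and its solution is a signaling scheme of value at least $\OPT-5\ve$. The infinite-dimensionality of the original \eqref{primal} is not an issue here because Caratheodory (noted earlier) shows $M+1$ posteriors suffice, and we work entirely with the finite restricted LP.

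The main obstacle is the quantitative bookkeeping to pin down the precise constant $5$ rather than a generic $O(1)$: one must carefully track how the oracle's additive slack $\ve$ propagates through the tightened vs.\ relaxed dual regions, the binary search on the dual objective, and the LP-duality step. The underlying $\ve$-perturbation relating $\min_{w\in F}w^T\ld$ to $\min_{w\in F_{c\ve}}w^T\ld$ is itself easy (add $c\ve\mone$ to any dual-feasible $w$), but tallying these errors through the ellipsoid analysis cleanly is the technical core of the argument.
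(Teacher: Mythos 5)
Your high-level plan is the same as the paper's: run the ellipsoid method on \eqref{dual} using the dual-signaling algorithm as an $\ve$-approximate separation oracle, binary-search the objective value, collect the polynomially many cuts emitted during a certifying run, and solve the restricted primal LP over just those posteriors, invoking LP duality to bound the loss. The final bookkeeping you defer is also exactly where the paper spends most of its effort, so the structure matches.

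There is, however, one substantive piece you elide, and the justification you offer in its place does not actually work. You say the infinite-dimensionality of \eqref{primal}/\eqref{dual} is ``not an issue'' because Caratheodory guarantees $M+1$ posteriors suffice. That is a red herring: Caratheodory says the \emph{optimal} scheme has small support, but it says nothing about whether the specific posteriors your oracle happens to return form a good restricted LP, nor does it make the constraint set of \eqref{dual} amenable to Theorem~\ref{ellipthm}(i), which requires a polytope whose constraints \emph{and} whose returned separating hyperplanes all have a fixed bounded encoding length $L$. The posteriors $\mu$ output by an arbitrary black-box dual-signaling algorithm need not have controlled encoding length, and for such a $\mu$ the right-hand side $\val(\mu)$ of the cut $w^T\mu\ge\val(\mu)$ must also be produced exactly to state the hyperplane. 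The paper handles this by passing to a $\dt$-net $S_\dt$ of $\simp_M$ with $\dt=\ve/M$: Claim~\ref{perterr} rounds the oracle's $\mu$ to a nearby grid point $\hmu\in S_\dt(\mu)$ while losing only $M\dt=\ve$ in the quantity $w^T\mu-\val(\mu)$, so all cuts live on a fixed finite grid with bounded encoding length and exactly computable $\val$ values; Lemma~\ref{clm:finiteapprox} then certifies that the grid-restricted primal $\opt\Pd$ is within $\ve$ of $\opt\eqref{primal}$. This rounding costs one of the five $\ve$'s in the final bound (the others come from the oracle slack, the $2\ve$ tightening in the separation step, and the binary-search margin), so it is not optional if you actually want the stated constant. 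Your remark that one can shift a dual-feasible $w$ by $c\ve\mone$ is correct and is implicitly used, but it does not substitute for the rounding step; you would still need something like Claim~\ref{perterr} and Lemma~\ref{clm:finiteapprox} to make the ellipsoid argument rigorous and to relate the restricted LP back to the true signaling optimum.
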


To prove Theorem~\ref{sep2opt},
we utilize the ellipsoid method, specifically, part (i) of Theorem~\ref{ellipthm},
adapting the standard transformation from separation to optimization to take into account
the additive error in the dual separation problem.
To circumvent the technical difficulties caused by the infinite-dimensionality of
\eqref{primal}, we approximate \eqref{primal} by a finite-dimensional LP, where we
restrict the variables in \eqref{primal}, and, analogously the constraints in \eqref{dual}
to a suitable $\dt$-net of $\simp_M$.
Let $\I=\bigl(\Tht,\{\A^\tht\},\ld\bigr)$ be a Bayesian zero-sum game. Recall that
$|\A^\tht_{i,j}|\leq 1$ for all $\tht, i, j$.
For $\delta \in (0,1]$ with $1/\delta \in \mathbb{Z}$, and $\mu \in \simp_M$, define
\[
S_\delta:=\bigl\{\mu' \in \simp_M: \, \mu_\theta' / \delta \in \mathbb{Z} \quad \forall \theta\in \Theta\bigr\},
\quad
S_\delta(\mu) := \{\mu' \in S_\delta : \Vert \mu-\mu' \Vert_\infty \le \delta\}.
\]

\begin{claim} \label{perterr}
Fix $\mu\in\simp_M$. For any $\mu' \in S_\delta(\mu)$, we have
$\vert \val(\mu)-\val(\mu') \vert\le M \delta$. Hence, we can efficiently find
$\hmu\in S_\dt(\mu)$ such that $w^T\hmu-\val(\hmu)\leq w^T\mu-\val(\mu)+M\dt$.
\end{claim}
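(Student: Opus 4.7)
The plan is to handle the two assertions separately. For the first, I would use $\val(\mu) = \max_{x\in\simp_r}\min_{j\in[c]}(x^T \A^\mu)_j$ and note that the only dependence on $\mu$ is through $\A^\mu = \sum_\theta \mu_\theta \A^\theta$. For $\mu' \in S_\delta(\mu)$, each entry of $\A^\mu - \A^{\mu'}$ equals $\sum_\theta(\mu_\theta-\mu'_\theta)\A^\theta_{i,j}$, which is bounded in absolute value by $\|\mu-\mu'\|_1 \le M\delta$ since $|\A^\theta_{i,j}|\le 1$ and $\|\mu-\mu'\|_\infty \le \delta$. Consequently, for every $x\in\simp_r$ and $j\in[c]$, $(x^T \A^\mu)_j$ differs from $(x^T \A^{\mu'})_j$ by at most $M\delta$, and the same bound propagates through the $\min_j$ and then the $\max_x$, giving $|\val(\mu)-\val(\mu')| \le M\delta$.

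For the second assertion, I would give an explicit polynomial-time algorithm that produces $\hmu$. Write $\mu_\theta = \delta a_\theta + \delta f_\theta$ with $a_\theta\in\ZZ_{\ge 0}$ and $f_\theta\in[0,1)$, and set $k=\sum_\theta f_\theta$. Since $\sum_\theta \mu_\theta=1$ and $1/\delta\in\ZZ$, the quantity $k$ is a nonnegative integer. Every $\hmu\in S_\delta(\mu)$ must have $\hmu_\theta\in\{\delta a_\theta,\delta(a_\theta+1)\}$ for each $\theta$, and the constraint $\sum_\theta \hmu_\theta = 1$ forces exactly $k$ of the coordinates with $f_\theta>0$ to be rounded up. The algorithm then sorts $\{\theta:f_\theta>0\}$ by $w_\theta$ and rounds up the $k$ coordinates with the smallest $w_\theta$ values, rounding the rest down.

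To verify $w^T\hmu \le w^T\mu$, I would observe that both quantities share the term $\delta\sum_\theta w_\theta a_\theta$, reducing the comparison to $\sum_{\theta\text{ rounded up}} w_\theta \le \sum_\theta f_\theta w_\theta$. The right-hand side is a ``fractional'' selection with weights $f_\theta\in[0,1]$ summing to $k$ on the index set $T=\{\theta:f_\theta>0\}$, so by the standard observation that the LP relaxation $\min\{\sum_{\theta\in T} y_\theta w_\theta : y\in[0,1]^T,\sum_\theta y_\theta=k\}$ is attained at the integral $y$ picking the $k$ smallest $w_\theta$, the greedy integer choice is a minimizer and hence no larger than the fractional value at $y=f$. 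Combined with $\val(\hmu) \ge \val(\mu) - M\delta$ from the first part, this yields $w^T\hmu - \val(\hmu) \le w^T\mu - \val(\mu) + M\delta$, as required.

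I do not expect any serious obstacle. The only subtle point is the rearrangement argument comparing the integer greedy rounding against the ``fractional rounding'' implicit in $\mu$ itself; everything else (the entrywise bound on $\A^\mu-\A^{\mu'}$, the preservation of simplex membership of $\hmu$, and the $\ell_\infty$-closeness $\|\hmu-\mu\|_\infty\le\delta$) is immediate from the construction.
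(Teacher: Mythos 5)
Your proof is correct and follows the same two-step structure as the paper's: first bound $\lvert \val(\mu)-\val(\mu')\rvert$ by the entrywise bound $\lvert \A^\mu_{i,j}-\A^{\mu'}_{i,j}\rvert \le M\delta$, then produce $\hmu\in S_\delta(\mu)$ with $w^T\hmu\le w^T\mu$ and combine. The Lipschitz half is word-for-word what the paper does. For the second half the paper is terse: it observes that $\min_{\mu'\in S_\delta(\mu)} w^T\mu'$ ``can be cast as an LP,'' implicitly using that $\mu\in\conv(S_\delta(\mu))$ and that the convex hull is an integral polytope whose minimizer is therefore a grid point with $w^T\hmu\le w^T\mu$. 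You make this explicit with a greedy rounding construction and a direct rearrangement bound against the fractional selection $f$; this is the same mathematical content, unpacked, and arguably cleaner because it dodges having to justify integrality of $\conv(S_\delta(\mu))$.

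One small inaccuracy, harmless to your argument: the sentence ``Every $\hmu\in S_\delta(\mu)$ must have $\hmu_\theta\in\{\delta a_\theta,\delta(a_\theta+1)\}$'' is not quite true. When $f_\theta=0$ and $a_\theta\ge 1$, the value $\delta(a_\theta-1)$ also lies within $\delta$ of $\mu_\theta$, so $S_\delta(\mu)$ is strictly larger than the set you describe. Since you only need to \emph{exhibit} one $\hmu\in S_\delta(\mu)$ with $w^T\hmu\le w^T\mu$ (not the global minimizer over all of $S_\delta(\mu)$), the claim survives: your greedy $\hmu$ does lie in $S_\delta(\mu)$, it does satisfy the required inequality, and the rest of the argument goes through. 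If you wanted the sentence to be literally correct you could restrict it to the sub-grid $\{\mu': \mu'_\theta\in\{\delta a_\theta,\delta(a_\theta+1)\}\ \forall\theta,\ \sum_\theta\mu'_\theta=1\}$, which is all you actually use.
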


\begin{proof}
The entries in $\A^\mu$ and $\A^{\mu'}$ differ by at most $M\delta$. Hence for every
mixed-strategy profile $(x,y) \in \simp_r\times\simp_c$, we have
$|x^T (\A^\mu - \A^{\mu'}) y| \le M \delta$, and therefore
$\vert \val(\mu) - \val(\mu') \vert \le M \delta$.

We can efficiently find $\hmu\in S_\dt(\mu)$ that minimizes $w^T\mu'$ over $\mu'\in S_\dt(\mu)$
since this can be cast as an LP. Then, we have
$w^T\hmu-\val(\hmu)\leq w^T\mu-(\val(\mu)-M\dt)$.
\end{proof}

We work with the following finite-dimensional counterparts of \eqref{primal} and
\eqref{dual} and argue that this approximation only yields a small error.

\hspace*{-8ex}
\begin{minipage}[]{0.54\textwidth}
\begin{align*}
\begin{split}
\max & \quad \sum_{\mu\in S_\dt}\al_\mu\val(\mu) \\
\text{s.t.} & \quad \sum_{\mu\in S_\dt}\al_\mu \mu =\ld; \ \ \al \geq 0.
\end{split} \tag{P$_\dt$} \label{primd}
\end{align*}
\end{minipage}
\quad
\begin{minipage}[]{0.46\textwidth}
\begin{align*}
\begin{split}
\min & \quad w^T\ld  \\
\text{s.t.} & \quad w^T\mu \geq \val(\mu) \quad \forall \mu\in S_\dt.
\end{split} \tag{D$_\dt$} \label{duald}
\end{align*}
\end{minipage}

\medskip\noindent
Since $\lambda \in \conv(S_\delta(\lambda))$, \eqref{primd} is feasible for any
$\lambda\in \simp_M$.
Clearly, any solution to \eqref{primd} gives a solution to \eqref{primal} of equal
value. The converse is also approximately true.

\begin{lemma} \label{clm:finiteapprox}
Any feasible solution $\al$ to \eqref{primal} of value $v$ gives a solution
to \eqref{primd} of value at least $v - M \delta$.
Hence, $\opt\Pd\geq\opt\text{\eqref{primal}}-M\dt$.
\end{lemma}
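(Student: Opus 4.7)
The plan is to convert any feasible $\alpha$ for \eqref{primal} of value $v$ into a feasible solution $\alpha'$ for \eqref{primd} by ``snapping'' each posterior in the support of $\alpha$ onto nearby grid posteriors in $S_\delta$, and then bound the value loss by at most $M\delta$ using Claim~\ref{perterr}. By the Caratheodory observation recorded in Section~\ref{prelim}, I may assume $\alpha$ has finite support, so that the expressions below are ordinary finite sums rather than integrals against a measure on $\simp_M$.

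The construction is as follows. For each $\mu$ with $\alpha_\mu > 0$, use the fact (already noted in the paragraph preceding the lemma) that $\mu \in \conv(S_\delta(\mu))$ to pick nonnegative coefficients $\{\beta^\mu_{\mu'}\}_{\mu' \in S_\delta(\mu)}$ with $\sum_{\mu'} \beta^\mu_{\mu'} = 1$ and $\sum_{\mu'} \beta^\mu_{\mu'}\,\mu' = \mu$. Then set
\begin{equation*}
\alpha'_{\mu'} \;=\; \sum_{\mu\,:\,\mu' \in S_\delta(\mu)} \alpha_\mu\, \beta^\mu_{\mu'} \qquad \text{for each } \mu' \in S_\delta.
\end{equation*}
Feasibility for \eqref{primd} is a one-line pushforward check: $\alpha' \geq 0$ is immediate, and
\begin{equation*}
\sum_{\mu' \in S_\delta} \alpha'_{\mu'}\, \mu' \;=\; \sum_\mu \alpha_\mu \sum_{\mu' \in S_\delta(\mu)} \beta^\mu_{\mu'}\, \mu' \;=\; \sum_\mu \alpha_\mu\, \mu \;=\; \lambda.
\end{equation*}

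For the value, apply Claim~\ref{perterr} to get $\val(\mu') \geq \val(\mu) - M\delta$ for every $\mu' \in S_\delta(\mu)$. Since each $\beta^\mu$ is a probability distribution on $S_\delta(\mu)$ and $\sum_\mu \alpha_\mu = 1$ (because $\alpha$ gives a convex decomposition of the prior $\lambda$), this yields
\begin{equation*}
\sum_{\mu' \in S_\delta} \alpha'_{\mu'}\, \val(\mu') \;=\; \sum_\mu \alpha_\mu \sum_{\mu' \in S_\delta(\mu)} \beta^\mu_{\mu'}\, \val(\mu') \;\geq\; \sum_\mu \alpha_\mu\,\bigl(\val(\mu) - M\delta\bigr) \;=\; v - M\delta.
\end{equation*}
The ``hence $\opt\Pd \geq \opt\text{\eqref{primal}} - M\delta$'' conclusion follows by applying this to an optimal $\alpha$ for \eqref{primal}. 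The only ingredients are (i) the inclusion $\mu \in \conv(S_\delta(\mu))$, already stated in the text, and (ii) the pointwise perturbation bound of Claim~\ref{perterr}, so there is essentially no obstacle; the only item worth flagging is the Caratheodory-based reduction to finitely supported $\alpha$, which justifies treating the primal objective as a finite sum.
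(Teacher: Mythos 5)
Your proof is correct and is essentially the same as the paper's: you snap each posterior in the support of $\alpha$ onto a convex combination of grid points in $S_\delta(\mu)$ (your $\beta^\mu$ is the paper's $\tau^{(\mu)}$), define $\alpha'$ by the same pushforward, and bound the loss pointwise via Claim~\ref{perterr}. The only addition is your explicit appeal to the Caratheodory reduction to finite support, which the paper leaves implicit.
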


\begin{proof}
This is an easy consequence of Claim~\ref{perterr}.
For any $\mu \in S$, let $\tau^{(\mu)} \in \simp_{S_\delta(\mu)}$ be some convex decomposition of
$\mu$. Then
\[
\lambda_\theta = \sum_{\mu \in S} \al_\mu \mu_\theta
= \sum_{\mu \in S} \al_\mu \sum_{\mu' \in  S_\delta(\mu)} \tau_{\mu'}^{(\mu)} \mu'_\theta
= \sum_{\mu' \in S_\delta} \mu'_\theta  \sum_{\mu \in S} \al_\mu \tau_{\mu'}^{(\mu)} \,.
\]

\noindent Thus, setting $\al'_{\mu'} := \sum_{\mu \in S} \al_\mu \tau_{\mu'}^{(\mu)}$ for all
$\mu' \in S_\delta$, we obtain that $\al'$ is a feasible solution to \eqref{primd}.
To compare the objective values of $\al$ and $\al'$, note that
\begin{align*}
\sum_{\mu \in S} \al_\mu \val(\mu) & = \sum_\mu \al_\mu\val(\mu) \sum_{\mu' \in S_\delta(\mu)} \tau_{\mu'}^{(\mu)}
\le \sum_\mu \al_\mu \sum_{\mu' \in S_\delta(\mu)} \tau_{\mu'}^{(\mu)}\bigl(\val(\mu') + M \delta\bigr)  \\
& = \sum_{\mu' \in S_\delta} \al'_{\mu'} \val(\mu') + M \delta \, . \qedhere
\end{align*}
\end{proof}

Now the basic idea is to solve \eqref{duald} with the ellipsoid method using the
algorithm $\B$ to obtain a separation oracle for \eqref{duald} with an additive error. In
the course of solving \eqref{duald}, we also obtain a polynomial-size LP
consisting of the violated inequalities of \eqref{duald} returned by the separation oracle
during the execution of the ellipsoid method whose optimal value is the same as $\opt\Dd$.
Taking the dual of this compact LP yields an LP of the same form as \eqref{primd} but
with $\al_\mu$ variables for only polynomially many points in $S_\dt$; solving this yields
the desired approximate signaling scheme.
The additive error in the separation oracle for \eqref{duald} complicates the arguments
slightly.

We now discuss the details.
Set $\dt=\ve/M$.
Let $\B$ be the algorithm for solving the dual signaling problem with precision $\ve$.
For a given $\nu,\e\in\R$, consider the set
$Q(\nu,\e):=\{w\in\R^M: w^T\ld\leq\nu, \quad w^T\mu\geq\val(\mu)-\e \ \ \forall \mu\in S_\dt\}$.
Note that the constraints of $Q(\nu,\e)$ have encoding length
$\poly\bigl(M,\text{size of $(\ld,\nu,\dt,\e)$}\bigr)$.
For a given $\nu$ and $w\in\R^M$, we can determine if $w\in Q(\nu,\ve)$, or find a
hyperplane separating $w$ from $Q(\nu,-2\ve)$, as follows.
We first check if $w^T\ld\leq\nu$ and if not, then return this as the separating
hyperplane. We run $\B$ on the input $(\I,w,\ve)$.
If $\B$ determines that we are case (i), then it also returns $\mu\in\simp_M$ with
$\val(\mu) \ge w^T\mu - \ve$. By Claim~\ref{perterr}, we can then find $\hmu\in S_\dt(\mu)$ such
that $w^T\hmu-\val(\hmu)\leq 2\ve$, so we can use $w^T\hmu-\val(\hmu)\geq 2\ve$ to separate
$w$ from $Q(\nu,-2\ve)$. If $\B$ determines that we are in case (ii), then we are
certainly not in case (i), so we have $\val(\mu)\leq w^T\mu+\ve$ for all $\mu\in\simp_M$, which
implies that $w\in Q(\nu,\ve)$.

So for a fixed $\nu$, in polynomial time, the ellipsoid method either certifies that
$Q(\nu,-2\ve)=\es$ or returns a point in $Q(\nu,\ve)$. We find the smallest $\nu$ (via
binary search) such that the latter case happens; call this value $\nu^*$. Then,
\[
\nu^*\geq
\Bigl(\min\ \ w^T\ld \quad \text{s.t.} \quad w^T\mu\geq\val(\mu)-\ve \ \ \forall \mu\in S_\dt\Bigr)
=\opt\Pd-\ve.
\]
The equality above follows since the dual of the minimization LP is above is \eqref{primd}
with the objective function changed to
$\sum_{\mu\in S_\dt}\al_\mu\bigl(\val(\mu)-\ve\bigr)=\sum_{\mu\in S_\dt}\al_\mu\val(\mu)-\ve$.
For any $\e>0$, running the ellipsoid method for $\nu=\nu^*-\e$ yields a polynomial-size
certificate for the emptiness of $Q(\nu^*-\e,-2\ve)$ consisting of the inequality
$w^T\ld\leq\nu^*-\e$ and the polynomially many violated inequalities
$w^T\mu-\val(\mu)\geq 2\ve$ returned during the execution of the ellipsoid method.
Let $T\sse S_\dt$ be the polynomial-size set of points for which we obtain these violated
inequalities. By duality,
\begin{equation*}
\begin{split}
\nu^*-\e
& <\Bigl(\min w^T\ld \ \ \text{s.t.} \ \ w^T\mu\geq\val(\mu)+2\ve \ \forall \mu\in T\Bigr) \\
& =2\ve+
\Bigl(\max\ \sum_{\mu\in T}\al_\mu\val(\mu)\ \ \text{s.t.} \ \ \sum_{\mu\in T}\mu\al_\mu=\ld,
\ \ \al\geq 0\Bigr).
\end{split}
\end{equation*}
Thus, solving the polynomial-size LP inside the parentheses yields a signaling scheme of value at
least $\opt\Pd-3\ve-\e$, so taking $\e=\ve$ and using Lemma~\ref{clm:finiteapprox}, we obtain a signaling scheme of value at
least $\opt\text{\eqref{primal}}-5\ve$. \qed

\begin{definition}[\boldmath $\gm$-Lipschitz] \label{def:regular}
A matrix $A \in \mathbb{R}^{r \times c}$ is
{\em $\gm$-Lipschitz} if
{$\|x^TA - x'^TA\|_\infty\leq\gm\|x-x'\|_\infty$} for all $x$, $x' \in \simp_r$.
An extended security game specified by matrices $\bA, B, D$ (see \eqref{extsec1},
\eqref{extsec}) is \mbox{$\gm$-Lipschitz} if ${D}^T$ is $\gamma$-Lipschitz.
We place no constraints on the matrices $\bA$ and ${B}$.
\end{definition}

Observe that an extended security game specified by matrices $\bA, B, D$ (see \eqref{extsec1},
\eqref{extsec}) is $\gm$-Lipschitz if ${D}^T$ is $\gamma$-Lipschitz.
We place no constraints on the matrices $\bA$ and ${B}$.
We design a simple PTAS for the dual signaling problem on $\gm$-Lipschitz extended
security games, for constant $\gm$.
By Theorem~\ref{lem:septoopt}, this yields a PTAS for the
signaling problem for $\gm$-Lipschitz extended security games.

\begin{theorem} \label{dspptas} \label{ptas}
There is a PTAS for the dual signaling problem on $\gm$-Lipschitz
extended security games.
This yields a PTAS for the signaling problem on $\gm$-Lipschitz extended-security games.
\end{theorem}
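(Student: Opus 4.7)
By Theorem~\ref{sep2opt}, a PTAS for the dual signaling problem yields a PTAS for the signaling problem, so it suffices to construct a PTAS for the dual signaling problem on $\gm$-Lipschitz extended security games. The first step is a normalization: since $D^T$ is $\gm$-Lipschitz we have $\|d^\theta - d^{\theta'}\|_\infty \leq \gm$ for every pair of columns of $D$, so subtracting a reference column $d^1$ from $D$ and absorbing it into $\bA$ (i.e., replacing $\bA$ by $\bA + \mone_r (d^1)^T$ and $D$ by $D - d^1\mone_M^T$) produces an equivalent game with $\|D\|_{\max}\leq\gm$, and hence $\|D\mu\|_\infty\leq\gm$ uniformly in $\mu\in\simp_M$. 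The dual signaling objective then becomes $F(\mu) = \max_{x\in\simp_r}\{x^T B\mu + \min_j[(x^T\bA)_j + (D\mu)_j]\} - w^T\mu$, in which the ``column-shift'' term $(D\mu)_j$ has range bounded by $\gm$.

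The algorithm I would propose enumerates a polynomial-sized family of candidate posteriors, each parameterized by a bounded-size multiset of states, and solves a tractable LP for each candidate. Concretely, for $k = \poly(\gm,1/\e)$ independent of $M,r,c$, I would enumerate every multiset $S\subseteq[M]$ of size $k$ --- there are at most $\binom{M+k-1}{k} = M^{O(1/\e^2)}$ of these, polynomial for fixed $\e$ and $\gm$ --- form $\hat\mu_S = \tfrac{1}{k}\sum_{\theta\in S}e_\theta$, and compute $F(\hat\mu_S) = \val(\hat\mu_S)-w^T\hat\mu_S$ by solving the LP for the zero-sum game $\A^{\hat\mu_S}$. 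The algorithm outputs the $\hat\mu_S$ attaining the largest value.

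Correctness would follow from a sampling lemma: drawing $k$ i.i.d.\ samples from an optimal posterior $\mu^*$ yields, with constant probability, an empirical distribution $\hat\mu$ with $F(\hat\mu)\geq F^*-O(\e)$, and this $\hat\mu$ lies in the enumerated family. Writing $x^*$ for an optimal row strategy for $\mu^*$, the gap $F(\hat\mu)-F^*$ decomposes into three contributions: $(x^*)^T B(\hat\mu-\mu^*)$, $w^T(\hat\mu-\mu^*)$, and $\min_j[((x^*)^T\bA)_j + (D\hat\mu)_j]-\min_j[((x^*)^T\bA)_j + (D\mu^*)_j]$. The first two are scalar Chernoff bounds requiring only $k = O(1/\e^2)$ samples. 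The main obstacle --- and the place where the Lipschitz hypothesis is essential --- is bounding the third term by $O(\e)$ without a union bound across the $c$ columns, which would otherwise force $k = \Theta(\log c/\e^2)$ and yield only a quasi-polynomial enumeration. The crux of the argument must exploit the uniform bound $\|D\mu\|_\infty\leq\gm$ to control how much the identity and value of the argmin column can shift between $\mu^*$ and its empirical version $\hat\mu$, in order to obtain the desired $O(\e)$ bound with only $\poly(\gm,1/\e)$ samples.
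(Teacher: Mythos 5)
Your reduction to the dual signaling problem via Theorem~\ref{sep2opt} matches the paper, and the family you enumerate --- uniform distributions on size-$k$ multisets of states --- coincides, for $k=1/\ve'$, with the paper's grid $S_{\ve'}$. The gap is precisely at the step you flag as the crux, and I do not believe it can be closed along the lines you sketch. What the sampling lemma needs is $\|D\hat\mu - D\mu^*\|_\infty = O(\ve)$ with positive probability over the $k$ i.i.d.\ draws. Via the Lipschitz condition this reduces to $\|\hat\mu-\mu^*\|_\infty = O(\ve/\gm)$, but for an empirical distribution the quantity $\|\hat\mu-\mu^*\|_\infty$ is a maximum of $M$ coordinate deviations and concentrates only at scale $\Theta\bigl(\sqrt{\log M / k}\bigr)$; likewise, controlling $(D\hat\mu)_j$ directly forces a union bound over the $c$ columns because the argmin column for $\hat\mu$ is chosen adversarially by the sampling noise. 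Either route gives $k=\Theta(\log M/\ve^2)$ or $k=\Theta(\log c/\ve^2)$ and only quasi-polynomial enumeration; the normalization $\|D\|_{\max}\le\gm$ bounds the range of each entry but does nothing to decouple the columns.

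The paper's proof avoids probability entirely: writing $S_{\ve'}(\mu^*)$ for the grid points within $\ve'$ of $\mu^*$ in $\ell_\infty$-norm, one has $\mu^*\in\conv\bigl(S_{\ve'}(\mu^*)\bigr)$, so linearity of $\mu\mapsto x^{*T}B\mu - w^T\mu$ gives some $\mu'\in S_{\ve'}(\mu^*)$ at which this linear part is at least its value at $\mu^*$. Crucially, \emph{every} point of $S_{\ve'}(\mu^*)$ satisfies $\|\mu'-\mu^*\|_\infty\le\ve'$, so the Lipschitz condition gives the \emph{deterministic} bound $\|\mu'^T D^T - \mu^{*T}D^T\|_\infty\le\gm\ve'$, which holds simultaneously for all $c$ columns with no union bound and hence shifts the inner $\min_j$ by at most $\gm\ve'=\ve$. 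Your algorithm is fine; you should replace the sampling lemma by this convexity-plus-Lipschitz argument for the existence of a good grid point.
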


\begin{proof}
Given Theorem~\ref{sep2opt}, we only need to prove the first statement.
Let $(\I,w,\ve)$ be the input to the dual signaling problem where $\I$ is a
$\gm$-Lipschitz extended security game.
Set $\ve' = \ve/\gamma$.
Our algorithm simply
finds $\hmu=\argmax_{\mu \in S_{\ve'}}\bigl(\val(\mu) - w^T\mu\bigr)$ by exhaustive search.
If $\val(\hat{\mu})-w^T\hat{\mu}\geq 0$, we state that we are in case (i) and return
$\hat{\mu}$; else we state that we are in case (ii).

First, note that the algorithm runs in time
$\poly\bigl(\text{size of $\I$},M^{\frac{\gm}{\ve}}\bigr)$, since
$|S_{\ve'}|\le{M\choose{1/\ve'}}\bigl(\frac{1}{\ve'}\bigr)^{1/\ve'}$ (there are
${M\choose{1/\ve'}}$ choices for the support, and at most $\frac{1}{\ve'}$ choices for
each of the at most $\frac{1}{\ve'}$ coordinates in the support).

Let $\mu^*$ maximize $\val(\mu) - w^T\mu$, and $x^*$ be the equilibrium strategy
for the row player in the resulting zero-sum game.
We claim that $\val(\mu^*) -w^T\mu^* \le \val(\hat{\mu}) - w^T \hat{\mu} + \ve$, which
shows that we correctly solve the dual signaling problem: if case (i) applies, then
$\val(\hmu)-w^T\hmu\geq 0$; if case (ii) applies, then clearly, $\val(\hmu)-w^T\hmu<-\ve$.

We now prove the claim. Since $\mu^* \in \conv(S_{\ve'}(\mu^*))$, there exists some
$\mu' \in S_{\ve'}(\mu^*)$ such that
$x^{*T} {B} \mu^* - w^T \mu^*  \le x^{*T} {B} \mu' - w^T \mu'$.
Further, since ${D}^T$ is $\gamma$-Lipschitz, for all $j \in [c]$,
$\bigl(x^{*T} \bA + \mu^{*T} {D}^T \bigr)_j \le \bigl(x^{*T} \bA + \mu'^{T} {D}^T \bigr)_j + \gamma \ve'$.
Combining these inequalities yields that $\val(\mu')-w^T\mu'\geq \val(\mu^*)-w^T\mu^*-\ve$.
\end{proof}

\section{Bayesian network routing games} \label{sec:congestion}
We now consider the signaling problem in Bayesian network routing games and prove an
{\em optimal} inapproximability result for linear latency functions:
It is \nphard to obtain a multiplicative approximation better than $4/3$
(Theorem~\ref{thm:routinghardness}), and this approximation is achieved for linear latency
functions
by a simple signaling scheme that simply reveals the state of nature
(Theorem~\ref{fullrev}).

A {\em network routing game}
is a tuple
$\Gamma =\bigl(G=(V,E),\{l_e\}_{e\in E},\{(s_i, t_i,d_i)\}_{i\in[k]}\bigr)$, where $G$ is
a directed graph
with latency function $l_e:\R_+\mapsto\R_+$ on each edge $e$.
Each $(s_i, t_i,d_i)$ denotes a {\em commodity};
$d_i$ specifies the volume of flow routed from
$s_i$ to $t_i$ by self-interested agents, each of whom controls an infinitesimal amount of
flow and selects an $s_i$-$t_i$ path as her strategy.
A strategy profile thus corresponds to a multicommodity flow composed of $s_i$-$t_i$
flows of volume $d_i$ for all $i$; we call any such flow a feasible flow.
The latency on edge $e$ due to a flow $f$ is given by $l_e(f_e)$, where $f_e$ is the total
flow on $e$. The latency of a path $P$ is $l_P(f):=\sum_{e\in P}l_e(f_e)$.
The total latency of a flow $f$
is $C(l; f):=\sum_{e\in E}f_el_e(f_e)$;
an optimal flow is a feasible flow with minimum latency.
A feasible flow $f$ in a routing game is a {\em Nash flow} (also called
a {\em Wardrop flow}), if each player chooses a minimum latency path;
that is,
for all $i$, all $s_i$-$t_i$ paths $P$, $Q$ with $f_e>0$ for all $e\in P$, $l_P(f)\leq l_Q(f)$.
All Nash flows have the same total latency (see, e.g., \cite{RT02}).

In a {\em Bayesian network routing game}, the edge latency functions
$\{l_e^\theta\}_{e \in E}$ may depend on the state of nature $\theta\in\Tht$ (and, as
before, we have a prior $\ld\in\simp_\Tht$).
The principal seeks to {\em minimize} the latency of the Nash flow.
Given $\mu \in \simp_\Tht$,
the expected latency function on each edge $e$ is
$l_e^\mu(x_e) := \sum_{\theta \in \Theta} \mu_\theta l_e^\theta(x_e)$.
Define $\val(\mu):=C(l^\mu; f^\mu)$, where $f^\mu$ is the Nash flow for latency functions
$\{l^\mu_e\}$.
The {\em signaling problem in a Bayesian routing game} is to determine
$(\al_\mu)_{\mu\in\simp_M}\geq 0$ of finite support specifying a convex decomposition
of $\lambda$ (i.e., $\sum_{\mu\in\simp_M}\al_\mu\mu=\ld$) that minimizes
the expected latency of the Nash flow, $\sum_{\mu\in\simp_M} \al_\mu\val(\mu)$.

\smallskip
\begin{theorem}
For any $\epsilon > 0$, obtaining a $(4/3-\epsilon)$-approximation for the signaling
problem in Bayesian routing games is NP-hard, even in single-commodity games with linear
latency functions.
\label{thm:routinghardness}
\end{theorem}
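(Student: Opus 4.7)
The plan is a direct polynomial-time gap-preserving reduction from the problem of computing edge tolls $\tau \ge 0$ that minimize the total (latency + toll)-cost of the Nash flow under $l+\tau$, which is \nphard to approximate within a factor $\frac{4}{3}-\ve$ for linear latencies. Given a tolls instance $\Gm = (G=(V,E), \{l_e\}, s, t, d)$ with $l_e(x) = a_e x + b_e$, I would construct a Bayesian routing game on the same graph $G$ using $|\Tht| = |E|+1$ states of nature: a ``base'' state $\tht_0$ in which every edge has its original latency $l_e$, and, for each $e \in E$, a state $\tht_e$ in which the latency of $e$ is increased by a polynomially bounded amount $T$ while all other latencies are unchanged. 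The prior $\ld$ is chosen with $\ld_{\tht_0}$ close to $1$ and each $\ld_{\tht_e}$ small but positive.

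The key structural observation is that, under any posterior $\mu \in \simp_\Tht$, the expected latency on edge $e$ equals $l_e(x) + \mu_{\tht_e} T$, so $\mu$ is in bijective correspondence with an effective toll vector $\tau^\mu$ given by $\tau^\mu_e := \mu_{\tht_e} T$, and $\val(\mu)$ equals the (latency + toll)-cost $C(\tau^\mu)$ of the Nash flow under tolls $\tau^\mu$ in $\Gm$. Both inequalities between the two optima follow from this correspondence. For the lower bound, any signaling scheme $\{(w_\sg, \mu^\sg)\}$ has signaling cost $\sum_\sg w_\sg \val(\mu^\sg) = \sum_\sg w_\sg C(\tau^{\mu^\sg}) \ge \min_\tau C(\tau)$, which is the optimal tolls cost, since a weighted average of toll-costs is at least the minimum toll-cost. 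For the upper bound, given optimal tolls $\tau^*$, I would use a ``main'' signal whose posterior $\mu^*$ satisfies $\mu^*_{\tht_e} = \tau^*_e / T$ and $\mu^*_{\tht_0} = 1 - \sum_e \tau^*_e / T$, combined with ``residual'' signals chosen to satisfy the convex-decomposition constraint $\sum_\sg w_\sg \mu^\sg = \ld$. For an appropriate choice of $T$ and $\ld$, the residual signals contribute a cost that is negligible relative to $C(\tau^*)$, and so the optimal signaling cost agrees with the optimal tolls cost up to a vanishing additive term. A multiplicative $\bigl(\frac{4}{3}-\ve\bigr)$-approximation for signaling would then yield the same approximation for tolls, a contradiction.

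The main technical obstacle I expect is the upper-bound direction: the constraint $\sum_\sg w_\sg \mu^\sg = \ld$ forbids a single signal from realizing an arbitrary posterior, so the residual signals must be analyzed carefully. I would handle this by first arguing that optimal tolls can be taken a priori bounded (e.g., by the marginal-cost tolls $a_e d$), which lets $T$ be chosen polynomially, and by tuning $\ld$ to concentrate prior mass on $\tht_0$, which keeps both the total residual weight and the excess cost incurred by the residual signals small compared to the $\Theta(1)$ hardness gap of $\frac{4}{3}-\ve$. If a single residual signal turns out to be too costly to cover in one shot, an alternative is to use several ``main'' signals whose posteriors convexly combine to realize $\tau^*$ on each edge, spreading the residual mass over many signals each of which has effective toll close to $\tau^*$.
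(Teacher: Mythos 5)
Your reduction starts from the same source problem as the paper---the \nphard{}ness of computing optimal tolls for $C(l+\tau;f^{\NE}(\tau))$, established by Cole, Dodis, and Roughgarden---and your observation that a posterior $\mu$ over states $\{\tht_0\}\cup\{\tht_e\}_e$ corresponds exactly to a toll vector $\tau^\mu_e=\mu_{\tht_e}T$ with $\val(\mu)=C(l+\tau^\mu;f^{\NE}(\tau^\mu))$ is correct, and the recovery direction (extract a signal of below-average cost and read off its tolls) works as you describe. The fatal gap is in the upper-bound direction, and it is not a matter of careful parameter tuning: the single-copy construction cannot yield a near-optimal signaling scheme, for a structural reason that your proposed remedies (bounding tolls, tuning $\ld$, spreading over several main signals) do not touch.

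Here is the core tension. Write $K^*$ for the (unknown) set of edges that optimal $\{0,\infty\}$-tolls delete. To simulate deletion of $e\in K^*$ inside your ``main'' signal(s) $\sg$ of total weight $w_1\approx 1$, you need $\mu^\sg_{\tht_e}T$ large, say $\geq 8mL$, which (since $\sum_\sg w_\sg\mu^\sg_{\tht_e}=\ld_{\tht_e}$ and $w_1\approx1$) forces $T\ld_{\tht_e}=\Omega(mL)$. But the prior is oblivious to $K^*$ and puts the same mass $\ld_{\tht_e}$ on every edge, so $T\ld_{\tht_e}=\Omega(mL)$ holds for $e\notin K^*$ as well. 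The residual signals of weight $w_2=1-w_1$ must absorb essentially all of that $\tht_e$-mass for $e\notin K^*$; since each posterior is a distribution, $w_2\geq\sum_{e\notin K^*}\ld_{\tht_e}$, and the residual posteriors carry tolls $\tau^{\text{res}}_e=\ld_{\tht_e}T/w_2=\Omega\bigl(mL/w_2\bigr)$ on a constant fraction of $E$. A residual posterior that effectively deletes a constant fraction of the edges of $G$ either disconnects $s$ from $t$ or forces flow across edges with tolls of order $mL/w_2$, so $\val(\mu^{\text{res}})=\Omega(mL/w_2)$ and its weighted contribution is $w_2\cdot\Omega(mL/w_2)=\Omega(mL)\gg L^*$. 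Shrinking $\ld_{\tht_e}$ or raising $T$ changes nothing: the product $T\ld_{\tht_e}$ must be $\Omega(mL)$ for the main signal to work, and the residual contribution is governed by exactly that product. Spreading the mass over many main signals does not help either, since each of those must still simulate deletion of $K^*$ and one always ends up with signals that must absorb the $\tht_e$-mass on $e\notin K^*$.

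The paper's construction resolves this with two copies $G_1,G_2$ of $G$ joined to a new source $s$ and sink $t$ via ``switch'' edges $(s,s_1),(s,s_2)$ that carry almost all of the prior mass. Each of the two posteriors $\mu^1,\mu^2$ disconnects one copy by loading the corresponding switch edge, and then dumps the otherwise-problematic residual $\tht_e$-mass onto the edges of the \emph{disconnected} copy, where it is harmless; only the mass on $K_i$ in the connected copy matters, and it simulates removal of $K^*$. This is the device you are missing: some mechanism to route the leftover $\tht_e$-mass to states that cannot affect the Nash cost. Without it, the convex-decomposition constraint is not a ``technical obstacle'' that parameter tuning can clear, but an impossibility.
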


Let $\bigl(G,s,t,d\bigr)$ be a single-commodity routing game.
We reduce from the problem of determining edge tolls $\tau\in\R_+^E$ that minimize
$C\bigl(l+\tau; f^\NE(\tau)\bigr)$, where $l+\tau$ denotes the collection of latency
functions $\{l_e(x)+\tau_e\}_e$ and $f^\NE(\tau)$ is the Nash flow for $l+\tau$.
Note that $C(l+\tau; f)=\sum_e f_e(l_e(f_e)+\tau_e)$ takes into account the contribution
from tolls; we refer to this as the total {\em cost} of $f$.
By {\em optimal tolls}, we mean tolls $\tau$ that minimize
$C\bigl(l+\tau; f^\NE(\tau)\bigr)$.

\begin{theorem}[\cite{ColeDR06}]
There are optimal tolls where the toll on every edge is $0$ or $\infty$.
If  P $\neq$ NP, there is no
$\left(\frac{4}{3}-\epsilon\right)$-approximation algorithm for the problem of computing
optimal tolls in networks with linear latency functions, for any $\epsilon > 0$.
\label{thm:tolls}
\end{theorem}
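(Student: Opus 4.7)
The statement has two parts: a structural claim that optimal tolls can be taken to lie in $\{0,\infty\}^E$, and an inapproximability claim that no $(4/3-\eps)$-approximation is possible unless P$=$NP. The plan is to establish them in this order, since the structural claim reduces the optimization to the discrete problem of selecting a subnetwork $H \subseteq E$ and routing the Nash flow on $H$ with the original latencies, and that discrete problem is what the hardness reduction actually attacks.

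For the $\{0,\infty\}$ characterization, take any optimal toll vector $\tau^*$ with Nash flow $f^* = f^\NE(\tau^*)$, set $H := \{e : f^*_e>0\}$, and define $\hat\tau_e = 0$ on $H$, $\hat\tau_e = +\infty$ off $H$. The resulting Nash flow $g := f^\NE(\hat\tau)$ is supported on $H$ and pays no tolls, so its tolled cost equals $C(l;g)$. The key step is $C(l;g) \le C(l+\tau^*;f^*)$. I would attack this via the Beckmann potential: $g$ minimizes $\Phi_H(y) := \sum_{e\in H}\int_0^{y_e}l_e(x)\,dx$ over $s$-$t$ flows of value $d$ on $H$, so in particular $\Phi_H(g) \le \Phi_H(f^*)$. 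Combining this with the Wardrop property (all used $s$-$t$ paths in $g$ have common $l$-latency $L_g = C(l;g)/d$, and all used paths in $f^*$ have common $(l+\tau^*)$-latency $L^* = C(l+\tau^*;f^*)/d$) and the nonnegativity of $\tau^*$, I would derive $L_g \le L^*$, which is exactly the required inequality. Since $C(l+\hat\tau;g)=C(l;g)$, the 0/$\infty$ toll vector $\hat\tau$ matches the cost of $\tau^*$.

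For the inapproximability, I would reduce from the $2$-directed-disjoint-paths problem (2DDP): given a directed graph with terminals $(s_1,t_1),(s_2,t_2)$, decide whether edge-disjoint paths $s_1\leadsto t_1$ and $s_2\leadsto t_2$ exist; this is NP-complete. From a 2DDP instance I build a single-commodity routing game with unit demand from a new source $s$ to a new sink $t$ by placing the 2DDP graph between two Braess-like gadgets. The gadget attaches $s$ to $s_1,s_2$ and $t_1,t_2$ to $t$ via edges with latencies chosen from $\{0,1,x\}$ so that routing a half-unit from $s_1$ to $t_1$ and a half-unit from $s_2$ to $t_2$ along edge-disjoint paths gives total latency exactly $1$, while being forced to share an edge in the middle activates the Braess shortcut and drives Nash latency to $4/3$. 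By the $\{0,\infty\}$ characterization, it suffices to analyze the best subnetwork $H$: in the YES case, take $H$ to be the disjoint-paths subnetwork plus the ``non-Braess'' gadget edges, giving tolled cost $1$; in the NO case, every subnetwork admitting an $s$-$t$ flow either disconnects the demand or retains the Braess shortcut, forcing Nash cost $\ge 4/3-o(1)$.

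The main obstacle is the soundness side of the reduction: establishing that \emph{every} subnetwork $H$ carrying the demand unavoidably suffers the Braess blowup when 2DDP has no solution. This requires designing the gadget so that the \emph{only} way to avoid the Braess shortcut is to exhibit edge-disjoint $s_i\leadsto t_i$ paths in the 2DDP instance. I would follow the gadget construction from Roughgarden's network-design hardness proof, tuning the linear-latency coefficients so that (i) the YES/NO latency gap is exactly $4/3-\eps$, matching the linear-latency price of anarchy used in Theorem~\ref{fullrev}, and (ii) the gap is insensitive to how one distributes flow among subpaths within $H$.
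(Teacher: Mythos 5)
First, note that the paper does not prove this statement at all: Theorem~\ref{thm:tolls} is imported verbatim from~\cite{ColeDR06} and used as a black box in the reduction of Theorem~\ref{thm:routinghardness}. So you are reproving an external result. Your high-level plan does match how the result is actually established in the literature (the $\{0,\infty\}$ structural claim reduces optimal-toll computation to Braess-style network design, whose $\bigl(\frac{4}{3}-\eps\bigr)$-inapproximability for linear latencies comes from a 2DDP reduction with a Braess gadget). The inapproximability half of your sketch is therefore pointed in the right direction, though all of its real content (the soundness analysis of the gadget) is deferred to the cited construction.

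The genuine gap is in your proof of the structural claim. Your key step is $C(l;g)\le C(l+\tau^*;f^*)$ where $H=\{e: f^*_e>0\}$ and $g$ is the Nash flow on $H$ with zero tolls, and you propose to get it from the Beckmann comparison $\Phi_H(g)\le\Phi_H(f^*)$. That comparison relates sums of integrals $\int_0^{y_e}l_e$, not costs or common path latencies; even for linear latencies it only yields a factor-$2$ relation between $C(l;g)$ and $C(l;f^*)$, never the exact inequality $L_g\le L^*$. Worse, the inequality you need is simply \emph{false} for the support of a general toll vector, so no argument that ignores optimality of $\tau^*$ can work. Concretely, take the Braess network ($l_{sv}(x)=x$, $l_{vt}=1$, $l_{sw}=1$, $l_{wt}(x)=x$, $l_{vw}=0$, unit demand) with the single toll $\tau_{vw}=\frac{1}{4}$: the tolled equilibrium splits as $(\frac14,\frac14,\frac12)$ over the three paths with common tolled latency $\frac{7}{4}$, so $C(l+\tau;f)=\frac{7}{4}$, and its support is all five edges; but the untolled Nash flow on all five edges routes everything on the zigzag path and has cost $2>\frac{7}{4}$. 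Since your derivation (potential inequality plus Wardrop conditions plus $\tau^*\ge 0$) applies verbatim to this non-optimal $\tau$, it cannot be valid; and even restricted to optimal $\tau^*$, optimality of $\tau^*$ does not by itself preclude the support subnetwork from being worse, so the theorem has to be proved by exhibiting a (possibly different) subnetwork. This step needs the actual argument from~\cite{ColeDR06}; as written, the structural half --- and hence the reduction that the hardness half rests on --- is not established.
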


Let $\Gamma =\bigl(G=(V,E),l,s,t,d\bigr)$ be an instance of a routing game with linear
latencies. Let $m=|E|\geq 2$. By scaling latency functions suitably, we may assume that
$d=1$. Then, for any latency functions $l'$, the latency of the Nash flow for $l'$ equals
the common delay of all flow-carrying $s$-$t$ paths.
Let $L=C(l; f^\NE)$ be the latency of the Nash flow for $l$.
Let $\tau^*$ be optimal $\{0,\infty\}$-tolls,
$L^*=C\bigl(l+\tau^*, f^\NE(\tau^*)\bigr)$ be the optimal cost,
and $K^*:=\{e \in E: \tau_e^* =\infty\}$.
We can view $\tau^*$ as simulating the removal of edges in $K^*$.

We create the following Bayesian routing game.
Let $\bigl(G_1=(V_1, E_1),s_1,t_1\bigr)$ and $\bigl(G_2=(V_2, E_2),s_2,t_2\bigr)$
be two copies of $(G,l,s,t)$.
Add vertices $s$, $t$, and edges $(s,s_1)$, $(s,s_2)$ and
$(t_1,t)$, $(t_2,t)$. Call the graph thus created $H$.
For $e\in E_1\cup E_2$ with corresponding edge $e'\in E$, set the latency function in the new graph $h_e(x)=l_{e'}(x)$, and
set $h_e(x)=0$ for $e=(s,s_1),(s,s_2),(t_1,t),(t_2,t)$.
The states of nature correspond to edges in $H$.
We set $\ld_\tht=1/m^2$ for all $\tht\in E_1\cup E_2$;
the remaining $1-\frac{2}{m}$ mass is spread equally on $(s,s_1)$, $(s,s_2)$.
We set $h_e^\tht(x)=h_e(x)+8m^3L$ if $\tht=e$ and $h_e(x)$ otherwise.
Our Bayesian routing game is $\bigl((G,\{h^\theta_e\}_{\tht,e},s,t,d),\lambda\bigr)$.

The idea here is that state $\tht$ encodes the removal of edge $\tht$: specifically,
if $\mu_\tht=\Omega\bigl(\frac{1}{m}\bigr)$ for a posterior $\mu$, then $h^\mu$ simulates
removing edge $\tht$ due to the large constant term $8m^3L$.
Let $K_i$ be the edge-set corresponding to $K^*$ in $G_i$, for $i=1,2$.
The prior $\ld$ is set up so that: (a) it admits a convex
decomposition into posteriors $\mu^1,\mu^2$, where $h^{\mu^i}$ simulates that $G_i\sm K_i$
is connected to $s$ and $G_{3-i}$ is disconnected from $s$; and
(b) any convex-decomposition of $\ld$ must be such that a large weight is placed on
posteriors $\mu$, where $h^\mu$ simulates that only one of $G_i$ is connected to $s$, so
that $\{\mu_e8m^3L\}_{e\in E_i}$ yields tolls $\tau$ for edges in $E$ such that
$C\bigl(l+\tau, f^\NE(\tau)\bigr)\leq\val(\mu)$.
Lemma~\ref{lem:dualhardnessmult}
makes the statements in (a) and (b) precise, and Theorem~\ref{thm:routinghardness} follows
immediately from Lemma~\ref{lem:dualhardnessmult} and Theorem~\ref{thm:tolls}.

\begin{lemma}
There is a signaling scheme for
the above Bayesian routing game with latency $L^*$.
Further, given a signaling scheme $\al$ for the above Bayesian routing game with expected latency
$L'$, one can obtain tolls $\tau$ such that the routing game $(G,l+\tau,s,t,d)$ has Nash
latency at most $\frac{L'}{1-4/m}$.
\label{lem:dualhardnessmult} \label{sig2tolls}
\end{lemma}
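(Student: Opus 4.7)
I would prove the two halves of the lemma separately. For the completeness direction the plan is to exhibit an explicit two-signal scheme of expected latency exactly $L^*$. Take $\alpha_1=\alpha_2=1/2$ and set $\mu_1((s,s_2))=1-2/m$, $\mu_1(e)=2/m^2$ for every $e\in K_1\cup(E_2\setminus K_2)$, and $\mu_1=0$ elsewhere; define $\mu_2$ symmetrically by swapping the roles of $G_1$ and $G_2$. A direct count using $|K_1|=|K_2|=|K^*|$ and $|E_i|=m$ shows that $\mu_1,\mu_2$ are probability distributions on the edges of $H$ and that $\alpha_1\mu_1+\alpha_2\mu_2=\lambda$. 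Under signal $1$ the additive latency is $8m^3L(1-2/m)>L\geq L^*$ on $(s,s_2)$, at least $16mL>L^*$ on every edge of $K_1$, and zero on every edge of $E_1\setminus K_1$; consequently the Nash flow in $(H,h^{\mu_1})$ avoids $(s,s_2)$ and all of $K_1$, routes the entire unit of flow through the copy of $G\setminus K^*$ embedded in $G_1$, and has latency exactly $L^*$. Symmetry handles signal $2$, yielding expected latency $L^*$.

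For soundness, let $\alpha=\{(\alpha_\sigma,\mu_\sigma)\}$ be a scheme of expected latency $L'$. I may assume $L'\leq L$, since otherwise the tolls $\tau\equiv 0$ already give Nash latency $L<L'/(1-4/m)$ in $(G,l,s,t,1)$. For each signal $\sigma$ let $p^\sigma_i$ be the fraction of the Nash flow in $(H,h^{\mu_\sigma})$ that traverses the copy $G_i$, and partition signals by support: $A_i:=\{\sigma:p^\sigma_{3-i}=0\}$ (one-sided through $G_i$) for $i=1,2$, and $B:=\{\sigma:p^\sigma_1,p^\sigma_2>0\}$ (split). The heart of the argument is a closure-concentration inequality exploiting the multiplier $8m^3L$: whenever $p^\sigma_i>0$, every used path through $G_i$ must traverse $(s,s_i)$, whose latency in $(H,h^{\mu_\sigma})$ is $8m^3L\cdot\mu_\sigma((s,s_i))$, so the Nash condition forces
\[
\mu_\sigma((s,s_i))\;\leq\;\val(\mu_\sigma)/(8m^3L)\qquad\text{for every }\sigma\in A_i\cup B.
\]

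Summing this against $\alpha_\sigma$ and using $\sum_\sigma\alpha_\sigma\val(\mu_\sigma)=L'\leq L$ bounds the contribution of $A_1\cup B$ to $\sum_\sigma\alpha_\sigma\mu_\sigma((s,s_1))=1/2-1/m$ by $1/(8m^3)$; hence $\alpha(A_2)\geq 1/2-1/m-1/(8m^3)$, and symmetrically $\alpha(A_1)\geq 1/2-1/m-1/(8m^3)$. Since $A_1\cap A_2=\emptyset$, this yields $\alpha(A_1\cup A_2)\geq 1-2/m-1/(4m^3)\geq 1-4/m$. For any $\sigma\in A_i$, restricting the Nash flow in $(H,h^{\mu_\sigma})$ to $G_i$ produces a volume-$1$ Nash flow of $(G,l+\tau^\sigma,s,t,1)$ with $\tau^\sigma_e:=8m^3L\,\mu_\sigma(e^i)$, whose Nash latency equals $\val(\mu_\sigma)-8m^3L\,\mu_\sigma((s,s_i))\leq\val(\mu_\sigma)$. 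Averaging over $A_1\cup A_2$ and picking the minimizer $\sigma^*$ gives a $(G,l+\tau^{\sigma^*},s,t,1)$-Nash latency of at most $L'/\alpha(A_1\cup A_2)\leq L'/(1-4/m)$, as required. The essential obstacle is the displayed closure-concentration inequality, which is the precise reason the multiplier $8m^3L$ was chosen so large; the remainder is a routine averaging/pigeonhole argument.
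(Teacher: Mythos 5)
Your proposal is correct and matches the paper's argument in substance. The completeness direction is essentially identical (same two posteriors $\mu^1,\mu^2$, same verification). The soundness direction is repackaged: the paper observes directly from the prior constraint that at least $1-4/m$ mass lies on posteriors with $\mu_{(s,s_1)}+\mu_{(s,s_2)}\ge 1/m$, and from any such posterior shows the Nash flow is one-sided; you instead partition signals a priori by the support of the Nash flow and derive the $\mu_\sigma((s,s_i))\le \val(\mu_\sigma)/(8m^3L)$ bound, then recover the same $1-4/m$ mass estimate from the prior constraint on $(s,s_1)$ and $(s,s_2)$. Both routes use the same two ingredients (the prior mass on $(s,s_1),(s,s_2)$ and the large multiplier forcing one-sidedness) and yield identical bounds, so this is a reformulation rather than a genuinely different proof.
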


\begin{proof}
We first show the existence of a signaling scheme with latency $L^*$.
Define posterior $\mu^1\in\simp_{E_H}$ as:
$\mu^1_\tht=2/m^2$ for all $\tht\in K_1 \cup E_2 \setminus K_2$,
$\mu^1_{(s,s_2)}=(1-2/m)$. Define $\mu^2$ symmetrically as:
$\mu^2_\tht=2/m^2$ for all $\tht\in K_2 \cup E_1 \setminus K_1$,
$\mu^2_{(s,s_1)}=(1-2/m)$.
Then $\lambda =(\mu^1 + \mu^2) /2$, and this is our signaling scheme. We will show that
$\val(\mu^1) = \val(\mu^2) \le L^*$, proving the lemma.

Consider distribution $\mu^1$; the argument for $\mu^2$ is symmetrical.
The idea is that an edge $e$ with $\mu_e^1 > 0$ has $h^{\mu^1}_e(x)\geq 8mL$, which
effectively deletes $e$ from $H$; other edges have $h^{\mu_1}_e(x)=h_e(x)$.
So $\mu^1$ simulates retaining edges in $G_1\sm K_1$.
Let $f=f^\NE(\tau^*)$ be the Nash flow in the routing game $(G,l+\tau^*,s,t,d)$. So
$C(l+\tau^*; f)=L^*$. Recall that $d=1$, so every $s$-$t$ path in $G$ has latency at least
$L^*$. Then the flow that sends $d$ on edges
$(s,s_1)$ and $(t_1,t)$ and $f$ on edges of $G_1$, is feasible.
On every edge $e\in E(H)$ with positive flow, $\mu_e^1 = 0$, so the latency of this flow
under $h^{\mu^1}$ is $L^*$.
Further, this is a Nash flow for $h^{\mu^1}$:
any $s$-$t$ path $P$ either contains an edge with $\mu_e^1 > 0$, and if not,
contains an $s_1$-$t_1$ path; in the latter case, there is a corresponding $s$-$t$ path
$Q$ in $G$, and the latency of $P$ under $h^{\mu^1}$ equals $(l+\tau^*)_Q(f)$, which is
at least $L^*$ since $f$ is the Nash flow for $l+\tau^*$.

Next, we show how to obtain the required tolls from the signaling scheme $\al$ (with
expected latency $L'$).
Assume $L'\leq L$, otherwise $\tau=0$ suffices.
At least $(1-4/m)$ of the probability mass of $\al$ must be on posteriors $\mu$ with
$\mu_{(s,s_1)} +\mu_{(s,s_2)} \ge 1/m$.
There must exist such a posterior $\mu'$ with $\val(\mu')\le\frac{L'}{1-4/m}$.
Assume $\mu'_{(s,s_1)}\geq\frac{1}{2m}$; the other case
is symmetric.
Let $f=f^{\mu'}$ be the Nash flow for latency functions $h^{\mu'}$.
(Again, since $d=1$, every $s$-$t$ path $P$ in $H$ with $f_e>0$ for all $e\in P$ satisfies
$h^{\mu'}_P(f)=\val(\mu')$.)

Since $h^{\mu'}_{(s,s_1)}\geq 4m^2L>\val(\mu')$, we must have $f_{(s,s_1)}=0$,
so $f$ is supported on $G_2$.
Abusing notation, for $e\in E_2$, we also use $e$ to denote the corresponding edge
in $E$. For every $e \in E_2$,
we have $h^{\mu'}_e(x)=l_{e}(x) + \mu_e' 8 m^3 L$.
Thus, defining $\tau_e = \mu_e' 8 m^3 L$ for all $e \in E$, we obtain
that $f$ restricted to $E_2$ is a Nash flow for $(G,l+\tau,s,t,d)$, and its latency is
at most $\val(\mu')$.
This is easy to see, since every $s$-$t$ path $P$ in $G$ corresponds to an $s_2$-$t_2$
path $Q$ in $H$, and $(l+\tau)_P(f)=h^{\mu'}_Q(f)$.
\end{proof}

\begin{theorem} \label{fullrev}
The full-revelation signaling scheme, i.e., revealing the state of nature, has the price
of anarchy for the underlying latency functions as its approximation
ratio. In particular, for linear latencies, it achieves a $\frac{4}{3}$-approximation.
\end{theorem}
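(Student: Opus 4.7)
The plan is to bound the expected latency of the full-revelation scheme by $\rho$ times the optimal signaling value, where $\rho$ is the price of anarchy of the underlying latency class; for linear latencies $\rho = 4/3$ by the classical bound of Roughgarden and Tardos~\cite{RT02}.

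First, observe that the full-revelation scheme corresponds to the signaling scheme that places mass $\lambda_\theta$ on the posterior $e_\theta$ for each $\theta \in \Theta$ (where $e_\theta$ is the Dirac distribution on $\theta$). Its expected latency is therefore $V_{\mathrm{fr}} := \sum_{\theta \in \Theta} \lambda_\theta \val(e_\theta) = \sum_\theta \lambda_\theta C(l^\theta; f^\theta)$, where $f^\theta$ is the Nash flow for latency functions $\{l_e^\theta\}_e$. Let $\alpha^*$ be an optimal signaling scheme, so $\opt = \sum_\mu \alpha^*_\mu\, C(l^\mu; f^\mu)$ and $\lambda_\theta = \sum_\mu \alpha^*_\mu\, \mu_\theta$ for every $\theta$. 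Interchanging the order of summation yields
\[
V_{\mathrm{fr}} = \sum_\mu \alpha^*_\mu \sum_\theta \mu_\theta\, C(l^\theta; f^\theta).
\]

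The key step is to show that for each posterior $\mu$ we have $\sum_\theta \mu_\theta C(l^\theta; f^\theta) \le \rho\, C(l^\mu; f^\mu)$. Fix $\mu$. For each state $\theta$, applying the price-of-anarchy bound to the routing game $(G, \{l^\theta_e\}_e, s, t, d)$ against the feasible flow $f^\mu$ gives $C(l^\theta; f^\theta) \le \rho\, C(l^\theta; f^\mu)$. Weighting by $\mu_\theta$ and summing, the right-hand side becomes
\[
\rho \sum_\theta \mu_\theta \sum_{e \in E} f^\mu_e\, l^\theta_e(f^\mu_e)
= \rho \sum_{e \in E} f^\mu_e \sum_\theta \mu_\theta l^\theta_e(f^\mu_e)
= \rho \sum_{e \in E} f^\mu_e\, l^\mu_e(f^\mu_e) = \rho\, C(l^\mu; f^\mu),
\]
using the definition $l^\mu_e = \sum_\theta \mu_\theta l^\theta_e$. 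Substituting back, $V_{\mathrm{fr}} \le \rho \sum_\mu \alpha^*_\mu C(l^\mu; f^\mu) = \rho \cdot \opt$. For linear latencies, $\rho = 4/3$ by~\cite{RT02}, which yields the claimed $4/3$-approximation.

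There is no real obstacle in this argument; the one subtle point is recognizing that the PoA inequality should be applied state-by-state against the Nash flow $f^\mu$ of the posterior (rather than attempting to compare $f^\theta$ to $f^\mu$ directly), so that linearity of latencies in $\mu$ collapses the weighted sum $\sum_\theta \mu_\theta C(l^\theta; f^\mu)$ exactly to $C(l^\mu; f^\mu)$.
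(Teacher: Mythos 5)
Your proof is correct and follows essentially the same argument as the paper: both interchange summation, use the linearity $\sum_\theta \mu_\theta\, C(l^\theta; f^\mu) = C(l^\mu; f^\mu)$, and invoke the price-of-anarchy inequality to compare $C(l^\theta; f^\theta)$ with $C(l^\theta; f^\mu)$ (the paper routes this comparison explicitly through the optimal flow $\tilde f^\theta$ for $l^\theta$, whereas you state the PoA bound directly against the feasible flow $f^\mu$, which is the same chain written in one step). The only cosmetic difference is that the paper bounds $\opt$ from below while you bound $V_{\mathrm{fr}}$ from above.
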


\begin{proof}
Recall that the price of anarchy (\poa) for a class of latency functions
is the maximum ratio, over all instances involving these latency functions,
of the latencies of the Nash flow and optimal flow.
For linear latency functions, the \poa is $\frac{4}{3}$~\cite{RT02}.

Intuitively, the result follows because full-revelation is the best signaling
scheme if one seeks to minimize the expected latency of the {\em optimal} flow, and the
multiplicative error that results from this change in objective (from the latency of the
Nash flow to that of the optimal flow) cannot exceed the price of anarchy.

Slightly abusing notation, we use $f^\tht$ to denote the Nash flow with respect to the
latency functions $\{l_e^\tht\}$. We use $\brf^\tht$ to denote the optimal flow for
latency functions $\{l_e^\tht\}$.
Let $\rho$ be the price of anarchy for the collection $\{l^\tht_e\}_{e\in E,\tht\in\Tht}$
of latency functions, so we have
$C(l^\tht; \brf^\tht)\geq C(l^\tht; f^\tht)/\rho$ for all $\tht\in\Tht$.
The full-revelation signaling scheme has cost
$\sum_{\tht\in\Tht}\ld_\tht C(l^\tht; f^\tht)$.

Consider any signaling scheme $\al$. Its cost is
\begin{align*}
\sum_{\mu\in\simp_M}\al_\mu\val(\mu) &=\sum_{\mu}\al_\mu C(l^\mu; f^\mu)
=\sum_{\mu}\al_\mu\sum_{\tht\in\Tht}\mu_{\tht} C(l^\tht; f^\mu)
\geq \sum_{\mu,\tht}\al_\mu\mu_{\tht} C(l^\tht; \brf^\tht) \\
& \geq \sum_{\mu,\tht}\al_\mu\mu_{\tht} C(l^\tht; f^\tht)/\rho
=\sum_\tht \ld_\tht C(l^\tht; f^\tht)/\rho. \qedhere
\end{align*}
\end{proof}

\section{Extensions: hardness results for related problems} \label{extn}

\subsection{Maximum prior problem}
We study the closely-related problem
of finding $\mu\in\simp_M$ that maximizes $\opt(\mu)$.
The proof of Theorem~\ref{opt2sep} in fact shows that
{\em a PTAS for the maximum-prior problem yields a PTAS for threshold signaling}.
Theorem~\ref{maxposthm} uses this implication to rule out a PTAS for the
maximum prior problem under the exponential time hypothesis (ETH) by giving a simple, clean reduction
from the best-Nash problem in {\em general-sum} two-player games,
for which a PTAS is ruled out by~\cite{BKW15}.
Theorem~\ref{maxposthm} establishes the {\em optimal} hardness result for the
maximum prior problem, since
a quasi-PTAS for the maximum prior problem
was recently presented in \cite{CCDEHT15}.

Theorem~\ref{maxposthm} also implies that the general maximum prior problem studied in \cite{CCDEHT15}
  does not have a PTAS under the ETH, when the objective function is $O(1)$-Lipschitz but not $O(1)$-noise-stable.
  (For this case, \cite{CCDEHT15} ruled out a PTAS assuming hardness of planted clique.)
  This is because the objective function (minimax value) for signaling in zero-sum games is $O(1)$-Lipschitz.

\begin{theorem} \label{maxposthm}
Assuming ETH, there is a constant $\ve_0$ such that, any algorithm that returns an
(additive) $\ve_0$-approximation for the maximum prior problem,
even for extended security games, must run in quasipolynomial, i.e.,
$n^{\tilde\Om(\log^{1-o(1)} n)}$, time. In particular, assuming ETH, there is no PTAS for the
maximum prior problem, even for extended security games.
\end{theorem}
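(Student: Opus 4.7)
The plan is to reduce the best-Nash problem in general-sum two-player bimatrix games, which is ruled out of having a PTAS under ETH by~\cite{BKW15}, directly to the maximum prior problem, producing an extended security game. Concretely, given a bimatrix game $(R,C)$ with $R,C\in[0,1]^{n\times n}$ and BKW's gap (constants $\delta>\epsilon>0$; YES instances admit an exact Nash of social welfare $\geq 2-\epsilon$; in NO instances every $\epsilon$-Nash has welfare $\leq 2-\delta$), I would construct a Bayesian zero-sum game with states $\Theta=[n]$ (one per column pure strategy), row-strategy set $[n]$, and column-strategy set indexed by pairs $(i',j')\in[n]^2$. Fixing the constant $K=\delta/\epsilon$, I would take $\bA_{i,(i',j')}=-KC_{ij'}$, $b^\theta_i=(1+K)(R+C)_{i\theta}$, and $d^\theta_{(i',j')}=-KR_{i',\theta}$, which immediately places
\[
A^\theta_{i,(i',j')}=-KC_{ij'}+(1+K)(R+C)_{i\theta}-KR_{i',\theta}
\]
in the extended-security form of \eqref{extsec1}.

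Next I would compute $\val(\mu)$. Since $x^TA^\mu e_{(i',j')}=(1+K)x^T(R+C)\mu-K(x^TC)_{j'}-K(R\mu)_{i'}$, the column minimization over pure strategies decouples, giving $\min_{(i',j')}[-K(x^TC)_{j'}-K(R\mu)_{i'}]=-K\max_{j'}(x^TC)_{j'}-K\max_{i'}(R\mu)_{i'}$; after simplification,
\[
\val(\mu)=\max_{x\in\simp_n}\bigl[x^T(R+C)\mu-K\bigl(P_R(x,\mu)+P_C(x,\mu)\bigr)\bigr],
\]
where $P_R(x,\mu)=\max_{i'}(R\mu)_{i'}-x^TR\mu$ and $P_C(x,\mu)=\max_{j'}(x^TC)_{j'}-x^TC\mu$ are the Mangasarian--Stone Nash penalties, each nonnegative and zero exactly at Nash equilibria of $(R,C)$. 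Setting $V:=\max_\mu\val(\mu)$, I would establish a constant additive gap by case analysis on an optimizer $(x,\mu)$: in YES, $V\geq 2-\epsilon$ by evaluating at the best Nash where $P_R=P_C=0$; in NO, either $P_R+P_C\leq\epsilon$ (so $(x,\mu)$ is an $\epsilon$-Nash and welfare is at most $2-\delta$, yielding $V\leq 2-\delta$) or $P_R+P_C>\epsilon$ (so $V\leq\text{welfare}-K\epsilon=2-\delta$). Rescaling all payoff entries by $\Theta(1/K)$ places $A^\theta$ in $[-1,1]^{r\times c}$ while preserving an $\Omega(\epsilon/\delta)=\Omega(1)$ additive gap. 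Because $\opt(\mu)$ is the concave envelope of $\val$, $\max_\mu\opt(\mu)=\max_\nu\val(\nu)$, so an $\ve_0$-additive approximation to max-prior, with $\ve_0$ below this scaled gap, would distinguish YES from NO; since the reduction is polynomial (the game has $O(n^2)$ columns), BKW's $n^{\Omega(\log n)}$ ETH lower bound transfers verbatim to max-prior on the restricted class of extended security games.

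The main obstacle will be fitting \emph{both} Mangasarian--Stone penalties into the extended-security form, which offers only a single bilinear term $x^TB\mu$. A naive reduction with one column per penalty would force $B$ to be two incompatible combinations of $R$ and $C$ simultaneously, since each of $P_R$ and $P_C$ is individually bilinear in $(x,\mu)$. The trick is to use the pair-indexed column set $(i',j')$: although $P_R$ and $P_C$ are each separately bilinear, the two separable pieces $-K(x^TC)_{j'}$ and $-K(R\mu)_{i'}$ coexist peacefully under one decoupled min, so a single matrix $B=(1+K)(R+C)$ suffices to absorb the remaining bilinear contribution $x^T(R+C)\mu$. Finally, although not strictly required by the theorem statement, one can combine the resulting max-prior hardness with the implication ``PTAS for max-prior $\Rightarrow$ PTAS for threshold signaling'' from the proof of Theorem~\ref{opt2sep} to propagate the quasi-polynomial ETH lower bound to threshold signaling as well, thereby matching the QPTAS of~\cite{CCDEHT15}.
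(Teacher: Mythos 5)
Your construction is essentially identical to the paper's: with $K=1/\e$, your matrices $\bA_{i,(i',j')}=-K\C_{ij'}$, $B=(1+K)(\Rc+\C)$, $D_{(i',j'),\theta}=-K\Rc_{i',\theta}$ are exactly the paper's $\bA$, $B$, $D$, and your Mangasarian--Stone rewriting of $\val(\mu)$ reproduces the paper's equation \eqref{valcomp}. The differences are in how the rest of the argument is packaged. The paper reduces the BKW \emph{search} problem (find an $\e^*$-Nash with welfare $\geq\OPT-\e^*$) to threshold signaling, then invokes the observation in the proof of Theorem~\ref{opt2sep} that a PTAS for max-prior yields a poly-time threshold-signaling algorithm; you instead reduce a gap/decision version of BKW directly to max-prior using $\max_\ld\opt(\ld)=\max_\nu\val(\nu)$. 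This is a clean shortcut, but it quietly presumes that a max-prior approximation returns a prior $\mu^*$ whose \emph{certifiable} value is near $\max_\nu\val(\nu)$ (so that you can compute $\val(\mu^*)$ by LP and compare to the threshold); a prior $\mu^*$ with $\opt(\mu^*)$ near-optimal but $\val(\mu^*)$ small would defeat the direct test. The paper sidesteps this by routing through threshold signaling, which is designed to produce a posterior with high $\val$. Finally, your closing remark has the implication direction reversed: "PTAS for max-prior $\Rightarrow$ PTAS for threshold signaling" propagates hardness \emph{from} threshold signaling \emph{to} max-prior, not the other way around. In your framework this is harmless, because your gap argument actually establishes hardness of deciding whether $\max_\mu\val(\mu)$ is above or below a threshold, which \emph{is} the threshold-signaling problem, so the threshold-signaling lower bound drops out directly rather than via that implication.
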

Recall that, as noted earlier, the proof of Theorem~\ref{opt2sep} shows that, for any
$\ve$, a polytime $\ve$-approximation for the maximum prior problem yields a polytime
algorithm for the threshold signaling problem with precision parameter $2\ve$. Thus, it
suffices to show that, assuming ETH, there is some constant $\e_0$ such that solving the
threshold signaling problem with precision parameter $\e_0$, even for extended security
games, requires quasipolynomial running time. To show this, we reduce from the problem of
finding an $\e$-Nash equilibrium in a general two-player game with $\e$-approximate
social welfare, and utilize the following hardness result for this problem.

\begin{theorem}[\cite{BKW15}] \label{bestnash}
Assuming ETH, there is a constant $\epsilon^* > 0$ such that any algorithm for finding an
$\epsilon^*$-approximate Nash equilibrium with social welfare at least $\OPT -\epsilon^*$
in a general bimatrix game requires $n^{\tilde\Om(\log^{1-o(1)} n)}$ time,
where $\OPT$ is the optimal welfare of a Nash equilibrium.
\end{theorem}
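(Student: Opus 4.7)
The plan is to combine two reductions. First, by the observation noted immediately after the theorem statement (taken from the proof of Theorem~\ref{opt2sep}), a polytime $\ve$-approximation for the maximum prior problem yields a polytime algorithm for the threshold signaling problem with precision parameter $2\ve$. So it suffices to exhibit a constant $\ve_1>0$ such that, under ETH, the threshold signaling problem on extended security games with precision $\ve_1$ requires $n^{\tilde\Om(\log^{1-o(1)}n)}$ time. I will obtain this via a polynomial-time reduction from the approximate best-Nash problem of Theorem~\ref{bestnash}.

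The construction. Given a bimatrix game with payoffs $R,C\in[0,1]^{n\times n}$ from Theorem~\ref{bestnash}, I build an extended security game in which the row player has $r=n$ strategies (mirroring the bimatrix row player), $M=n$ states of nature (so that a posterior $\mu\in\simp_n$ plays the role of the column player's mixed strategy in the bimatrix game), and $c=n^2$ column strategies indexed by pairs $(i,j)\in[n]^2$. For a constant $T>0$ to be fixed, set
\[
\bA_{k,(i,j)}=-T\, C_{k,j},\qquad
b^\theta_k=(1+T)(R_{k,\theta}+C_{k,\theta}),\qquad
d^\theta_{(i,j)}=-T\, R_{i,\theta}.
\]
A global rescaling by $O(T)$ enforces entries in $[-1,1]$ with only a constant-factor loss in the precision parameter.

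Substituting into \eqref{extsec} and exploiting the decomposition $\min_{(i,j)}(f_i+g_j)=\min_i f_i+\min_j g_j$, a short calculation gives
\[
\val(\mu)=\max_{x\in\simp_n}\Bigl[\mathrm{wel}(x,\mu)-T\cdot\rho_R(x,\mu)-T\cdot\rho_C(x,\mu)\Bigr],
\]
where $\mathrm{wel}(x,\mu)=x^T(R+C)\mu$ is the social welfare and $\rho_R(x,\mu):=\max_i R_{i,:}\mu-x^T R\mu\ge 0$, $\rho_C(x,\mu):=\max_j x^T C_{:,j}-x^T C\mu\ge 0$ are the row and column regrets in the bimatrix game. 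Completeness is then immediate: any exact Nash $(x^*,\mu^*)$ of welfare $\OPT$ witnesses $\val(\mu^*)\ge\OPT$. For soundness, if $\val(\mu)\ge\eta$, the maximizing $x$ satisfies $\mathrm{wel}(x,\mu)\ge\eta$ and $\rho_R+\rho_C\le(2-\eta)/T$, so $(x,\mu)$ is an $O(1/T)$-approximate Nash with welfare at least $\eta$. Choosing $T$ so that $2/T<\ve^*$ (with $\ve^*$ from Theorem~\ref{bestnash}), fixing the rescaled precision $\ve_1=\Theta((\ve^*)^2)$, and invoking threshold signaling at the rescaled threshold corresponding to $\eta=\OPT-\ve^*$ (binary-searched over an $O(1/\ve^*)$ grid of candidate values of $\OPT$) produces a posterior $\mu$ from which the LP-optimal $x$ against $\A^\mu$ yields an $\ve^*$-Nash of welfare at least $\OPT-\ve^*$ in the bimatrix game. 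A polytime threshold signaling algorithm at precision $\ve_1$ would therefore violate Theorem~\ref{bestnash}, and hence ETH; since our reduction blows the instance size up only polynomially, the quasipolynomial lower bound of Theorem~\ref{bestnash} transfers.

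Main obstacle. The extended-security structure is quite restrictive: the only bilinear coupling between $x$ and $\mu$ is the single term $x^T B\mu$, and the column-minimization term decomposes as $(x^T\bA)_j+(D\mu)_j$, with no further $x$--$\mu$ cross-coupling inside the $\min$. The critical design move is the choice of $c=n^2$ columns indexed by pairs $(i,j)$: this lets the column-min split additively into one part penalizing the row player's regret (via $i$) and one part penalizing the column player's regret (via $j$), which in turn makes the single quantity $\val(\mu)$ simultaneously encode welfare, row best-response, and column best-response. Engineering this encoding while staying within the extended-security form is the technical crux; once it is in place, the BKW-based ETH lower bound and the earlier max-prior$\Rightarrow$threshold-signaling implication close the argument cleanly.
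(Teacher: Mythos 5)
Your proposal does not prove the statement in question. Theorem~\ref{bestnash} is the Braverman--Ko--Weinstein hardness result for the \emph{best-Nash} problem itself: under ETH, finding an $\epsilon^*$-approximate Nash equilibrium of a general bimatrix game whose welfare is within $\epsilon^*$ of the best Nash welfare requires quasi-polynomial time. The paper does not prove this; it imports it verbatim from~\cite{BKW15}. What you have written is instead a proof of Theorem~\ref{maxposthm} --- the reduction from best-Nash to threshold signaling via an extended security game with $c=mn$ column strategies indexed by pairs, the $-\frac{1}{\e}$ regret-penalty terms in $\bA$ and $D$, and the $(1+\frac{1}{\e})(R+C)$ welfare term in $B$ --- and it explicitly \emph{invokes} Theorem~\ref{bestnash} as a hypothesis (``with $\ve^*$ from Theorem~\ref{bestnash}'', ``would therefore violate Theorem~\ref{bestnash}''). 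As an argument for Theorem~\ref{bestnash} itself this is circular: you assume the conclusion you are asked to establish.

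A genuine proof of Theorem~\ref{bestnash} lives entirely inside the world of constraint satisfaction and PCP-style amplification, not signaling: one starts from \TSAT under ETH, passes through birthday-repetition-type reductions to free games / \TCSP instances with $\tilde{O}(\sqrt{n})$-size blocks, and embeds the resulting gap problem into a bimatrix game so that any approximate equilibrium of near-optimal welfare decodes to a good CSP assignment. None of that machinery appears in your writeup, and it cannot be recovered from the extended-security-game construction, which runs the reduction in the opposite direction (from bimatrix games \emph{to} signaling). Your construction is essentially correct as a proof of Theorem~\ref{maxposthm} and matches the paper's own argument for that theorem quite closely, so the work is not wasted --- but it answers a different question than the one posed.
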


\begin{proofof}{Theorem~\ref{maxposthm}} Let $(\Rc,\C)$ be a bimatrix game, where $\Rc,\C\in[-1,1]^{m\times n}$ are the payoffs for the
row- and column- players respectively. To avoid confusion with the extended security game,
we refer to the row- and column- players in the bimatrix game as the $\Rc$- and $\C$- players.
A pair of mixed strategies $(\x,\y)$ for the $\Rc$- and $\C$- players respectively is an
$\ve$-approximate equilibrium if:
\begin{equation}
\x^T(\Rc+\C)\y-\max_{i\in[m]} (\Rc\y)_i-\max_{j\in[n]} (\x^T\C)_j \ge-\ve. \label{apxnash}
\end{equation}
The social welfare of $(\x,\y)$ is defined as $\x^T(\Rc+\C)\y$.
Let $\OPT$ be the maximum social welfare of a (mixed) Nash equilibrium of $(\Rc,\C)$.
Note that $-2\leq\OPT\leq 2$.

We construct an extended security game where the states of nature correspond to the pure
strategies of the $\C$-player (in the bimatrix game), and the row-player's pure strategies
(in the extended security game) correspond to the $\Rc$-player's strategies (in the bimatrix
game).
We will set things up so that the expected payoff in the extended security
game to the row player under a posterior distribution $\mu$ and when he plays a mixed
strategy $x$ is a linear combination of the LHS of \eqref{apxnash} (viewing $(x,\mu)$ as a
mixed-strategy profile for the bimatrix game $(\Rc,\C)$) and the social welfare
of $(x,\mu)$ in the bimatrix game $(\Rc,\C)$.
Let $\e>0$ be a parameter.
The payoffs in the extended security game will have absolute value at most $1+O(1/\eps)$.
We will show that solving the threshold signaling problem for the resulting extended
security game with threshold $\eta=\eta'-\e$,
and precision parameter $\eps$ yields
a $6\eps$-approximate Nash equilibrium of $(\Rc,\C)$ with social welfare at least
$\eta'-2\eps$, whenever there is a Nash equilibrium of $(\Rc,\C)$ with social welfare at
least $\eta'$ or we state that we are in case (i) of the threshold signaling problem.
So via binary search, we can obtain a $6\e$-approximate Nash equilibrium of $(\Rc,\C)$
with social welfare at least $\OPT-3\e$.
Thus, setting $\e_0=\Tht(\e^{*^2})$,%
\footnote{The $\Tht(e^{*^2})$ is because
we need additive error $\Tht(\e^*)$ when payoffs
are bounded in absolute value by $1+O(1/\e^*)$; when we scale payoffs so that they lie in
$[-1,1]$, this translates to an $\Tht(\e^{*^2})$-approximation.}
where $\e^*$ is as given by
Theorem~\ref{bestnash}, we obtain that, assuming ETH, the threshold signaling problem with
precision parameter $\e_0$ requires quasipolynomial time, completing the proof.

We proceed to describe the extended security game and prove the desired claim.
We set $\Tht=[n]$, so $M=n$. The row-player's pure strategy set is $[m]$, and the
column-player's pure-strategy set is $[m]\times[n]$, so the row- and column- players have
$r=m$ and $c=mn$ pure strategies respectively.
The $r\times c$ matrix $\bA$, $r\times M$ matrix $B$, and $c\times M$ matrix $D$ in the
extended security game are
\begin{alignat*}{2}
\bA_{i,(i',j)}\ & =\ -\frac{1}{\e}\C_{i,j} \qquad \qquad && \forall i\in[m], (i',j)\in[m]\times[n] \\
B_{i,j}\ & =\ \Bigl(1+\frac{1}{\e}\Bigr)(\Rc_{i,j}+\C_{i,j}) \qquad && \forall i\in[m], j\in[n] \\
D_{(i,j'),j}\ & =\ -\frac{1}{\e}\Rc_{i,j} && \forall (i,j')\in[m]\times[n], i\in[m].
\end{alignat*}

\begin{claim} \label{clm:minequal}
For all $\mu\in\simp_M$, $x \in \simp_r$, we have
\[
\min_{k\in[c]}\bigl(x^T\bA+\mu^TD^T\bigr)_j \quad = \quad
-\frac{1}{\e}\Bigl(\max_{i\in [m]}(\Rc \mu)_i+\max_{j\in[n]}(x^T\C)_j\Bigr).
\]
\end{claim}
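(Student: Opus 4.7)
The plan is a direct computation that exploits the product structure of the column player's strategy set $[m]\times[n]$ in the constructed extended security game. The key observation is that the matrices $\bA$ and $D$ have been designed so that the contribution from $\bA$ to the column-player payoff vector depends only on the second coordinate of the column-player's strategy, while the contribution from $D$ depends only on the first coordinate. Consequently, the minimization over column strategies decouples into two independent optimizations, each of which becomes a maximum (after accounting for the negative $-1/\e$ factor).

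More concretely, I would first write the column player's pure strategies as pairs $(a,b)\in[m]\times[n]$ and compute each piece of the vector $x^T\bA+\mu^T D^T$ separately. For the first piece, from $\bA_{i,(a,b)}=-\frac{1}{\e}\C_{i,b}$, I get
\[
(x^T\bA)_{(a,b)} \;=\; \sum_{i\in[m]} x_i\, \bA_{i,(a,b)} \;=\; -\tfrac{1}{\e}\sum_{i\in[m]} x_i\,\C_{i,b} \;=\; -\tfrac{1}{\e}(x^T\C)_b,
\]
which depends only on $b$. For the second piece, from $D_{(a,b),j}=-\frac{1}{\e}\Rc_{a,j}$, I get
\[
(\mu^T D^T)_{(a,b)} \;=\; \sum_{j\in[n]}\mu_j\, D_{(a,b),j} \;=\; -\tfrac{1}{\e}\sum_{j\in[n]}\mu_j\,\Rc_{a,j} \;=\; -\tfrac{1}{\e}(\Rc\mu)_a,
\]
which depends only on $a$.

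Adding these and taking the minimum over $(a,b)\in[m]\times[n]$, the two terms separate, so the minimum is achieved by independently choosing $a$ to maximize $(\Rc\mu)_a$ and $b$ to maximize $(x^T\C)_b$:
\[
\min_{(a,b)\in[m]\times[n]} \bigl(x^T\bA+\mu^T D^T\bigr)_{(a,b)}
\;=\; -\tfrac{1}{\e}\Bigl(\max_{a\in[m]}(\Rc\mu)_a \;+\; \max_{b\in[n]}(x^T\C)_b\Bigr),
\]
which is exactly the claim. There is no real obstacle here: the content of the claim is precisely that the design of $\bA$ and $D$ (with column strategies lying in a product set) linearizes the max-player's best-response computation into the sum of the row-player's best-response in $\Rc$ and the column-player's best-response in $\C$ in the underlying bimatrix game. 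This is the structural feature that will later allow an extended security game to encode a Nash-equilibrium condition of $(\Rc,\C)$.
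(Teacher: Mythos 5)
Your proof is correct and follows essentially the same approach as the paper: compute the two pieces of the payoff vector at a pure column strategy $(a,b)$, observe they depend on $b$ and $a$ respectively, and conclude the min decouples into a sum of maxima. Your write-up is slightly more explicit than the paper's about the decoupling step, but the argument is the same.
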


\begin{proof}
Consider any column-player strategy $k=(i',j')\in[m]\times[n]$.
We have
\begin{equation*}
\begin{split}
(x^T\bA)_k+(\mu^TD^T)_k
& =\sum_{i\in[m]}x_{i}\bA_{i,(i',j')}+\sum_{j\in[n]}\mu_{j}D_{(i',j'),j} \\
& =-\frac{1}{\e}\Bigl(\sum_{i\in[m]}x_{i}\C_{i,j'}+\sum_{j\in[n]}\mu_{j}\Rc_{i',j}\Bigr)
=-\frac{1}{\e}\Bigl((x^T\C)_{j'}+(\Rc \mu)_{i'}\Bigr). \qedhere
\end{split}
\end{equation*}
\end{proof}

It follows from Claim~\ref{clm:minequal} that for any $\mu\in\simp_M$, we have
\begin{align}
\hspace*{-1ex}
\val(\mu) &=\max_{x\in\simp_r}\Bigl(x^TB\mu+\min_{j\in[c]}\bigl(x^T\bA+\mu^TD^T\bigr)_j\Bigr)
\notag \\
& =\max_{x\in\simp_m}\biggl[\Bigl(1+\frac{1}{\e}\Bigr)x^T(\Rc+\C)\mu
-\frac{1}{\e}\Bigl(\max_{i\in [m]}(\Rc \mu)_i+\max_{j\in[n]}(x^T\C)_j\Bigr)\biggr] \notag \\
& = \max_{x\in\simp_m}\biggl[x^T(\Rc+\C)\mu
-\frac{1}{\e}\Bigl(\max_{i\in [m]}(\Rc \mu)_i+\max_{j\in[n]}(x^T\C)_j-x^T(\Rc+\C)\mu\Bigr)\biggr]
\label{valcomp}
\end{align}
Now suppose we solve the threshold signaling problem with threshold $\eta=\eta'-\e$
(where $\eta'\leq 2$) and precision parameter $\e$. Suppose $(x^*,\mu^*)$ is a Nash
equilibrium of $(\Rc,\C)$ with social welfare at least $\eta'$. It follows from
\eqref{valcomp} that $\val(\mu^*)\geq\eta'$. So we are not in case (ii) of the threshold
signaling problem, and
must obtain $\mu\in\simp_n$ such that $\val(\mu)\geq\eta-\e=\eta'-2\e$. From \eqref{valcomp},
this implies that there is $x\in\simp_m$ such that $x^T(\Rc+\C)\mu\geq\eta'-2\e$ and
\[
\max_{i\in[m]}(\Rc \mu)_i+\max_{j\in[n]}(x^T\C)_j-x^T(\Rc+\C)\mu
\leq\e\Bigl(x^T(\Rc+\C)\mu-\eta'+2\e\Bigr)\leq 6\e
\]
where the last inequality follows since $\Rc,\C\in[-1,1]^{m\times n}$.
The same calculation holds whenever we state that we are in case (i) and return
$\mu\in\simp_n$.
\end{proofof}

\subsection{Hardness with other equilibrium notions}
It is known that in zero-sum games, correlated
equilibria and Nash equilibria are payoff-equivalent, that is, they yield the same payoffs
(this was also noted in~\cite{Dughmi14}).
Thus, our hardness results extend to the
case of correlated equilibria, as well as other notions of stability that are
payoff-equivalent to Nash equilibria in zero-sum games. To see this,
note that for $\mu \in \simp_M$, due to the payoff equivalence, $\val(\mu)$ is also the payoff
of the row player in any correlated equilibrium in the zero-sum game specified by
$\A^\mu$. Hence, the statement of the signaling problem and its optimal value remain
unchanged. Further, any signaling scheme for correlated equilibria gives a signaling
scheme for Nash equilibria of equal value. This immediately extends {\em all} our hardness
results (Theorem~\ref{nofptasthm}, Theorem~\ref{dsphard}, Theorem~\ref{noptasthm},
Theorem~\ref{maxposthm}) to correlated equilibria (and other payoff-equivalent
equilibria).

\subsection{Signaling with general objective functions}
We now consider a more general signaling problem in Bayesian zero-sum games, where the
principal's value
may depend on the players' strategies, and show that it is \nphard to obtain a PTAS.

\noindent Formally, we have a Bayesian zero-sum game
  $\bigl(\Tht,\{\A^\theta\}_{\theta \in \Theta},\lambda\bigr)$
and a $\Tht\times r\times c$ {\em principal objective tensor}
$\F=\bigl(\F^\tht(i,j)\bigr)$; that is, $\F^\theta \in [-1, 1]^{r \times c}$ for all
$\tht\in\Tht$.
  We now define $\val(\mu) = \max_{(x_\mu, y_\mu) \in \mathit{NE}(\A^\mu)}
  x_\mu^T(\sum_\tht\mu_\tht \F^{\tht}) y_\mu$,
  where $\mathit{NE}(\A^\mu)$ is the set of all (exact) Nash equilibria of $\A^\mu$.
  As before, we seek a signaling scheme $(\Sg,\al,\mu)$ that maximizes
  $\sum_{\sigma \in \Sigma} \al_\sigma \val(\mu_\sigma)$.
  
\begin{theorem} \label{stackthm}
Given a Bayesian zero-sum game $\bigl(\Tht,\{\A^\theta\}_{\theta \in \Theta},\lambda\bigr)$,
  and a principal objective tensor $\F$,
  it is NP-hard to distinguish whether the optimal signaling scheme has value $0$
  or at least $\frac{1}{2}$.
\end{theorem}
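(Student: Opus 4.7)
The plan is to reduce from an NP-hard problem, exploiting the fact that a single state of nature already suffices to capture the hardness. With $|\Theta|=1$, the signaling scheme trivializes to choosing the prior itself, and the problem becomes $\val(\lambda) = \max_{(x,y) \in \mathit{NE}(\A)} x^T F y$. Because the Nash equilibrium set of a zero-sum game factors as a product $X^* \times Y^*$ of polytopes of optimal row and column strategies, this is a bilinear maximization over a product of polytopes, which is NP-hard in general.

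The concrete reduction I would use is from 3-SAT (or equivalently \textsc{IndependentSet}). Given a 3-SAT formula $\phi$ with $n$ variables and $m$ clauses, I would equip the row player with $2n$ strategies $\{T_i, F_i\}_{i=1}^n$ corresponding to literals, and the column player with $O(m+n)$ strategies for ``checking'' individual clauses and consistency of variables. The zero-sum matrix $\A$ is built from a standard variable gadget that forces any NE row-strategy $x$ to satisfy $x_{T_i}+x_{F_i}=1/n$ for every $i$ (so that $x$ encodes a mixed truth assignment with each variable carrying equal weight), and a clause gadget that makes the column player's best responses detect unsatisfied clauses. The principal's objective $F$ is then constructed to reward exactly those NE whose row-strategy encodes a satisfying pure assignment: $x^T F y \geq 1/2$ on such NE, and $x^T F y = 0$ elsewhere.

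Completeness follows because, if $\phi$ has a satisfying assignment $\tau$, then the corresponding literal-encoding $(x^\tau, y^\tau)$ is an NE of $\A$ with $x^{\tau T}Fy^\tau \geq 1/2$, so $\val(\lambda) \geq 1/2$. Soundness follows because, if $\phi$ is unsatisfiable, no pure assignment supported within the NE polytope satisfies all clauses; designing $F$ so that its bilinear form evaluates to $0$ on every non-satisfying pure assignment and vanishes on convex combinations of such assignments (using clause-specific ``detector'' rows and columns) ensures that every NE has $x^T F y = 0$, whence $\val(\lambda)=0$ and every signaling scheme has value $0$.

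The main obstacle is designing the variable-and-clause gadgets so that (a) the zero-sum payoff structure of $\A$ genuinely pins the NE row-strategy to a mixed truth assignment and the column-strategy to a clause-checking distribution, and (b) the bilinear principal's objective $F$ is zero on every ``bad'' NE yet at least $1/2$ on some ``good'' NE, without averaging being exploitable. Both requirements demand careful payoff balancing of the kind used in classical hardness reductions for game-theoretic problems, such as the best-Nash reduction of Gilboa and Zemel and the optimal Stackelberg commitment reduction of Conitzer and Sandholm; adapting these gadgets to an underlying zero-sum game while preserving the constant gap of $0$ vs $1/2$ is the crux of the proof.
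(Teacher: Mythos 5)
Your approach is genuinely different from the paper's, and it has a real gap. The paper does \emph{not} collapse to a single state of nature: it reduces from Balanced Vertex Cover, building a Bayesian game with $|\Theta| = n$ (states are vertices of a graph $G$), prior $\lambda = \mone_n/n$, row strategies indexed by vertices, and column strategies indexed by vertices, edges, and one special action $s$; the principal's tensor only rewards the column player playing $s$. The hardness there comes entirely from the signaling: the posterior $\mu$ must encode one half of a balanced vertex cover while the row player's NE strategy $x$ encodes the other half, and the analysis forces $x_v \mu_v = 0$ and $x_v + \mu_v = 2/n$ so that $(\mathrm{supp}(\mu), \mathrm{supp}(x))$ is an exact balanced partition. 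Crucially, for the paper's game, computing $\val(\mu)$ for a \emph{fixed} posterior is polytime: since $\F$ rewards only $y_s$, the objective $x^T(\sum_\theta \mu_\theta \F^\theta) y$ degenerates to $y_s$, so $\val(\mu) = \max_{y \in Y^*(\mu)} y_s$ is a single LP. The intractability is therefore genuinely about choosing the signaling scheme, not about selecting an equilibrium.

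Your plan sets $|\Theta| = 1$, which eliminates signaling altogether and reduces the theorem to: given a zero-sum game $\A$ and a bilinear tensor $F$, it is NP-hard to decide whether $\max_{(x,y) \in \mathit{NE}(\A)} x^T F y \ge 1/2$ or $= 0$. That is an NE-selection question, not a signaling question, and the claim that ``a single state already suffices'' is asserted rather than proved. The gap in the proposal is exactly the part you flag as the ``crux'': you never construct the gadgets, nor verify that (a) the polytope of balanced mixed truth assignments is realizable as the maxmin strategy set $X^*$ of a zero-sum game, (b) the column player's clause-detection polytope is realizable as $Y^*$, and (c) the bilinear objective $F$ produces an exact $0$-vs.-$1/2$ gap. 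Note that the simplest candidate $F$ --- a rank-one objective rewarding one column action, as in the paper --- makes $x^T F y$ decouple into a product of two linear functions, in which case the maximization over $X^* \times Y^*$ is two LPs and is polytime; so the construction would need some higher-rank structure in $F$, and the soundness argument (that no NE achieves positive value when $\phi$ is unsatisfiable) cannot be waved away by ``$F$ vanishes on convex combinations.'' Without the concrete gadgets and the realizability argument, this remains a plan rather than a proof, and it is not clear it can be made to work.
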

The \nphard{}ness proof follows from a reduction
  from the {\em balanced vertex cover} (BVC) problem proposed in~\cite{conitzer06}.
In BVC, we are given a graph $G = (V, E)$,
  and we want to know if $G$ has a vertex cover of size $\frac{|V|}{2}$.
Given an instance of BVC with $n$ nodes, we construct the following Bayesian zero-sum game
where the states of nature correspond to nodes of $G$ and the prior is $\lambda = \mone_n / n$.

  The row player's pure strategy is to pick a node $v_1 \in V$,
  and the column player's pure strategy is to either pick a vertex $v$, an edge $e$,
  or a special strategy $s$.
We design the {\em column player's} payoff as follows. The payoff for strategy
\[
v\ \text{is}\ \left\{ \begin{array}{ll}
   \frac{n}{n-2} & \mbox{if $v \notin \{\tht, v_1\}$,} \\
    0 & \mbox{otherwise.} \end{array} \right.
\qquad
e\ \text{is}\ \left\{ \begin{array}{ll}
    \frac{n}{n-2} & \mbox{if $e$ is not incident with $\tht$,} \\
    0 & \mbox{otherwise.} \end{array} \right.
\qquad
s\ \text{is}\ 1.
\]

The principal's objective tensor is set up so that
  he is interested only in getting the column player to play the strategy $s$,
  that is,
  $\F^\theta(v,s)=1$ for all $\tht,v\in V$; all other entries of $\F$ are 0.

\begin{lemma}
The Bayesian zero-sum game defined above has a signaling scheme of value at least
$\frac{1}{2}$ if and only if $G$ has a vertex cover of size $\frac{n}{2}$.
\end{lemma}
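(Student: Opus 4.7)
The plan is to prove a combinatorial characterization that does almost all the work: a posterior $\mu$ satisfies $\val(\mu)>0$ in the constructed game if and only if $\mu$ is the uniform distribution on a vertex cover of $G$ of size exactly $n/2$. Once this is in hand, both directions of the lemma follow immediately, because $\val(\mu)\in[0,1]$ and the prior $\ld=\mone_n/n$ admits a natural two-signal decomposition using any $n/2$-vertex cover.

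For the forward direction, suppose $G$ has a vertex cover $S$ with $|S|=n/2$. I would use the two-signal scheme that sends posteriors $\mu_S$ (uniform on $S$) and $\mu_{V\sm S}$ (uniform on $V\sm S$), each with probability $\tfrac12$; since $\tfrac12\mu_S+\tfrac12\mu_{V\sm S}=\mone_n/n=\ld$, this is a valid decomposition. For $\mu_S$, I would exhibit the Nash equilibrium $(x,y)=(\text{uniform on }V\sm S,\,\delta_s)$ of $\A^{\mu_S}$. The column's expected payoff from $s$ is $1$; from any vertex strategy $v$ it equals $\tfrac{n}{n-2}(1-\mu_S(v))(1-x_v)=1$ (both $v\in S$ and $v\notin S$ yield the same value); and from any edge $(u,w)$ it is at most $1$ because $S$ being a vertex cover guarantees $\mu_S(u)+\mu_S(w)\ge 2/n$; meanwhile the row player is trivially indifferent against $y=\delta_s$. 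Hence $\val(\mu_S)=y_s=1$. Since every entry of $\F$ is nonnegative, $\val(\mu_{V\sm S})\ge 0$, so the scheme attains value at least $\tfrac12\cdot 1+\tfrac12\cdot 0=\tfrac12$.

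For the reverse direction, suppose some signaling scheme has value at least $\tfrac12$. Then some posterior $\mu$ in its support has $\val(\mu)>0$, witnessed by a Nash equilibrium $(x,y)$ of $\A^\mu$ with $y_s>0$. Since $s$ must be a column best response, for every pure column strategy the expected payoff is at most $1$, which rearranges to
\[
(1-\mu_v)(1-x_v)\leq \tfrac{n-2}{n}\ \ \freach v\in V,\qquad \mu_u+\mu_w\geq \tfrac{2}{n}\ \ \freach (u,w)\in E.
\]
Summing the first family over $v$ and using the identity $\sum_v(1-\mu_v)(1-x_v)=n-2+\sum_v\mu_v x_v$ yields $\sum_v\mu_v x_v\leq 0$, so $\mu_v x_v=0$ for all $v$ and every summand $(1-\mu_v)(1-x_v)$ equals $\tfrac{n-2}{n}$ exactly. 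For $v$ outside both supports one would have $(1-\mu_v)(1-x_v)=1>\tfrac{n-2}{n}$, so the supports must partition $V$; tightness then forces $\mu_v=2/n$ on $\operatorname{supp}(\mu)$ and $x_v=2/n$ on $\operatorname{supp}(x)$, whence $|\operatorname{supp}(\mu)|=n/2$. The edge inequalities $\mu_u+\mu_w\geq 2/n$ translate into the statement that $\operatorname{supp}(\mu)$ is a vertex cover of $G$, giving the required cover of size $n/2$.

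The main obstacle is spotting the algebraic identity $\sum_v(1-\mu_v)(1-x_v)=n-2+\sum_v\mu_v x_v$: this is what allows a single summed inequality to simultaneously pin down the support structure (disjoint and covering $V$) and the uniform weighting ($2/n$ on each support) of every $(\mu,x)$ compatible with $y_s>0$. Once this identity is exploited, the reduction reduces to a clean bijection between ``useful'' posteriors and vertex covers of size exactly $n/2$, and the rest is verification.
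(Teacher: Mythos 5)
Your proof is correct and follows essentially the same route as the paper's: the same two-signal decomposition for the forward direction (posteriors uniform on $C$ and $V\setminus C$, with row player uniform on the complement and column player playing $s$), and the same algebraic identity $\sum_v(1-x_v)(1-\mu_v)=n-2+\sum_v x_v\mu_v$ driving the reverse direction to force disjoint supports of size $n/2$ and a vertex-cover structure. Your write-up is somewhat more explicit about tightness and why the supports must partition $V$, but the key ideas and their deployment are identical.
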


\begin{proof}
First, suppose $G$ has a vertex cover $C$ with $|C| = \frac{n}{2}$.
The principal simply signals if $\theta\in C$ or not. That is, $\ld$ is decomposed as
$(\mu^1+\mu^2)/2$, where $\mu^1_v=\frac{2}{n}$ for all $v\in C$ (and $0$ otherwise), and
$\mu^2_v=\frac{2}{n}$ for all $v\notin C$.
For posterior $\mu^1$, there is a Nash equilibrium
  where the row player chooses the mixed strategy $x$ that picks $v_1 \in V \setminus C$
  uniformly at random and the column player chooses strategy $s$; thus, the principal gets
  a value of 1.
This is because every node and edge is ``protected'' with probability at least
$\frac{2}{n}$;
the payoff of the column player for a pure strategy $v$ or $e$ is therefore at most
$\frac{n}{n-2}\bigl(1-\frac{2}{n}\bigr)\leq 1$.
Since $\mu^1$ is chosen with probability $\frac{1}{2}$, this signaling scheme achieves
value at least $\frac{1}{2}$.

On the other hand, we show that if $\mu$ is a posterior with $\val(\mu)>0$, then $G$ has a BVC
solution.
Let $(x,y)$ be a Nash equilibrium that attains value $\val(\mu)$, that is,
$\val(\mu)=x^T(\sum_\tht\mu_\tht\F^\tht)y$.
Since $\val(\mu)>0$, we must have $y_s>0$.
For this to happen,
  every node in $V$ must be protected with probability at least $\frac{2}{n}$.
That is, we must have $\frac{n}{n-2}(1-x_v)(1-\mu_v)\leq 1$ for all $v\in V$.
Then, $n-2+\sum_vx_v\mu_v=\sum_v(1-x_v)(1-\mu_v)\leq n-2$, which implies that we must have
$x_v\mu_v=0$ and $(1-x_v)(1-\mu_v)=1-\frac{2}{n}$ for all $v\in V$.
So it must be that for all $v \in V$, exactly one of $\mu_v$ and $x_v$ is equal to
$\frac{2}{n}$.
Let $C = \set{v: \mu_v > 0}$. It follows that $|C| = \frac{n}{2}$.
The payoff of a column player for an edge $e=(u,v)$ is $\frac{n}{n-2}(1-\mu_u-\mu_v)$,
which must be at most $1$, so we have $\mu_u+\mu_v\geq\frac{2}{n}$.
It follows that $C$ is a vertex cover of $G$.
\end{proof}

We remark that it is important to allow the principal's payoff to depend on specific
strategies,
  and also to enforce exact Nash equilibrium.
Intuitively, these two ingredients together make the objective function
  $\val(\mu)$ very ``sensitive'' in $\mu$.
Moreover, these two conditions are essentially necessary for an \nphard{}ness result,
  as Cheng et al.~\cite{CCDEHT15} gave a bi-criteria quasi-PTAS for this general signaling problem,
  i.e., a quasi-polytime algorithm that loses an additive $\epsilon$
  in the objective as well as in the Nash equilibrium constraints.

\subsubsection*{Acknowledgments.}
The authors are grateful to Shaddin Dughmi for
various suggestions, including on the planted clique reduction
and for suggesting the network routing games problem.
We also thank David Kempe, and Li Han for helpful discussions.

\bibliographystyle{alpha}
\bibliography{agt,signal,signaling}

\newcommand{\etalchar}[1]{$^{#1}$}
\newcommand{\SortNoop}[1]{}
\begin{thebibliography}{DGGP11}

\bibitem[AIM14]{AIM14}
Scott Aaronson, Russell Impagliazzo, and Dana Moshkovitz.
\newblock {AM} with multiple {M}erlins.
\newblock In {\em {IEEE} 29th Conference on Computational Complexity, {CCC}
  2014}, pages 44--55, 2014.

\bibitem[Ake70]{akerloflemons}
George~A Akerlof.
\newblock The market for ``lemons'': Quality uncertainty and the market
  mechanism.
\newblock {\em The quarterly journal of economics}, pages 488--500, 1970.

\bibitem[AV14]{amesvavasis}
Brendan~PW Ames and Stephen~A Vavasis.
\newblock Convex optimization for the planted k-disjoint-clique problem.
\newblock {\em Mathematical Programming}, 143(1-2):299--337, 2014.

\bibitem[BBM13]{bergemanndiscrimination}
Dirk Bergemann, Benjamin Brooks, and Stephen Morris.
\newblock The limits of price discrimination.
\newblock {\em Economic Theory Center Working Paper}, (052-2013), 2013.

\bibitem[BKW15]{BKW15}
Mark Braverman, Young~Kun Ko, and Omri Weinstein.
\newblock Approximating the best nash equilibrium in $n^{o(\log n)}$-time
  breaks the exponential time hypothesis.
\newblock In {\em ACM-SIAM Symposium on Discrete Algorithms (SODA)}, 2015.

\bibitem[Bla51]{blackwell51}
David Blackwell.
\newblock Comparison of experiments.
\newblock In {\em Second Berkeley Symposium on Mathematical Statistics and
  Probability}, volume~1, pages 93--102, 1951.

\bibitem[BMS12]{miltersensignals}
Peter Bro~Miltersen and Or~Sheffet.
\newblock Send mixed signals: earn more, work less.
\newblock In {\em Proceedings of the 13th ACM Conference on Electronic Commerce
  (EC)}, pages 234--247, 2012.

\bibitem[CCD{\etalchar{+}}15]{CCDEHT15}
Yu~Cheng, Ho~Yee Cheung, Shaddin Dughmi, Ehsan Emamjomeh-Zadeh, Li~Han, and
  Shang-Hua Teng.
\newblock Mixture selection, mechanism design, and signaling.
\newblock In {\em 56th Annual Symposium on Foundations of Computer Science
  (FOCS)}, 2015.

\bibitem[CDR06]{ColeDR06}
Richard Cole, Yevgeniy Dodis, and Tim Roughgarden.
\newblock How much can taxes help selfish routing?
\newblock {\em J. Comput. Syst. Sci.}, 72(3):444--467, 2006.

\bibitem[CS06]{conitzer06}
Vincent Conitzer and Tuomas Sandholm.
\newblock Computing the optimal strategy to commit to.
\newblock In {\em Proceedings of the 7th ACM conference on Electronic Commerce
  (EC)}, 2006.

\bibitem[DGGP11]{dekel}
Yael Dekel, Ori Gurel-Gurevich, and Yuval Peres.
\newblock Finding hidden cliques in linear time with high probability.
\newblock In {\em ANALCO}, pages 67--75. SIAM, 2011.

\bibitem[DIR14]{DIR14}
Shaddin Dughmi, Nicole Immorlica, and Aaron Roth.
\newblock Constrained signaling in auction design.
\newblock In {\em the 25th ACM Symposium on Discrete Algorithms (SODA)}, 2014.

\bibitem[Doe11]{doerr11}
Benjamin Doerr.
\newblock Analyzing randomized search heuristics: tools from probability
  theory.
\newblock In {\em Theory of randomized search heuristics}. World Scientific,
  2011.

\bibitem[DP09]{dubhashi2009}
Devdatt~P Dubhashi and Alessandro Panconesi.
\newblock {\em Concentration of measure for the analysis of randomized
  algorithms}.
\newblock Cambridge University Press, 2009.

\bibitem[Dug14]{Dughmi14}
Shaddin Dughmi.
\newblock On the hardness of signaling.
\newblock In {\em 55th {IEEE} Annual Symposium on Foundations of Computer
  Science, {FOCS}}, pages 354--363, 2014.

\bibitem[EFG{\etalchar{+}}12]{emeksignaling}
Yuval Emek, Michal Feldman, Iftah Gamzu, Renato Paes~Leme, and Moshe
  Tennenholtz.
\newblock Signaling schemes for revenue maximization.
\newblock In {\em Proceedings of the 13th ACM Conference on Electronic Commerce
  (EC)}, pages 514--531, 2012.

\bibitem[FGR{\etalchar{+}}13]{feldmanclique}
Vitaly Feldman, Elena Grigorescu, Lev Reyzin, Santosh Vempala, and Ying Xiao.
\newblock Statistical algorithms and a lower bound for detecting planted
  cliques.
\newblock In {\em Proceedings of the 44th ACM Symposium on Theory of Computing
  (STOC)}, pages 655--664. ACM, 2013.

\bibitem[FK03]{feigeprobable}
Uriel Feige and Robert Krauthgamer.
\newblock The probable value of the {L}ov{\'a}sz--{S}chrijver relaxations for
  maximum independent set.
\newblock {\em SIAM Journal on Computing}, 32(2), 2003.

\bibitem[FNS07]{FNS07}
Tomas Feder, Hamid Nazerzadeh, and Amin Saberi.
\newblock Approximating nash equilibria using small-support strategies.
\newblock In {\em Proceedings of the 8th ACM conference on Electronic Commerce
  (EC)}, pages 352--354. ACM, 2007.

\bibitem[FR10]{feigeron}
Uriel Feige and Dorit Ron.
\newblock Finding hidden cliques in linear time.
\newblock {\em DMTCS Proceedings}, (01):189--204, 2010.

\bibitem[GD13]{guo}
Mingyu Guo and Argyrios Deligkas.
\newblock Revenue maximization via hiding item attributes.
\newblock In {\em Proceedings of the 23rd International Joint Conference on
  Artificial Intelligence}, pages 157--163. AAAI Press, 2013.

\bibitem[GJ79]{GareyJ79}
Michael~R. Garey and David~S. Johnson.
\newblock {\em Computers and Intractability: A Guide to the Theory of
  {NP}-Completeness}.
\newblock W. H. Freeman \& Co., 1979.

\bibitem[GLS93]{ellipsoidbook}
Martin Gr{\"o}tschel, L{\'a}szl{\'o} Lov{\'a}sz, and Lex Schrijver.
\newblock Geometric algorithms and combinatorial optimization.
\newblock {\em Algorithms and Combinatorics}, 2:1--362, 1993.

\bibitem[Hir71]{hirshleifer71}
Jack Hirshleifer.
\newblock The private and social value of information and the reward to
  inventive activity.
\newblock {\em The American Economic Review}, 61(4):561--574, 1971.

\bibitem[HK11]{hazankrauthgamer}
Elad Hazan and Robert Krauthgamer.
\newblock How hard is it to approximate the best {N}ash equilibrium?
\newblock {\em SIAM Journal on Computing}, 40(1):79--91, 2011.

\bibitem[Jer92]{jerrum92}
Mark Jerrum.
\newblock Large cliques elude the metropolis process.
\newblock {\em Random Structures \& Algorithms}, 3(4):347--359, 1992.

\bibitem[JP00]{juels}
Ari Juels and Marcus Peinado.
\newblock Hiding cliques for cryptographic security.
\newblock {\em Designs, Codes and Cryptography}, 20(3):269--280, 2000.

\bibitem[Ku{\v{c}}95]{kucera95}
Lud{\v{e}}k Ku{\v{c}}era.
\newblock Expected complexity of graph partitioning problems.
\newblock {\em Discrete Applied Mathematics}, 57(2):193--212, 1995.

\bibitem[LRS10]{lehrermediation}
Ehud Lehrer, Dinah Rosenberg, and Eran Shmaya.
\newblock Signaling and mediation in games with common interests.
\newblock {\em Games and Economic Behavior}, 68(2), 2010.

\bibitem[MW82]{MW82}
Paul~R Milgrom and Robert~J Weber.
\newblock A theory of auctions and competitive bidding.
\newblock {\em Econometrica}, 50(5), 1982.

\bibitem[NY83]{NemirovskiY83}
Arkadi{\u\i}~Semenovich Nemirovski and David~Borisovich Yudin.
\newblock {\em Problem complexity and method efficiency in optimization.}
\newblock John Wiley and Sons, 1983.

\bibitem[RT02]{RT02}
T.~Roughgarden and Eva Tardos.
\newblock How bad is selfish routing?
\newblock {\em Journal of the ACM}, 49(2):236 -- 259, 2002.

\bibitem[Rub15]{rubinstein15}
Aviad Rubinstein.
\newblock Eth-hardness for signaling in symmetric zero-sum games.
\newblock {\em CoRR}, abs/1510.04991, 2015.

\bibitem[VFH15]{VassermanFH15}
Shoshana Vasserman, Michal Feldman, and Avinatan Hassidim.
\newblock Implementing the wisdom of waze.
\newblock In {\em Proceedings of the 24th International Joint Conference on
  Artificial Intelligence (IJCAI)}, pages 660--666, 2015.

\end{thebibliography}

\begin{appendix}

\section{Proof of Lemma \ref{lem:family_t}}
\label{append-family_t}
Recall that $\epsilon > 0$, $c_3\geq 10^3$, and
  $k = k(n)$ satisfies $k = \omega(\log n)$ and $k = o(\sqrt{n})$, and $r = \Theta(n/k)$.
Let $p=\frac{1}{2}$.
 
First, we proceed as in~\cite{Dughmi14}
to reduce the planted-clique
problem to the planted-clique-cover problem.
Given an instance $G$ of $\mathbf{PClique}(n,p,k)$,
  we can generate an instance $G'$ of $\mathbf{PCover}(n,p,k,r)$ by planting $r-1$
  additional random $k$-cliques into $G$ (as in step (2) of Definition~\ref{def:pcover}).
As noted in~\cite{Dughmi14}, because the cliques $S_1, \ldots, S_r$ are indistinguishable,
  recovering a constant fraction of the planted cliques from $G'$
  would recover each of $S_1, \ldots, S_r$ with constant probability.
In particular, it can recover the original planted clique with constant probability.

So our task is the following.
Given a graph $G \sim \G(n,p,k,r)$,
  fix one of the planted $k$-cliques $S \subseteq V$.
  We need to show that given a cluster $T \subseteq V$ satisfying
  $|S \cap T| \ge \epsilon |T|$ and $|S \cap T| \ge c_3 \log n$,
  we can recover $S$ with high probability. We assume that $r=\frac{5n}{k}$ in the sequel.
Our algorithm is similar to (and in fact, simpler than) the one used in~\cite{Dughmi14} to
prove a similar planted-clique recovery result (Lemma 3.5 therein). However, we need to recover
the planted clique
under a {\em much weaker} (both qualitatively and quantitatively) assumption.
In our case, the above requirements on $|S\cap T|$ allow
$|T|=\Tht(\log n)$ (which is crucial for the soundness proof in
Lemma~\ref{lem:main} to go through); in~\cite{Dughmi14}, the requirement is that
$|S\cap T|=\Omega(|S\cup T|)$ with $|S|=\w(\log^2 n)$, so that we must have $|T|=\omega(\log^2 n)$.
This difference in the magnitude of $|T|$ (and hence $|S\cap T|$) poses certain challenges
and necessitates certain key changes to the analysis in~\cite{Dughmi14}.

We use the following algorithm to recover $S$:
\begin{enumerate}
\item
Pick an arbitrary set $R$ of $c_3\log n$ vertices from $S\cap T$.
\item Let $S'$ be all the common neighbors of $R$.
\item Let $\hat S$ be the vertices in $S'$ with at least $k-1$ neighbors in $S'$.
\end{enumerate}

Since $S$ is unknown, we use the following process to simulate Step (1).
We first sample roughly $\frac{c_3 \log n}{\epsilon}$ vertices uniformly from $T$,
  and try Step (2) and (3) on every subset of $c_3 \log n$ of the sampled vertices.
The number of subsets we need to check is polynomial.
Moreover, because $|S \cap T| \ge \epsilon |T|$, with high probability, the sampled subset
of $T$ will contain $c_3\log n$ vertices from $S\cap T$, and will encounter this set of
$c_3\log n$ vertices from $S\cap T$ in our enumeration.

We partition the edges of $G$ into $E^-$ and $E^+$,
  where $E^-$ are the background edges added in Step (1) of Definition \ref{def:pcover},
  and $E^+$ are the extra clique-related edges added in Step (2) of Definition \ref{def:pcover}.
Let $E^i$ denote the edges of $S_i$.
It is easy to verify that all the nodes in $S$ will survive Step (2) and (3),
  so $S \subseteq \hat S$. We show that in fact, with high probability,
  no other vertices survive Step (2) and (3)
  through the following claims.

\begin{claim} \label{append-vSminus}
With high probability, we have $|E^-(v, S)| \le 0.6 |S|$ for all $v \notin S$.
\end{claim}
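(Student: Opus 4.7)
The plan is a direct Chernoff bound followed by a union bound over $v \notin S$. Fix any $v \notin S$. For each $u \in S$, the edge $\{u,v\}$ lies in $E^-$ iff it was included in Step~(1) of Definition~\ref{def:pcover}, which places every potential edge into the background independently with probability $p = \tfrac{1}{2}$. Crucially, Step~(1) is carried out independently of the clique-planting in Step~(2), so even conditional on the choice of the planted cliques $S_1,\ldots,S_r$ (and in particular on $S$), the indicator variables $\bigl\{\mathbf{1}[\{u,v\} \in E^-]\bigr\}_{u \in S}$ are mutually independent Bernoulli$(\tfrac{1}{2})$. Hence $|E^-(v,S)|$ is distributed as $\mathrm{Bin}(k,\tfrac{1}{2})$ with mean $k/2$.

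Next I would apply the standard multiplicative Chernoff bound at deviation $\delta = \tfrac{1}{5}$ to get
\[
\Pr\bigl[\,|E^-(v,S)| \ge 0.6\,k\,\bigr] \;\le\; \exp\bigl(-\Omega(k)\bigr).
\]
Since $k = \omega(\log n)$, the right-hand side is $n^{-\omega(1)}$. A union bound over the at most $n$ choices of $v \notin S$ then yields
\[
\Pr\bigl[\exists\, v \notin S \,:\, |E^-(v,S)| > 0.6\,|S|\bigr] \;\le\; n \cdot n^{-\omega(1)} \;=\; n^{-\omega(1)},
\]
which lies comfortably inside the ``with high probability'' regime $1 - 1/\poly(n)$ used throughout Section~\ref{noptas}.

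There is essentially no obstacle here: the only substantive step is the independence observation in the first paragraph, which reduces the claim to a single binomial tail estimate. If one wanted to avoid invoking Chernoff, an elementary bound $\Pr[\mathrm{Bin}(k,\tfrac{1}{2}) \ge 0.6\,k] \le \binom{k}{\lceil 0.6\,k\rceil} 2^{-k}$ combined with $k = \omega(\log n)$ would work just as well, but the Chernoff formulation is the cleanest and matches the style of the other concentration arguments (e.g., Claim~\ref{largesmallsup}) already used in this section.
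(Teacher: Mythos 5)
Your proposal is correct and is precisely the argument the paper has in mind; the paper's proof is a one-liner ("this follows from a straightforward application of the Chernoff bound and the union bound"), and your write-up just fills in the details — independence of $E^-$ from the clique planting, $|E^-(v,S)|\sim\mathrm{Bin}(k,\tfrac12)$, Chernoff at deviation $\tfrac15$, and the union bound over $v$ — exactly as intended.
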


\begin{proof}
Since $|S| = \omega(\log n)$,
this follows from a straightforward application of the Chernoff bound and the union bound.
\end{proof}

\begin{claim} \label{append-vRminus}
With high probability, there are at most $c_3\log n$ vertices $v\notin R$ with
$|E^-(v,R)|\ge 0.8|R|$.
\end{claim}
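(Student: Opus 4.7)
The plan is a routine Chernoff-plus-union-bound argument. Fix any candidate set $R \subseteq V$ of size $c_3 \log n$. Because the background edges of $G \sim \G(n,\frac{1}{2},k,r)$ are drawn independently with probability $1/2$ regardless of the planted-clique structure, for each $v \notin R$ the quantity $|E^-(v,R)|$ is distributed as $\mathrm{Bin}(|R|,1/2)$, and these random variables are mutually independent as $v$ ranges over $V \setminus R$ (they depend on disjoint edge sets).

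A standard multiplicative Chernoff bound, applied to the threshold $0.8|R| = (1+0.6)|R|/2$, then yields
\[
q \;:=\; \Pr\bigl[|E^-(v,R)| \geq 0.8|R|\bigr] \;\leq\; \exp\bigl(-\Omega(|R|)\bigr) \;=\; n^{-\Omega(c_3)}.
\]
Let $X_R$ count the vertices $v \notin R$ with $|E^-(v,R)| \geq 0.8|R|$. Since $X_R$ is a sum of at most $n$ independent Bernoullis each of mean at most $q$, the tail estimate $\Pr[X_R \geq t] \leq \binom{n}{t} q^t \leq (enq/t)^t$ at $t = c_3 \log n$ and $q \leq n^{-\Omega(c_3)}$ yields $\Pr[X_R \geq c_3\log n] \leq n^{-\Omega(c_3 \log n)}$, which is the desired high-probability bound for any fixed $R$.

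Since the set $R$ actually picked by the recovery algorithm depends on the graph (being chosen from $S \cap T$), I would lift this to a statement that holds for every $R$ simultaneously via a union bound over the $\binom{n}{c_3 \log n} \leq n^{c_3 \log n}$ candidate subsets. The total failure probability is at most $n^{c_3 \log n} \cdot n^{-\Omega(c_3 \log n)} = n^{-\Omega(c_3 \log n)}$, which still vanishes in $n$ provided $c_3$ is a sufficiently large absolute constant — and the hypothesis $c_3 \geq 10^3$ has been chosen precisely to give this slack. The one quantitative point of care is matching the Chernoff exponent on $X_R$ against the $\binom{n}{c_3\log n}$ factor from the union bound; this is also the reason why the weakened conclusion ``at most $c_3\log n$ bad vertices'' (rather than zero) is genuinely needed, since a fixed-$R$ Chernoff bound on $\Pr[X_R \geq 1]$ would not survive such a union bound.
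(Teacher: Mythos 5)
Your proof is correct, and it is a genuinely different route from the one in the paper. The paper's proof does not touch Chernoff bounds directly: it lets $A$ be the set of bad vertices, observes that $A$ and $R$ are disjoint with $\bidens_{G^-}(R,A) \ge 0.8 > \tfrac{1+0.3}{2}$, and then invokes Lemma~\ref{dug:bidens} with $\ve = 0.3$ (whose hypothesis $c \ge 24\cdot 2.1\cdot\max\{1,\tfrac{1+\ve}{\ve^2}\} \approx 728$ is met by $c_3 \ge 10^3$) to conclude $|A| < c_3\log n$ with high probability. That lemma already bakes in a union bound over \emph{all} pairs of $\ge c\log n$-sized sets, so the paper gets the ``holds for every $R$ simultaneously'' part for free. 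What you have done is re-derive the relevant special case of that lemma from scratch: a per-vertex Chernoff bound, then a concentration bound on the count via $\binom{n}{t}q^t$, then an outer union bound over the $\binom{n}{c_3\log n}$ candidate sets $R$. The two-level union bound you run is essentially the internal structure of the bi-density lemma made explicit. Each route has its merits: the paper's is shorter and reuses a primitive also needed for Lemma~\ref{lem:bidensity}, so it avoids redundant constant-chasing; yours is self-contained and makes transparent exactly why $c_3$ must be a large absolute constant (the Chernoff exponent $\Theta(c_3)\log n$ has to dominate the $c_3\log n\cdot\log n$ entropy term coming from the outer union bound --- a point you correctly flag). One small stylistic note: you write the final failure probability as $n^{-\Omega(c_3\log n)}$, but if you unwind the $\Omega$'s it is really $n^{-\Theta(c_3^2\log n)}$, which is why the bound comfortably survives the outer $n^{c_3\log n}$ union-bound factor; this slack is worth making explicit in a formal write-up to show the argument is not tight.
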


\begin{proof}
Let $A=\{v\notin R: |E^-(v,R)|\geq 0.8|R|\}$. Then, $\bidens_{G^-}(R,A)\geq 0.8$.
The constant $c_3=10^3$ and $\ve=0.3$ satisfy the conditions of Lemma~\ref{dug:bidens}.
So since $|R|\geq c_3\log n$, we have $|A|<c_3\log n$ with high probability.
\end{proof}

\begin{claim} \label{append-vSplus}
With high probability, we have $|E^+(v, S)| \le 12 \log n$ for all $v\notin S$.
\end{claim}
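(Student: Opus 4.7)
The plan is to fix a vertex $v\notin S=S_j$, decompose $|E^+(v,S)|$ according to which other planted clique $S_i$ contributed each edge, and carry out a moment-generating-function (MGF) calculation on the resulting \emph{independent} summands. Concretely, every edge $\{u,v\}\in E^+$ with $u\in S$ was introduced by some $S_i$ containing both endpoints, and since $v\notin S_j$ this forces $i\ne j$. Hence
\[
|E^+(v,S)| \;\le\; \sum_{i\ne j} |S_i\cap S|\cdot\mathbb{1}[v\in S_i] \;=:\; \sum_{i\ne j} W_i,
\]
where, conditional on $S_j=S$, the $W_i$ for $i\ne j$ are mutually independent, each depending only on the independent clique $S_i$. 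For any $t\ge 1$, a union bound over the $\binom{k}{t}$ choices of a $t$-subset $T\subseteq S$ gives
\[
\Pr[W_i\ge t]\;\le\;\binom{k}{t}\,\Pr[\{v\}\cup T\subseteq S_i] \;\le\;\binom{k}{t}(k/n)^{t+1}\;\le\;(k/n)(ek^2/(tn))^t.
\]

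Next I would combine this with the tail-to-MGF identity $\Ex[e^{W_i}]=1+(e-1)\sum_{t\ge 1}e^{t-1}\Pr[W_i\ge t]$. Because $k=o(\sqrt n)$ makes $e^2k^2/n\to 0$, the resulting series is dominated by its first term and one obtains $\Ex[e^{W_i}]\le 1+O(k^3/n^2)$. By independence across the $r-1$ summands, with $r=5n/k$, this gives
\[
\Ex\Bigl[e^{\sum_{i\ne j}W_i}\Bigr]\;\le\;\bigl(1+O(k^3/n^2)\bigr)^{r-1}\;\le\; e^{O(rk^3/n^2)}\;=\;e^{O(k^2/n)}\;=\;O(1).
\]
Markov's inequality then yields $\Pr[|E^+(v,S)|>12\log n]\le e^{-12\log n}\cdot O(1)=O(n^{-12})$, and a union bound over the at most $n$ choices of $v\notin S$ proves the claim with probability $1-1/\poly(n)$.

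The main obstacle is that the naive vertex-by-vertex decomposition $|E^+(v,S)|=\sum_{u\in S}\mathbb{1}[\{u,v\}\in E^+]$ has heavily correlated indicators, while the clique-wise grouping above restores independence at the cost of letting each summand $W_i$ in principle reach $k=\omega(\log n)$, which defeats off-the-shelf Hoeffding or Bernstein bounds. The saving feature is precisely that $k=o(\sqrt n)$ makes $k^2/n=o(1)$, so a typical intersection $|S_i\cap S|$ is zero and the MGF of each $W_i$ sits extremely close to $1$; this heavy concentration at $0$ is what drives the tail bound all the way down to $O(n^{-12})$, leaving plenty of slack for the union bound over $v$.
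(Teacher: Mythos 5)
Your proof is correct, but it takes a genuinely different route from the paper's. The paper first conditions on the random index set $I = \{i : v \in S_i\}$, showing $\Pr[|I| > 6\log n] \le 1/n^2$, and then, conditional on $I = J$ with $|J| \le 6\log n$, bounds $X \le \sum_{i \in J,\, u \in S} Y_{i,u}$ via a Chernoff-type inequality for stochastically dominated (not independent) indicators (Lemma~\ref{dep:chernoff}), exploiting $|J|k^2/n = O(\log n)$. You instead group the overlap clique-by-clique as $W_i = |S_i \cap S|\cdot\mathbf{1}[v \in S_i]$; this restores full independence across $i \ne j$ (each $W_i$ is a function of $S_i$ alone once $S_j=S$ and $v$ are fixed), but each $W_i$ is unbounded (up to $k = \omega(\log n)$), so off-the-shelf Chernoff or Hoeffding fails. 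Your workaround --- the tail estimate $\Pr[W_i \ge t] \le \binom{k}{t}(k/n)^{t+1} \le (k/n)(ek^2/(tn))^t$ together with the identity $\Ex[e^{W_i}] = 1 + (e-1)\sum_{t\ge 1} e^{t-1}\Pr[W_i \ge t]$ --- shows that $k = o(\sqrt n)$ forces $\Ex[e^{W_i}] = 1 + O(k^3/n^2)$, so the product over $r = \Theta(n/k)$ cliques is $e^{O(k^2/n)} = O(1)$, and Markov gives $O(n^{-12})$ per vertex. This is in fact a stronger per-vertex bound than the paper's $O(n^{-2})$ and dispenses with the auxiliary dominated-Chernoff lemma; both arguments, though, hinge on exactly the same scaling $k^2/n \to 0$, with the paper trading independence for boundedness and you trading boundedness for independence.
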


The following lemma will be useful in proving the above claim.

\begin{lemma}[see Ex. 1.13 in~\cite{dubhashi2009}, Lemma 1.19 in~\cite{doerr11}]
\label{dep:chernoff}
Let $X_1,\ldots,X_n$ be arbitrary binary random variables. Suppose for every $i$, and
every $x_1,\ldots,x_{i-1}\in\{0,1\}$, we have
$\Pr[X_i=1\,|\,X_1=x_1,X_2=x_2,\ldots,X_{i-1}=x_{i-1}]\leq p_i$.
Let $Y_1,\ldots,Y_n$ be independent binary random variables with $\Pr[Y_i=1]=p_i$ for all
$i\in[n]$. Then, for any $M$, we can upper bound $\Pr[\sum_{i=1}^n X_i>M]$ using the
upper-tail Chernoff bound for $\Pr[\sum_{i=1}^n Y_i>M]$.

In particular, for any $\ve\in(0,1)$ and $\mu\geq\sum_{i=1}^np_i$, we have
$\Pr[\sum_{i=1}^nX_i>(1+\ve)\mu]\leq e^{-\ve^2\mu/3}$.
\end{lemma}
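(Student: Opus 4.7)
The plan is to prove the lemma by the standard moment generating function (MGF) method, showing that the MGF of $\sum_i X_i$ is dominated by what it would be in the independent case, so that all of the standard upper-tail Chernoff bounds apply verbatim.

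First, I would fix $t \ge 0$ and use Markov's inequality in the form $\Pr[\sum_i X_i > M] \le e^{-tM}\Ex[e^{t\sum_i X_i}]$. The key step is to show by induction on $n$ that
\[
\Ex\bigl[e^{t\sum_{i=1}^n X_i}\bigr] \;\le\; \prod_{i=1}^n \bigl(1 + (e^t - 1)p_i\bigr) \;\le\; \exp\!\Bigl((e^t - 1)\sum_{i=1}^n p_i\Bigr).
\]
For the inductive step, I would condition on $X_1, \ldots, X_{n-1}$ and use the tower property:
\[
\Ex\bigl[e^{t\sum_{i=1}^n X_i}\bigr] \;=\; \Ex\Bigl[e^{t\sum_{i=1}^{n-1} X_i}\,\Ex\bigl[e^{t X_n}\mid X_1,\ldots,X_{n-1}\bigr]\Bigr].
\]
Since $X_n \in \{0,1\}$ and $t \ge 0$, the inner conditional expectation equals $1 + (e^t-1)\Pr[X_n = 1 \mid X_1,\ldots,X_{n-1}] \le 1 + (e^t-1)p_n$ by the hypothesis on the conditional probabilities; this bound is deterministic, so I can pull it out and invoke the inductive hypothesis on the remaining expectation over $X_1,\ldots,X_{n-1}$.

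Once this MGF bound is in hand, it matches identically what one obtains for independent $Y_i$ with $\Pr[Y_i = 1] = p_i$, so every derivation of a Chernoff upper-tail bound that proceeds by bounding $\Ex[e^{t\sum_i Y_i}]$ and then optimizing $t \ge 0$ carries over unchanged. In particular, the standard optimization with $\mu \ge \sum_i p_i$ and $\ve \in (0,1)$ yields $\Pr[\sum_i X_i > (1+\ve)\mu] \le e^{-\ve^2 \mu/3}$, giving the stated consequence.

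I do not expect any serious obstacle here; the argument is essentially bookkeeping with the tower property. The only subtle point worth being careful about is that the pointwise bound $\Ex[e^{tX_n} \mid X_1,\ldots,X_{n-1}] \le 1 + (e^t-1)p_n$ must hold for \emph{every} realization of $X_1,\ldots,X_{n-1}$ (which is exactly what the hypothesis guarantees), so that it can be pulled outside the outer expectation as a constant before applying induction. After that, the ``in particular'' clause is just quoting the standard Chernoff calculation, which I would cite rather than reprove.
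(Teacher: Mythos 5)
Your proof is correct: the inductive MGF bound via the tower property, using the pointwise conditional bound $\Ex[e^{tX_n}\mid X_1,\ldots,X_{n-1}]\le 1+(e^t-1)p_n$ for $t\ge 0$, is exactly the standard argument, and the final optimization over $t$ with $\mu\ge\sum_i p_i$ goes through unchanged. The paper itself gives no proof of this lemma, citing it from Dubhashi--Panconesi and Doerr; your argument is precisely the one found in those references.
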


\begin{proofof}{Claim~\ref{append-vSplus}}
Fix $v\notin S$, and let $X$ denote the random variable $|E^+(v,S)|$.
Let $S_1,\ldots,S_{r-1}$ be the planted cliques other than $S$.
Let $I$ be the random index-set of cliques that contain $v$; that is, $I\sse[r-1]$ is such
that $v\in S_i$ for all $i\in I$, and $v\notin S_i$ for all $i\notin I$.
Notice that the events $\{i\in I\}$ for $i\in [r-1]$ are independent Bernoulli trials with
probability $\frac{k}{n}$. So we have $\Pr[|I|>6\log n]\leq\frac{1}{n^2}$.

Fix an index set $J\sse[r-1]$ with $|J|\leq 6\log n$ and consider
$\Pr[X>12\log n\,|\,I=J]$. We use $\Pr'$ and $\Ex'$ to denote probabilities and
expectations in the space where we condition on the event $I=J$.
Conditioned on $I=J$, we have $X\leq\sum_{i\in J,u\in S}Y_{i,u}$, where $Y_{i,u}$ is the
random variable indicating if $u\in S_i$.
Fix an ordering of the $Y_{i,u}$ random variables. If we consider the random variable
$Y_{i,u}$, and any realization $\sg$ of the random variables appearing before $Y_{i,u}$,
we have
$\Pr'[Y_{i,u}=1\,|\,\text{realization $\sg$ of the variables before $Y_{i,u}$}]\leq\frac{k}{n}$.
Since $\frac{|J|k^2}{n}<6\log n$, we can now use Lemma~\ref{dep:chernoff} and infer that
$\Pr'[X>12\log n]\leq e^{-\frac{6\log n}{3}}$.

Finally, we have
\begin{align*}
\Pr[X>12&\log n] = \sum_{J\sse[r-1]}\Pr[I=J]\cdot\Pr[X>12\log n\,|\,I=J] \\
& \leq\sum_{\substack{J\sse[r-1]: \\ |J|>6\log n}}\Pr[I=J]
+\sum_{\substack{J\sse[r-1]: \\ |J|\leq 6\log n}}\Pr[I=J]\cdot\Pr[X>12\log n\,|\,I=J] \\
& \leq\Pr[|I|>6\log n]+\sum_{\substack{J\sse[r-1]: \\ |J|\leq 6\log n}}\Pr[I=J]\cdot\frac{1}{n^2}
\leq\frac{2}{n^2}. \qedhere
\end{align*}
\end{proofof}

By Claims~\ref{append-vRminus} and~\ref{append-vSplus}, and since $|R|\geq c_3\log n$,
with high probability, for all but at most $c_3\log n$ nodes $v\notin S$, we have
\[
|E(v,R)|=|E^-(v,R)|+|E^+(v,R)|\leq 0.8|R|+|E^+(v,S)|\leq 0.8|R|+12\log n\leq 0.82|R|.
\]
Hence, with high probability, at most $c_3\log n$ nodes outside of $S$ survive
Step (2), i.e., $|S'\sm S|\leq c_3\log n$.

\begin{claim} \label{append-vS}
With high probability, we have $|E(v, S)| \le 0.7 |S|$ for all $v \notin S$.
\end{claim}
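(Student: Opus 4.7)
The plan is to obtain this claim as an immediate corollary of the two preceding claims, together with the size assumption $|S|=k=\omega(\log n)$. Since every edge between $v$ and $S$ is either a background edge or was added as part of some planted clique, we have the deterministic decomposition
\[
|E(v,S)| \;=\; |E^-(v,S)| + |E^+(v,S)|.
\]
So the strategy is simply to apply Claim~\ref{append-vSminus} to bound the background part by $0.6|S|$ and Claim~\ref{append-vSplus} to bound the clique part by $12\log n$, both uniformly over $v\notin S$ with high probability.

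Concretely, I would first take the intersection of the two high-probability events guaranteed by Claims~\ref{append-vSminus} and~\ref{append-vSplus}; by the union bound this intersection still occurs with probability $1-1/\poly(n)$. On this event, for every $v\notin S$ we simultaneously have $|E^-(v,S)|\le 0.6|S|$ and $|E^+(v,S)|\le 12\log n$. Using the hypothesis $k=\omega(\log n)$, for all sufficiently large $n$ we have $12\log n \le 0.1\,k = 0.1\,|S|$, hence
\[
|E(v,S)| \;\le\; 0.6|S| + 12\log n \;\le\; 0.7|S|.
\]

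There is essentially no obstacle here: the only thing to be careful about is the asymptotic regime. The bound $12\log n \le 0.1|S|$ only holds once $n$ is large enough relative to the implicit constants in $k=\omega(\log n)$, but this is fine since all the statements in Lemma~\ref{lem:family_t} are asymptotic. No additional probabilistic argument is needed, and no new concentration inequality has to be invoked beyond what was used to establish the two preceding claims.
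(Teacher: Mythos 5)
Your proposal is correct and follows essentially the same route as the paper's proof: decompose $|E(v,S)|$ into background and clique edges, invoke Claims~\ref{append-vSminus} and~\ref{append-vSplus} (with a union bound over the two high-probability events), and use $|S|=k=\omega(\log n)$ to absorb the $12\log n$ term into $0.1|S|$ for large $n$.
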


\begin{proof}
Since $12\log n=o(|S|)$ (for sufficiently large $n$), by Claims~\ref{append-vSminus}
and~\ref{append-vSplus}, with probability, for all $v\notin S$, we have
$|E(v, S)| = |E^-(v, S)| + |E^+(v, S)| \le 0.6 |S| + o(|S|)\le 0.7|S|$.
\end{proof}

By Claim~\ref{append-vS} and because $|S' \setminus S|\leq c_3\log n$,
with high probability, every node
$v\in S'\setminus S$ has $|E(v, S')| \le |E(v, S)| + c_3\log n\leq 0.8|S|$.
Therefore, no vertex $v \in S' \setminus S$ survives Step (3) and $\hat S = S$.

\end{appendix}

\end{document}